\documentclass{article}
\usepackage{graphicx,amsmath,amssymb,tikz,amsthm} 
\usepackage{arydshln}
\usepackage{mathtools}
\usepackage{rotating}
\usepackage[linktocpage=true,
  colorlinks=true, 
  pdfborder={0 0 0},
  linkcolor=blue,
  citecolor=red,
  filecolor=yellow,
  urlcolor=blue,
  bookmarks,
  pdfauthor={},
]{hyperref}
\usetikzlibrary{quantikz}
\usepackage{authblk}

\oddsidemargin 0pt \evensidemargin 0pt \marginparwidth 1in
\marginparsep 0pt \leftmargin 1.25in \topmargin 14pt 
\headheight 14pt \headsep 20pt \topskip 0pt
\textheight 8in \textwidth 6in

\newtheorem{theorem}{Theorem}

\newtheorem{lemma}{Lemma}
\newtheorem{remark}{Remark}

\newtheorem{definition}{Definition}

\title{A probabilistic imaginary-time evolution quantum algorithm for advection-diffusion equation: Explicit gate-level implementation and comparisons to quantum\\ linear system algorithms}

\author[1,2]{Xinchi Huang\footnote{Email: huangxc@g.ecc.u-tokyo.ac.jp}}
\author[1,2]{Hirofumi Nishi}
\author[1,2]{Taichi Kosugi}
\author[1,2]{Yoshifumi Kawada}
\author[1,2,3,4]{Yu-ichiro Matsushita}
\affil[1]{Department of Physics, The University of Tokyo, Tokyo 113-0033, Japan}
\affil[2]{Quemix Inc., Taiyo Life Nihombashi Building, 2-11-2, Nihombashi Chuo-ku, Tokyo 103-0027, Japan}
\affil[3]{Quantum Materials and Applications Research Center, National Institutes for Quantum Science and Technology (QST), 2-12-1 Ookayama, Meguro-ku, Tokyo 152-8550, Japan}
\affil[4]{Laboratory for Materials and Structures, Institute of Innovative Research, Tokyo Institute of Technology, Yokohama 226-8503, Japan}

\date{\today}

\begin{document}

\maketitle

\begin{abstract}
Simulating differential equations on classical computers becomes an intractable problem if the grid size is extremely large. Quantum computers are believed to achieve a possibly exponential speedup in the matrix operation. In this paper, we propose a quantum algorithm for solving the advection-diffusion-reaction equation by employing a novel approximate probabilistic imaginary-time evolution (PITE) operator. 
First, the effectiveness of the proposed approximate PITE operator is justified by the theoretical evaluation of the error. 
Next, we construct the explicit quantum circuit to realize the imaginary-time evolution of the Hamiltonian coming from the advection-diffusion equation, whose gate complexity is logarithmic regarding the size of the discretized Hamiltonian matrix. 
Compared to the existing algorithms for the quantum linear system problem, our algorithm achieves an exponential speedup regarding the matrix size at the cost of a worse dependence on the error bound. 
Moreover, numerical simulations using gate-based quantum emulator for 1D/2D examples are also provided to verify our algorithm. 
Finally, we extend our algorithm to the coupled system of advection-diffusion equations to show the prospects for practical applications. 
\end{abstract}

\section{Introduction}
\label{sec:1}

Analysis of partial differential equations (PDEs) is one of the central research topics in mathematical physics since it helps to quantitatively describe the physical phenomena in electrodynamics, fluid dynamics, viscoelastic mechanics, quantum mechanics, etc. \cite{LiQin2012}
Accompanying the establishment of the mathematical theory, there is great progress in the derivations of the numerical solutions with plenty of methods, e.g. \cite{LeVeque1992, Pinder2018}. 
However, large-scale numerical simulation for PDEs is still an intractable problem because of the unpractical execution time and memory capacity limitations. 

One prospect to overcome this problem is to use quantum computers, which are expected to achieve possibly exponential speedup in many applications \cite{Babbush.2023, Childs.2003, Lee.2023, Yuan2020}. 
Harrow, Hassidim, and Lloyd \cite{HHL09} provided a pioneer work in solving linear systems of equations by quantum computers, which is now called the HHL algorithm. This work exploited the potential exponential speedup of the computational complexity of the quantum solver (compared to the classical ones) regarding the size of the system, assuming the oracles for encoding the matrix and the vector. 
Since then, there have been many follow-up works \cite{Ambainis2012, Clader.2013, Berry2014, BCOW17, Childs.2017, Kieferova.2019, Childs.2020, Lin.2020, Childs.2021, An.2022, Costa.2022, Fang.2023, Krovi2023, Berry.2024}, known as the quantum linear system algorithms (QLSAs), on this topic since linear PDEs can be discretized to linear systems using well-known numerical methods. Moreover, the QLSAs are recently developed to solve nonlinear differential equations using the method of Carleman linearization \cite{Krovi2023, Liu.2021}. 

Currently the QLSAs that give the best asymptotic behavior have the query complexity (queries to the oracles for the matrices and vectors) of $O(\kappa\mathrm{log} (1/\varepsilon))$, where $\kappa$ and $\varepsilon$ are the condition number of the matrix and the error bound, respectively, by applying a recent technique called quantum eigenstate filtering (QEF) \cite{Lin.2020, Costa.2022}. 
According to the authors in \cite{Costa.2022}, the above dependence fulfills the asymptotic lower bound regarding $\kappa$ and $\varepsilon$ that was established in the unpublished work by Harrow and Kothari. By a general assumption that there exist efficient gate implementations of the above matrix and vector oracles (up to the complexity $O(\mathrm{polylog} N)$), the best-order QLSAs have the gate complexity of $O(\kappa\mathrm{polylog}N\, \mathrm{log}(1/\varepsilon))$ with the matrix size $N $ (without counting the repetitions of the circuit for the readout). On the other hand, the best-order conventional classical algorithm for solving general linear systems is regarded as the sparse conjugate gradient (CG) method, which has the computational complexity of $O(\sqrt{\kappa}N\,\mathrm{log}(1/\varepsilon))$ \cite{Shewchuk1994, Saad2003}. 
\begin{table}[htb]
\centering
\caption{Comparison of the computational complexity of the quantum algorithm (without the readout) and classical algorithm regarding matrix size $N$ and error bound $\varepsilon$. The polynomial dependence on dimension $d$ is omitted. }
\label{sec1:tab1}
\scalebox{0.85}[0.85]{
\begin{tabular}{l|ccc}
\hline
& General linear systems & General second-order PDEs & Second-order PDEs \\
&                                 & (without spectral infor.)        & (with known spectral infor.) \\
\hline
QLSAs e.g. \cite{Lin.2020, Costa.2022, Krovi2023} & $O(\kappa\mathrm{polylog}N\, \mathrm{log}(1/\varepsilon))$ & $O(N^{2/d}\mathrm{polylog}N\, \mathrm{log}(1/\varepsilon))$ & $O(\mathrm{polylog}N\, \mathrm{log}(1/\varepsilon))$ \\
Sparse CG e.g. \cite{Shewchuk1994, Saad2003} & $O(\sqrt{\kappa}N\,\mathrm{log}(1/\varepsilon))$ & $O(N^{(d+1)/d}\,\mathrm{log}(1/\varepsilon))$ & $O(N\,\mathrm{log}(1/\varepsilon))$ \\
\hline
\end{tabular}
}
\end{table}
As for the $d$-dimensional second-order linear PDEs (e.g. the Poisson equations), it is well-known that the condition number scales as $O(dN^{2/d})$, which implies that the QLSAs (without counting the readout) would give a polynomial speedup regarding $N$, compared to the classical sparse CG method, see Table \ref{sec1:tab1}. Moreover, if one knows the exact spectral information of the differential operator/the discretized matrix (i.e., all the eigenvalues and eigenstates), then QLSA with an exponential speedup in $N$ is possible (see \cite{An.2023} proposed for the time evolution equations). The reason can be understood as that we can prepare the optimal preconditioning to reduce the condition number to $O(1)$ if we have the spectral information. However, it is not practical to know the exact spectral information. Without the knowledge of such information, there are also methods to reduce the condition number, see e.g. \cite{Clader.2013, Grote.1997, Chow2000} and the references therein, but the performance is case-by-case, and to find/encode a good preconditioner needs additional classical/quantum computations. 

The best-order QLSAs mentioned above have a key assumption that the matrix and vector oracles can be efficiently implemented within the gate complexity $O(\mathrm{polylog} N)$. This assumption holds for some well-structured matrices with limited numbers of distinct values \cite{Camps.2024, Sunderhauf.2024}, but is not clarified for the matrices coming from general PDEs with spatial varying coefficients, e.g., a function potential (actually the gate complexity is $\Omega(N)$ by the existing proposal in \cite{Sunderhauf.2024}). Moreover, in practical applications, the analytical relation between the error bound $\varepsilon$ and the matrix size $N$ is unknown (sometimes a rough order can be estimated, but one never knows the prefactor). One conventional way in engineering is to gradually increase $N$ by numerical simulations until the limit of the computational resources, and check whether the calculation is convergent. Thus, the scaleup in the matrix size is more important in some applications while the error bound is suitably fixed. In this paper, we mainly focus on the quantum acceleration in the matrix size and we propose a new quantum algorithm for the advection-diffusion-reaction equation  (see Eq.~\eqref{sec4:eq-gov}), which is an important PDE model for practical applications. Moreover, we avoid the matrix oracles and give the explicit gate implementation of the proposed quantum algorithm. Furthermore, the gate complexity of our algorithm achieves an exponential quantum speedup concerning $N$ even without the spectral information, at the cost of increasing the dependence on $1/\varepsilon$ to a power one (see Table \ref{sec3:tab1}). In detail, we adopt the method using the imaginary-time evolution (ITE) operator. 

For a given (time-independent) Hamiltonian $\mathcal{H}$, the solution to the first-order time evolution equation:
$$
\partial_t u(t) + \mathcal{H} u(t) = 0, \quad t\in (0,T], \quad u(0) = u_0,
$$
can be directly expressed using the ITE operator as follows:
$$
u(t) = \exp\left(-t\mathcal{H}\right) u_0, \quad t\in (0,T].
$$
Here, we consider the homogeneous problem (without source term) for simplicity. The quantum circuit for the inhomogeneous problem can be constructed similarly (i.e., we discretize the additional time integral), but we need to pay more attention to the total error estimate. We will address it in a future work. Under the above formulation, our target is to find a quantum circuit implementing a good approximation of the non-unitary ITE operator after the Hamiltonian $\mathcal{H}$ is discretized into a matrix $H_N$ whose size is $N\times N$. 
Although the quantum computers can treat only unitary operations, we know that any non-unitary operations can be embedded into a larger unitary matrix using ancillary qubits, which is known as block encoding (see e.g. Chapter 6 in \cite{Lin2022} and the references therein). The non-unitary operation is thus realized by measuring the ancillary qubits in some desired states, and we call the probability of deriving the desired states the success probability. Here, we refer to the block encoding of the ITE operator with only one ancillary qubit as the probabilistic imaginary-time evolution (PITE) operator according to the previous works \cite{Kosugi.2022, Nishi.2023, Brinet.2024}. The main contributions of this paper are shown in Table \ref{sec1:tab2} (the complexity of several methods are estimated for the advection-diffusion-reaction equations), and the paper is organized as follows:
\begin{sidewaystable}[htbp]
\centering
\caption{Comparison among PITE algorithms and classical/quantum algorithms for linear systems in the example of advection-diffusion-reaction equations. The complexity regarding two important parameters: matrix size $N$ and error bound $\varepsilon$, is addressed without counting the costs of preparation and readout of the statevectors. $d$ denotes the spatial dimension. }
\label{sec1:tab2}
\scalebox{0.73}[0.73]{\renewcommand\arraystretch{1.2}
\begin{tabular}{l|ccccc|p{7cm}}
\hline
  & Variations & Basic oracles & Query complexity$^{[a]}$ & Oracle complexity & Total complexity & Comments \\
  &                &                 & /Iterations & /Complexity per iter. &  &  \\
\hline
Sparse CG & $^\ast\ ${\bf BE} & N/A & $O(N^{1/d}(1/\varepsilon)\mathrm{log}(1/\varepsilon))$ & $O(N)$ & $\tilde O(N^{1+1/d}(1/\varepsilon))$ & Sparse CG is a conventional iterative method \\
e.g. \cite{Shewchuk1994} &  &  &  &  &  & for solving linear systems. Due to the $N$-dep- \\
& EM & N/A & $\tilde O(N^{1/d}\mathrm{polylog}(1/\varepsilon))$ & $\tilde O(N^{1+2/d}\mathrm{log}(1/\varepsilon))$ & $\tilde O(N^{1+3/d}\mathrm{polylog}(1/\varepsilon))$ & endence in condition number, preconditioning is needed for large matrices. \\
\hline
QLSAs, e.g. & QPE(=HHL) & RTEs for matrix & $\tilde O(N^{2/d}(1/\varepsilon))$ & $\tilde O(N^{2/d}(1/\varepsilon))$ & $\tilde O(N^{4/d}(1/\varepsilon)^2)$ & QSVT-based QLSA is one of the best-order \\
\cite{HHL09, Childs.2017, Krovi2023} & + BE & $A=I+\Delta\tau H_N$ &  &  &  & quantum algorithms for linear systems. The \\
 &  & $\Delta\tau=O(\varepsilon)$ &  &  &  & oracle complexity is not yet clarified if there \\
& QSVT & Block encoding & $O(N^{2/d}(1/\varepsilon)\mathrm{log}(1/\varepsilon))$ & $\Omega(\mathrm{polylog} N)^{[b]}$ & $\tilde O(N^{2/d}(1/\varepsilon))$ & exists a self-reaction potential. Because of the \\
 & + BE & for normalized &  &  &  & large condition number, QLSAs are not appli- \\
 &  & matrix $A$ &  &  &  & cable for large matrices, and hence precondi- \\
& $^\ast\ ${\bf QSVT} & Block encoding & $\tilde O(N^{2/d}\mathrm{polylog}(1/\varepsilon))^{[c]}$ & $\Omega(\mathrm{polylog} N)^{[b]}$ & $\tilde O(N^{2/d}\mathrm{polylog}(1/\varepsilon))$ & tioning is indispensable, see Sect.~\ref{sec:5}. \\
 & {\bf + EM} & for a normalized &  &  &  &  \\
 &  & enlarged matrix &  &  &  &  \\
\hline
PITE & APITE\cite{Kosugi.2022} & RTE for matrix & $O((1/\varepsilon)\mathrm{exp}(1/\varepsilon))$ & $O\left(\mathrm{polylog} N (1/\varepsilon)^{o(1)}\right)$ & $\tilde O\left(\mathrm{polylog} N \mathrm{exp}(1/\varepsilon)\right)$ & AAPITE algorithm is applicable for large ma- \\
algorithms & + ST1 & $H_N$ &  &  &  & trices, see Sect.~\ref{sec:4} and can be extended to cou- \\
(proposal) &  &  &  &  &  & pled systems, see Sect.~\ref{sec:6}. The oracle com- \\
 & $^\ast\ ${\bf AAPITE} & RTEs for matrices & $O((1/\varepsilon))$ & $O\left(\mathrm{polylog} N (1/\varepsilon)^{o(1)}\right)$ & $\tilde O\left(\mathrm{polylog} N (1/\varepsilon)^{1+o(1)}\right)$ & plexity relies on the FSM with periodic BCs. \\
 & {\bf + ST1} & $D_N^{(1)}, \sqrt{D_N^{(2)}}, \sqrt{V_N}$ &  &  &  & Thus, it is not directly applicable for the cases \\
 &  & see Sect.~\ref{sec:3} &  &  &  & with complicated boundaries. \\
\hline
\end{tabular}
}
\begin{flushleft}
\footnotesize BE: backward Euler,\ EM: enlarged matrix, see the matrix $L$ in \cite{Krovi2023},\ ST1: first-order Suzuki-Trotter formula,\ BC: boundary condition. In this paper, $O, \tilde O$ and $\Omega$ denote the Landau notations \cite{Knuth1976}. 

$\ast$ denotes the best-order algorithm in each block of algorithms. 

$[a]$ The $(1/\varepsilon)$ dependence in the query complexity comes from the time discretization error for AAPITE or BE. It can be reduced to $(1/\varepsilon)^{1/p}$ for some integer $p$ if $p$th-order time discretization methods and $p$-th order Suzuki-Trotter formula are used.  

$[b]$ According to Camps et al. \cite{Camps.2024}, the oracle complexity is in general $O(N)$ if there is a self-reaction potential function. On the other hand, using arithmetic operations with additional ancillary qubits, it is possible to derive the oracle complexity $O(\mathrm{polylog} N\, \mathrm{polylog}(1/\varepsilon))$ for an approximate implementation. 

$[c]$ Recent techniques in \cite{Lin.2020, Costa.2022} suggested a two-step algorithm to improve the $\varepsilon$ dependence to $\log(1/\varepsilon)$ in the query complexity by first taking a relatively large $\varepsilon_0>0$ and then applying the QEF to achieve the desirable error bound $\varepsilon$. 
\end{flushleft}
\end{sidewaystable}
\begin{itemize}
\item In Sect.~\ref{sec:2}, we propose a new approximate PITE operator called the alternative approximate PITE (AAPITE), which overcomes the crucial problem (vanishing success probability) of the previous ones \cite{Kosugi.2022, Nishi.2023} as a quantum differential equation solver. To justify the effectiveness of the new approximate PITE operator, the theoretical estimations on the $\ell^2$-error between the AAPITE and the ITE operators as well as the success probability are established, see Eqs.~\eqref{sec2:eq-err2}, \eqref{sec2:eq-sucprob}. 

\item In Sect.~\ref{sec:3}, based on the real-space grid method, a special Fourier spectral method (FSM), for the first-quantized Hamiltonian simulation, we apply the AAPITE to address the ITE operator for the advection-diffusion-reaction equation. 
By an explicit gate implementation of the real-time evolution (RTE) of the discretized Hamiltonian, we conclude that the gate complexity is logarithmic in the size of the discretized matrix (see Table \ref{sec3:tab1}).
Compared to the QLSAs in e.g. \cite{HHL09, Childs.2017, Krovi2023}, the number of ancillary qubits is much smaller and is independent of the desired error bound $\varepsilon$, see Table \ref{sec3:tab1}.
Owing to the minimal use of ancillary qubits, we believe our proposed algorithm fits the early fault-tolerant quantum computers (eFTQC), for which limited qubits (one to several hundred logical qubits) are allowed.

\item In Sect.~\ref{sec:4}, we demonstrate numerical simulations of our quantum algorithm for a 1D example and a 2D example with absorption potential in the center of the domain using a quantum emulator Qiskit \cite{Qiskit23}. Moreover, we check the numerical dependence on the parameters of the AAPITE as well as the coefficients of the differential equation in the $\ell^2$-errors between the quantum solutions and the analytical solution (or the exact solution by matrix operations).

\item In Sect.~\ref{sec:5}, we compare our quantum algorithm with a previous work using either a specific HHL algorithm or a variational quantum algorithm (VQA) based on the FDMs \cite{Ingelmann.2024}. The result indicates the theoretical advantage of the FSM, over the FDMs.
In addition, we find that our algorithm based on the AAPITE outperforms the one based on the original approximate PITE in \cite{Kosugi.2022}.
Moreover, compared to the QLSAs, our algorithm is more practical and has an exponential advantage in the scaling of the matrix size at the expense of a worse dependence on the error bound. 

\item In Sect.~\ref{sec:6}, we extend our quantum algorithm for a single advection-diffusion equation to a coupled system of equations, and we provide the explicit quantum circuit in the specific case of two equations. Moreover, we mention that the proposed quantum algorithm can be applied to nonlinear systems if time-step-wise measurement/tomography of the statevectors is allowed. Although the simulation of the nonlinear systems in the above way is currently expensive due to the repeated measurements and the lack of efficient readout, we provide the numerical results of Turing Pattern formulation and Burgers' equation for prospects. 
\end{itemize}
\section{Probabilistic imaginary-time evolution (PITE)}
\label{sec:2}

Let $n\in \mathbb{N}$ and suppose we are given a Hamiltonian $\mathcal{H}\in \mathbb{C}^{2^n\times 2^n}$. 
Our target is the implementation of the ITE operator $m_0 \mathrm{exp}\left(-\Delta\tau \mathcal{H}\right)$ for a given time step $\Delta\tau>0$ and $m_0\in (0,1]$. 
There is a recent way to approximate the ITE operator using the quantum singular value transformation (QSVT) developed in \cite{Gilyen.2019}. However, this technique requires access to the block encoding of $\mathcal{H}$ whose implementation is efficient if $\mathcal{H}$ is well-structured \cite{Camps.2024, Sunderhauf.2024}, but requires $O(N^2)$ operations in the worst scenario. Here, we apply a conventional way is to introduce one ancillary qubit and implement the following unitary operation on $n+1$ qubits (provided that $\mathcal{H}$ is a positive semi-definite Hermite operator): 
\begin{align*}
\mathcal{M}(\Theta) := 
\begin{pmatrix}
\cos\Theta & \mathrm{i}\sin\Theta \\
\mathrm{i}\sin\Theta & \cos\Theta
\end{pmatrix},
\end{align*}
where $\Theta := \arccos(m_0 \mathrm{exp}\left(-\Delta\tau \mathcal{H}\right))$. 
Then, for an arbitrary input quantum state $\ket{\psi}$, we have
\begin{align*}
\mathcal{M}(\Theta)(\ket{0}\otimes \ket{\psi}) = \ket{0}\otimes \cos\Theta\ket{\psi} + \ket{1} \otimes \mathrm{i}\sin\Theta\ket{\psi}.
\end{align*}
By post-selecting the ancillary qubit to be $\ket{0}$, we obtain the desired non-unitary operation:
$$
\ket{\psi} \longrightarrow \frac{\mathrm{exp}\left(-\Delta\tau\mathcal{H}\right)\ket{\psi}}{\|\mathrm{exp}\left(-\Delta\tau\mathcal{H}\right)\ket{\psi}\|}.
$$
Here, we need the assumption that $\mathcal{H}$ is positive semi-definite and Hermitian, so that $\mathcal{H}$ is diagonalizable and all the eigenvalues are real-valued and non-negative. 
This guarantees that all the eigenvalues of $\mathrm{exp}\left(-\Delta\tau \mathcal{H}\right)$ lie in $(0,1]$, and hence $\mathcal{M}(\Theta)$ is well-defined. 
The quantum circuit for the PITE based on the cosine function is given in Fig.~\ref{sec2:fig1}. Measuring the ancillary qubit, the PITE operator is implemented with the success probability $m_0^2\|\mathrm{exp}\left(-\Delta\tau \mathcal{H}\right)\ket{\psi}\|^2$.
\begin{figure}
\centering
\resizebox{12cm}{!}{
\includegraphics[keepaspectratio]{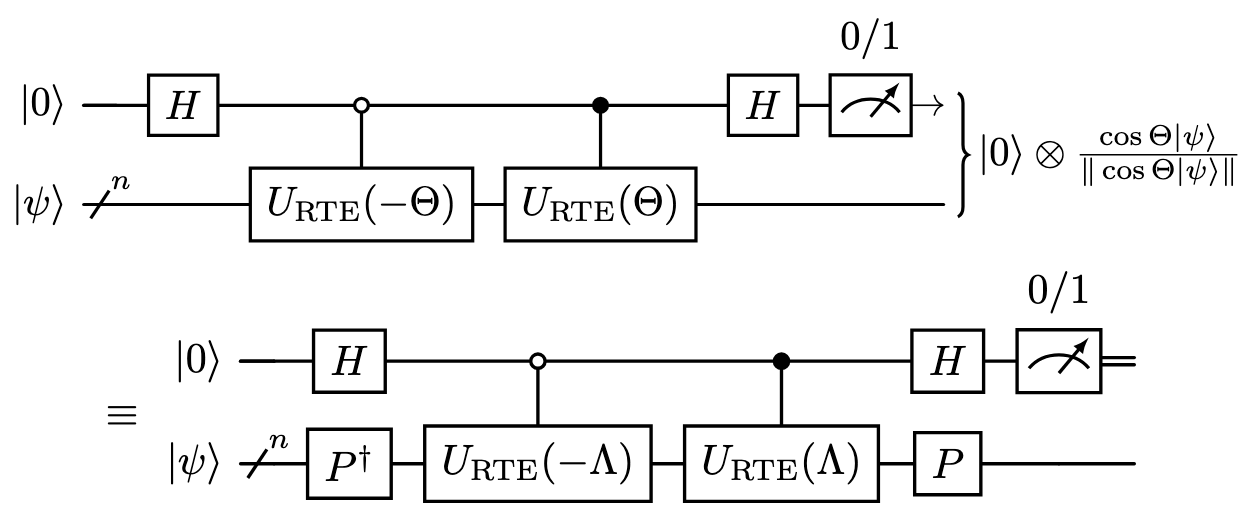}
}
\caption{Quantum circuits for PITE based on the cosine function. Here, $U_{\text{RTE}}(\Theta)$ denotes the RTE operator for the operator $\Theta=\arccos\left(m_0 \mathrm{exp}\left(-\Delta\tau \mathcal{H}\right)\right)$ defined by $U_{\text{RTE}}(\Theta) := \mathrm{exp}\left(-\mathrm{i}\Theta\right)$. 
The upper circuit is equivalent to the lower one provided that $\Theta$ is diagonalized by $P^\dag \Theta P = \Lambda$ for some unitary matrix $P$. }
\label{sec2:fig1}
\end{figure}
Although most previous works \cite{Kosugi.2022, Nishi.2023, Brinet.2024} consider the unitary operation: 
\begin{align*}
\widetilde{\mathcal{M}}(\Theta) := 
\begin{pmatrix}
\cos\Theta & -\sin\Theta \\
\sin\Theta & \cos\Theta
\end{pmatrix},
\end{align*}
instead of $\mathcal{M}(\Theta)$, these two implementations of $\mathrm{exp}\left(-\Delta\tau \mathcal{H}\right)$ based on $\mathcal{M}(\Theta)$ and $\widetilde{\mathcal{M}}(\Theta)$ are equivalent under the following unitary operations (i.e., two single-qubit gates): 
\begin{align*}
\mathcal{M}(\Theta) =
(Q\otimes I^{\otimes n}) \widetilde{\mathcal{M}}(\Theta) (Q^\dag\otimes I^{\otimes n}),
\end{align*}
where 
\begin{align*}
Q := \frac{1}{\sqrt{2}}
\begin{pmatrix}
1+\mathrm{i} & 0 \\
0 & -1+\mathrm{i}
\end{pmatrix}.
\end{align*}
Moreover, thanks to our assumption on $\mathcal{H}$, there exist a unitary operator $P$ and a diagonal matrix $\Lambda$ such that 
$$
\mathcal{M}(\Theta) = (I \otimes P)\mathcal{M}(\Lambda)(I \otimes P^\dag), \quad
\widetilde{\mathcal{M}}(\Theta) = (I \otimes P)\widetilde{\mathcal{M}}(\Lambda)(I \otimes P^\dag). 
$$
Then, the exact implementation of the PITE operator is demonstrated by the lower circuit in Fig.~\ref{sec2:fig1}, in which the RTE operators for the diagonal unitary matrices $U_{\text{RTE}}(\pm \Lambda)$ can be realized by using $2^n$ phase rotation gates and $2^n-2$ CNOT gates with depth $2^n$ \cite{Zhang.2024}. 
On the other hand, Mangin-Brinet et al. \cite{Brinet.2024} used the quantum circuit proposed in \cite{Mottonen.2004}, and gave the detailed circuit for $\widetilde{\mathcal{M}}(\Lambda)$ by $2^n$ $y$-axis rotation gates and $2^n$ CNOT gates with depth $2^{n+1}$. 
In general, we need to calculate the unitary $P$ using the eigenvalue decomposition and implement it by the quantum gates, which is not an easy task. There are some kinds of $\mathcal{H}$ such that $P$ is known and can be (approximately) implemented efficiently within depth $O(\mathrm{poly}\, n)$. For example, $P=I$ if $\mathcal{H}$ itself is diagonal, and $P$ is the shifted/centered quantum Fourier transform (QFT) if $\mathcal{H}$ is generated from the kinetic energy in the first-quantized formalism \cite{Kosugi.2022, Ollitrault.2020}, equivalently, $\mathcal{H}$ is the discretized matrix of the Laplacian with periodic boundary condition. 
To achieve the quantum advantage regarding $N$, we consider an approximate PITE operator whose implementation is efficient for the Hamiltonian that appeared in the differential equations. 

\subsection{Alternative approximate PITE}
\label{subsec:2-1}

Kosugi et al. \cite{Kosugi.2022} proposed an approximate PITE circuit with a restriction that the normalization factor $m_0$ does not equal one. 
By a shift of energy, this approximate PITE circuit and its derivatives are efficient in the problem of ground state preparation \cite{Kosugi.2022, Nishi.2023, Kosugi.2023}. 
However, such approximate PITE circuits are inefficient for realizing the ITE because they have an exponentially vanishing success probability.  
The details on the approximate PITE \cite{Kosugi.2022} are reconsidered in \ref{appA}. 
Here, we propose a new approximate PITE with $m_0=1$, which we call the alternative approximate PITE (AAPITE), to avoid such a problem. 

Let $m_0 = 1$. We rewrite $\Theta = \arccos\left(\mathrm{exp}\left(-(\sqrt{\Delta\tau \mathcal{H}})^2\right)\right)$. 
The notation $\sqrt{\Delta\tau \mathcal{H}}$ makes sense since we assume that $\mathcal{H}$ is a positive semi-definite Hermite operator.
Denote $g(x) := \arccos(\mathrm{exp}\left(-x^2\right))$, $x\ge 0$. 
By the Taylor expansion around $x_0\ge 0$ up to the first order, we obtain
$$
g(x) = g(x_0) + g^\prime(x_0)(x-x_0) + O(|x-x_0|^2).
$$
We calculate directly or use L'Hospital's rule, and we obtain $\lim_{x\to 0} g(x) = 0$ and $\lim_{x\to 0+} g^\prime(x) =\sqrt{2}$.
Taking $x_0=0$, we have the approximation:
\begin{align*}
\Theta = g(\sqrt{\Delta\tau \mathcal{H}}) = \sqrt{2}\sqrt{\Delta\tau \mathcal{H}} + O(\Delta\tau) \approx \sqrt{2\Delta\tau\mathcal{H}}. 
\end{align*}
In other words, 
$$
\mathrm{exp}\left(-\Delta\tau \mathcal{H}\right) = \cos(g(\sqrt{\Delta\tau \mathcal{H}})) \approx \cos(\sqrt{2\Delta\tau \mathcal{H}}). 
$$
Then, we obtain the quantum circuit for the AAPITE in Fig.~\ref{sec2:fig2}. 
\begin{figure}
\centering
\resizebox{12cm}{!}{
\includegraphics[keepaspectratio]{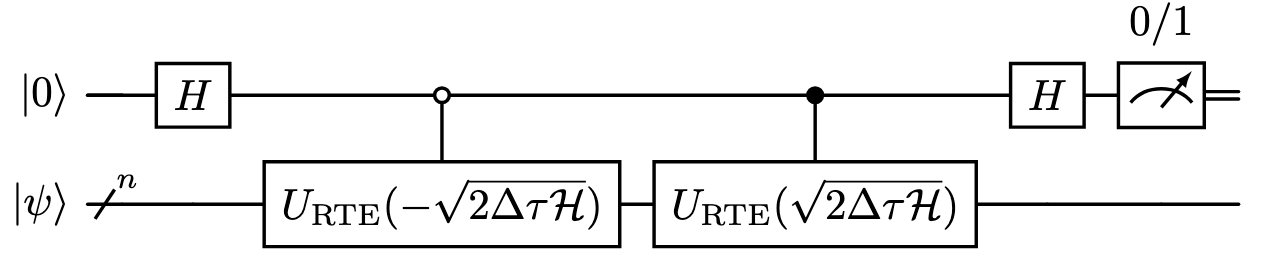}
}
\caption{A quantum circuit for the alternative approximate PITE.}
\label{sec2:fig2}
\end{figure}

We mention the complexity for a single PITE step. The main costs lie in the part of the controlled RTE operations related to $\sqrt{2\Delta\tau \mathcal{H}}$ for a given Hermite operator $\mathcal{H}$. 
For such controlled RTE operations, we should expect efficient implementations in gate complexity $O(\mathrm{polylog} N)$, which yields an exponential improvement compared to the exact implementation. In the next section, we provide an efficient implementation of the RTE of the half-order operator specified for a Hamiltonian dynamics in real space. Here, we stick to the justification of the AAPITE and establish the theoretical error estimate and the success probability. 

\subsection{Error estimate and success probability}
\label{subsec:2-2}

For $K\in \mathbb{N}$, we consider the operation of applying the AAPITE for $K$ times. 
We define the $\ell^2$-error after normalization by
\begin{align*}
\mathrm{Err} := \left\|\frac{\cos^K(\sqrt{2\Delta\tau \mathcal{H}})\ket{\psi}}{\|\cos^K(\sqrt{2\Delta\tau \mathcal{H}})\ket{\psi}\|}-\frac{\mathrm{exp}\left(-K\Delta\tau \mathcal{H}\right)\ket{\psi}}{\|\mathrm{exp}\left(-K\Delta\tau \mathcal{H}\right)\ket{\psi}\|}\right\|,
\end{align*}
where $\ket{\psi}$ is an initial state. 
Introduce also the $\ell^2$-error before the normalization 
$$
\widetilde{\mathrm{Err}} := \left\|\cos^K(\sqrt{2\Delta\tau \mathcal{H}})\ket{\psi}-\mathrm{exp}\left(-K\Delta\tau \mathcal{H}\right)\ket{\psi}\right\|.
$$
Then, we have the following error estimate. 
\begin{theorem}
\label{thm:aap-err-esti}
Let $N, K\in\mathbb{N}$, $\Delta\tau\in \mathbb{R}_{>0}$, and $\mathcal{H}\in \mathbb{C}^{N\times N}$ be a positive semi-definite Hermite matrix. For an arbitrarily given quantum state $\ket{\psi}$, there exists a constant $C>0$, depending only on $\mathcal{H}$ and $\ket{\psi}$, such that the following error estimate
$$
\widetilde{\mathrm{Err}} \le CK(\Delta\tau)^2,
$$
holds. Moreover, by the triangle inequality, we have
\begin{align}
\label{sec2:eq-err2}
\mathrm{Err} \le 2C K(\Delta\tau)^2/\|\mathrm{exp}\left(-K\Delta\tau \mathcal{H}\right)\ket{\psi}\|. 
\end{align}
\end{theorem}
Here, $C$ depends on the eigensystem of $\mathcal{H}$, hence there is an implicit dependence on $N$. In the application, we take $\mathcal{H}$ the discretized matrix of the differential operator $-\nabla^2$ or a bounded operator $V(x)I$. The proof is provided in \ref{prf:thm1}, and we show that $C$ depends only on the variation of the function $V(x)$ in the bounded cases, while $C$ has a uniform bound regarding $N$ in the Laplacian case provided that the initial condition is suffiently smooth.  

For a given error bound $\varepsilon>0$, we can choose $\Delta\tau=O(\varepsilon)$ so that the error bound is fulfilled, that is, 
\begin{equation}
\label{sec2:eq-errbound}
\widetilde{\mathrm{Err}}\le \varepsilon \|\mathrm{exp}\left(-K\Delta\tau \mathcal{H}\right)\ket{\psi}\|/2, \ \mbox{and} \ \mathrm{Err} \le 2\widetilde{\mathrm{Err}}/\|\mathrm{exp}\left(-K\Delta\tau \mathcal{H}\right)\ket{\psi}\| \le \varepsilon. 
\end{equation}
Equivalently, we have the choice $K=\Omega(\varepsilon^{-1})$.

Next, we discuss the success probability using the error estimate. 
We note that the quantum state after applying the AAPITE operator for $j$ times is 
$$
\cos^j\left(\sqrt{2\Delta\tau \mathcal{H}}\right)\ket{\psi}/\left\|\cos^j\left(\sqrt{2\Delta\tau \mathcal{H}}\right)\ket{\psi}\right\|, \quad j=1,\ldots,K,
$$
and the success probability is 
$$
\mathbb{P}_1(\ket{0}) = \prod_{j=1}^K \left\|\cos\left(\sqrt{2\Delta\tau \mathcal{H}}\right)\frac{\cos^{j-1}\left(\sqrt{2\Delta\tau \mathcal{H}}\right)\ket{\psi}}{\left\|\cos^{j-1}\left(\sqrt{2\Delta\tau \mathcal{H}}\right)\ket{\psi}\right\|}\right\|^2 
= \left\|\cos^K\left(\sqrt{2\Delta\tau \mathcal{H}}\right)\ket{\psi}\right\|^2.
$$
By the definition of $\widetilde{\mathrm{Err}}$ and the triangle inequalities, we obtain
\begin{align*}
\|\mathrm{exp}\left(-K\Delta\tau \mathcal{H}\right)\ket{\psi}\| - \widetilde{\mathrm{Err}} 
\le \|\cos^K(\sqrt{2\Delta\tau \mathcal{H}})\ket{\psi}\| 
\le \|\mathrm{exp}\left(-K\Delta\tau \mathcal{H}\right)\ket{\psi}\| + \widetilde{\mathrm{Err}}.
\end{align*}
Together with Eq.~\eqref{sec2:eq-errbound}, this yields the estimate of the success probability as follows:
\begin{equation}
\label{sec2:eq-sucprob}
\left(\frac{2-\varepsilon}{2}\right)^2 \|\mathrm{exp}\left(-K\Delta\tau \mathcal{H}\right) \ket{\psi}\|^2 \le \mathbb{P}_1(\ket{0}) \le \left(\frac{2+\varepsilon}{2}\right)^2 \|\mathrm{exp}\left(-K\Delta\tau \mathcal{H}\right) \ket{\psi}\|^2. 
\end{equation}
By noting that $(2-\varepsilon)/2$ and $(2+\varepsilon)/2$ are both of order $\Omega(1)$ with respect to $\varepsilon$, the success probability is proportional to $\|\mathrm{exp}\left(-K\Delta\tau \mathcal{H}\right) \ket{\psi}\|^2$, which depends only on $\ket{\psi}$ and $K\Delta\tau$. Besides, the proportional constant tends to $1$ as the error bound $\varepsilon$ tends to zero. 

\section{Application to advection-diffusion equation}
\label{sec:3}

In this section, we investigate the quantum circuit for an ITE operator $\mathrm{exp}\left(-T\mathcal{H}\right)$ for which $\mathcal{H}$ is a given Hamiltonian matrix corresponding to the following operator appeared in the advection-diffusion-reaction equation: 
\begin{align}
\label{sec3:eq-Ham}
\hat H = - a\nabla^2 + \mathbf{v}\cdot \nabla + V, 
\end{align}
where $a>0$ is a positive constant, $\mathbf{v}=(v_1, v_2, \ldots, v_d)^\mathrm{T}\in \mathbb{R}^d$ is a constant vector, and $V=V(\mathbf{x})\in \mathbb{R}$ is a real-valued potential function defined on the cubic domain $[0,L]^d$. Here, $d\in \mathbb{N}$ denotes the spatial dimension and $L>0$ is the length of the domain in each dimension. 
Let $n\in\mathbb{N}$, $N_0 = 2^n$, and $N = N_0^d$. In terms of the real-space grid method (a FSM-based method), we introduce the grid points, which are described by a $d$-dimensional vector $\mathbf{l} = (l_1,l_2,\ldots,l_d)^\mathrm{T}\in [\tilde N_0]^d := \{0, 1, \ldots, N_0-1\}^d$: 
\begin{align}
\label{sec3:eq-x}
\mathbf{p}_{\mathbf{l}} = \left(l_1 \frac{L}{N_0}, l_2 \frac{L}{N_0}, \ldots, l_d \frac{L}{N_0}\right)^\mathrm{T}\in \mathbb{R}^d.
\end{align}
Then, the above Hamiltonian can be discretized into an $N\times N$ matrix: 
\begin{align}
\label{sec3:eq-HN}
H_N := F_N \left(D_N^{(2)}-\mathrm{i}D_N^{(1)}\right) F_N^\dag + V_N,
\end{align}
where
\begin{align*}
&F_N = \frac{1}{N_0^{d/2}} \sum_{\mathbf{k}\in [\tilde N_0]^d}\sum_{\mathbf{l}\in [\tilde N_0]^d} \mathrm{exp}\left(\mathrm{i}\frac{2\pi}{N_0}\left(\mathbf{k}-\frac{N_0}{2}\right)\cdot\mathbf{l}\right) \ket{\mathbf{l}} \bra{\mathbf{k}},\\
&D_N^{(1j)} = \sum_{\mathbf{k}\in [\tilde N_0]^d} v_j (2\pi/L) \left(k_j-\frac{N_0}{2}\right) \ket{\mathbf{k}} \bra{\mathbf{k}}, \quad j=1,\ldots,d, \quad D_N^{(1)} = \sum_{j=1}^d D_N^{(1j)},\\ 
&D_N^{(2)} = \sum_{\mathbf{k}\in [\tilde N_0]^d} a (2\pi/L)^2\left|\mathbf{k}-\frac{N_0}{2}\right|^2 \ket{\mathbf{k}} \bra{\mathbf{k}}, \\
&V_N = \sum_{\mathbf{l}\in [\tilde N_0]^d} V(\mathbf{p}_{\mathbf{l}}) \ket{\mathbf{l}} \bra{\mathbf{l}}
= \sum_{\mathbf{l}\in [\tilde N_0]^d} V\left(\frac{L}{N_0}\mathbf{l}\right) \ket{\mathbf{l}} \bra{\mathbf{l}}.
\end{align*}
For the convenience, we call $D_N^{(2)}$ the kinetic part in the following context. 
Here, we use the notation in a quantum fashion (see e.g. \cite{NC10}) where $\ket{\mathbf{k}}=\ket{k_d}\otimes\cdots\otimes\ket{k_1}$ ($\bra{\mathbf{k}}=\bra{k_d}\otimes\cdots\otimes\bra{k_1}$) is the tensor product of column vectors (row vectors). $F_N$ and $F_N^\dag$ are the $d$-dimensional shifted QFT and its inverse, respectively. According to the definition of $F_N$, the $F_N$ operation on $dn$ qubits is equivalent to $d$ operations of the one-dimensional shifted QFT on each $n$ qubits. 
For completeness, we provide the theoretical derivation of the real-space grid method in \ref{appB}.

\subsection{Gate complexity}
\label{subsec:3-1}

We consider the ITE operator $\mathrm{exp}\left(-T H_N\right)$, where $T>0$ is a given target time. 
If we directly use the AAPITE circuit, then we need to know at least a lower bound of the eigenvalues of $H_N$, and give an implementation of the RTE operator $U_{\text{RTE}}(\sqrt{2\Delta\tau H_N})$, which can be inefficient as the operator $H_N$ is not diagonal (even not sparse). 
Although one can use the idea of diagonalization in \cite{An.2023}, the computational cost of the diagonalization itself can be more expensive than the PDE solver unless we have already known analytically the precise spectral information of the operator. To fit with the needs in practical applications, we avoid using the precise spectral information by introducing some approximations. 
Here, we choose $K\in \mathbb{N}$, $\Delta\tau = T/K$, and denote $V_0 := \min_{\mathbf{x}\in [0,L]^d} V(\mathbf{x})$. The first-order Suzuki-Trotter formula yields
\begin{align*}
\mathrm{exp}\left(-T H_N\right) &= \left(\mathrm{exp}\left(-\Delta\tau H_N\right)\right)^K \approx \left(\mathrm{exp}\left(-\Delta\tau F_N (D_N^{(2)}-\mathrm{i}D_N^{(1)}) F_N^\dag\right) \mathrm{exp}\left(-\Delta\tau V_N\right)\right)^K \\
&= \mathrm{exp}\left(-V_0 T\right)\left(F_N \mathrm{exp}\left(\mathrm{i}\Delta\tau D_N^{(1)}\right)\mathrm{exp}\left(-\Delta\tau D_N^{(2)}\right)F_N^\dag \mathrm{exp}\left(-\Delta\tau (V_N-V_0 I)\right)\right)^K. 
\end{align*}
Then, we can apply the AAPITE circuit to $\mathcal{H} = D_N^{(2)}$ and $\mathcal{H} = V_N-V_0 I$, respectively. 
Using this approximation, the RTE operators $U_{\text{RTE}}\left(\sqrt{2\Delta\tau D_N^{(2)}}\right)$ and $U_{\text{RTE}}\left(\sqrt{2\Delta\tau (V_N-V_0 I)}\right)$ can be implemented with gate complexity $O(\mathrm{polylog}\, N)$ since $D_N^{(2)}$ and $V_N-V_0 I$ are both diagonal matrices with known underlying functions. The quantum circuit based on the AAPITE is constructed by Fig.~\ref{sec3:Fig1} in each time step. 
\begin{figure}
\centering
\resizebox{15cm}{!}{
\includegraphics[keepaspectratio]{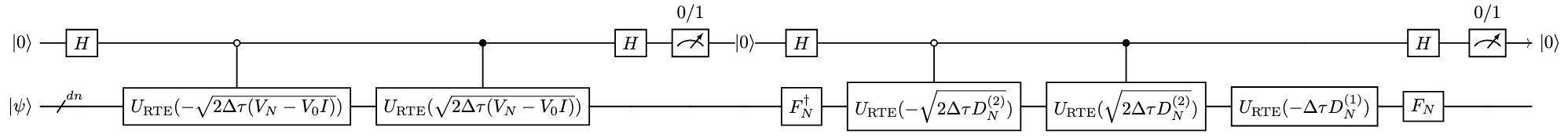}
}
\caption{Quantum circuit in each time step for the ITE for Hamiltonian $H_N$. Here, two AAPITE circuits are employed. The first one is for the potential part, and the second is for the kinetic part. }
\label{sec3:Fig1}
\end{figure}
Using the well-known strategies in Hamiltonian simulation, we can approximate the phases of the diagonals by piecewise polynomials, whose gate complexity is $O(\mathrm{poly}\, n)$. 
In the case of $d = 1$, the operator $U_{\text{RTE}}\left(\sqrt{2\Delta\tau D_N^{(2)}}\right)$ can be further efficiently implemented. 
Note that the eigenvalues of $\sqrt{2\Delta\tau D_N^{(2)}}$ are $\lambda_{k,D_N^{(2)}} = \sqrt{2a\Delta\tau} (2\pi/L)|k-N_0/2|$, $k=0,\ldots,N_0-1$, which is a piecewise linear function of $k$ with only two sub-intervals. Thus, it can be implemented with gate complexity $O(n)$ by the LIU method proposed in \cite{Huang.2024p}. If the potential function $V$ can be efficiently approximated by piecewise linear or quadratic function, then the total gate complexity of a single PITE step is $O(n^2)$.  
On the other hand, in the case of $d\ge 2$, we can first derive a squared distance register $\ket{|\mathbf{k}-N_0/2|^2}_{2n-1+\lceil\log_2 d\rceil}$, and then $U_{\text{RTE}}\left(\sqrt{2\Delta\tau D_N^{(2)}}\right)$ is implemented by applying a unitary diagonal matrix corresponding to a square root function to the squared distance register. Using piecewise $p$-th degree polynomial approximation for the square root function, the gate complexity is $O(n^p)$ \cite{Huang.2024p}. 
For completeness, we provide the explicit gate implementation of such a RTE operator related to the Laplace operator in \ref{appG}.  
If the potential function $V$ are sufficiently smooth and can be efficiently approximated, e.g. localized potential $V=V(|\mathbf{x}-\mathbf{x_0}|)$ for some point $\mathbf{x_0}\in \mathbb{R}^d$, then we can again introduce a squared distance register and implement $U_{\text{RTE}}\left(\sqrt{2\Delta\tau (V_N-V_0 I)}\right)$ with gate complexity $O(\mathrm{poly}\, n)$. 

There is another way to implement a unitary diagonal matrix by using the fixed-point arithmetic operations and the phase kickback strategy \cite{Kassal.2008, Jones.2012, Babbush.2018, Haner.2018, Sanders.2020}. Although the idea is again a piecewise polynomial approximation, this method uses plenty of ancillary qubits and gives a quantum circuit constructed mainly by the Clifford gates and the Toffoli gates that are suitable for the current error correction algorithms. In this paper, we adopt the method in \cite{Huang.2024p} (based on rotation gates) to minimize the use of ancillary qubits.  

\subsection{Error estimate and success probability}
\label{subsec:3-2}

We estimate the error between the AAPITE operator and the ITE operator. By denoting the operator:
$$
P := F_N \mathrm{exp}\left(\mathrm{i}\Delta\tau D_N^{(1)}\right)\cos\left(\sqrt{2\Delta\tau D_N^{(2)}}\right)F_N^\dag \cos\left(\sqrt{2\Delta\tau \tilde V_N}\right),
$$
we define the non-unitary operator of our algorithm in one time step: 
\begin{align*}
U_{\text{AAPITE}}(\Delta\tau)\ket{\psi} := \frac{P\ket{\psi}}{\left\|P\ket{\psi}\right\|}, 
\end{align*}
for any quantum state $\ket{\psi}$. 
Here and henceforth, we denote the shifted $H_N$ and $V_N$ by $\tilde H_N = H_N - V_0 I$ and $\tilde V_N = V_N - V_0 I$ so that the real parts of the eigenvalues of $\tilde H_N$ and $\tilde V_N$ are both non-negative.
A direct calculation yields
\begin{align*}
U_{\text{AAPITE}}^K(\Delta\tau)\ket{\psi} = \frac{P^K\ket{\psi}}{\left\|P^K\ket{\psi}\right\|}.
\end{align*}
For a given initial state $\ket{\psi_{(0)}}$, and $K\in \mathbb{N}$, we let $T=K\Delta\tau$ and define the $\ell^2$-error by
\begin{align*}
\tilde{E}_{\text{tot}} := \Bigg\|\mathrm{exp}\left(-T H_N\right)\ket{\psi_{(0)}} - \mathrm{exp}\left(-TV_0\right)P^K \ket{\psi_{(0)}} \Bigg\|.
\end{align*}
By the triangle inequality, the normalized $\ell^2$-error defined by
$$
E_{\text{tot}} := \left\|\frac{\mathrm{exp}\left(-T H_N\right)\ket{\psi_{(0)}}}{\left\|\mathrm{exp}\left(-T H_N\right)\ket{\psi_{(0)}}\right\|} - U_{\text{AAPITE}}^K(\Delta\tau) \ket{\psi_{(0)}} \right\|,
$$
satisfies
$$
E_{\text{tot}} \le 2\tilde{E}_{\text{tot}}/\left\|\mathrm{exp}\left(-T H_N\right)\ket{\psi_{(0)}}\right\|. 
$$
Combining the error estimate for the AAPITE in Sect.~\ref{subsec:2-2} with the error estimate for the Suzuki-Trotter formula, we obtain
\begin{equation}
\label{sec3:eq-err}
\tilde{E}_{\text{tot}} \le \tilde C T\mathrm{exp}\left(-TV_0\right) \Delta\tau, \quad \mbox{and hence} \quad E_{\text{tot}} \le \frac{2\tilde CT\Delta\tau}{\|\mathrm{exp}\left(-T \tilde H_N\right)\ket{\psi_{(0)}}\|} = O(\Delta\tau).  
\end{equation}
Here, $\tilde C$ is a constant which is independent of $T$ and $N$. The independence/uniformness on $N$ is crucial for deriving an exponential speedup regarding $N$. The details of the derivation are provided in \ref{prf:tot-err-esti}, and the numerical confirmation is provided in \ref{subsec:4-3-2}--\ref{subsec:4-3-4}. 
To achieve a given error bound $\varepsilon>0$, we take $\Delta\tau = O(\varepsilon)$, or equivalently $K=\Omega(\varepsilon^{-1})$. 
We omit the $T$ dependence because we have also $T$ dependence in the denominator of the error estimate and the effect of the denominator could be dominant since we consider a dissipative system. Of course, we can give the loose overhead that $\Delta\tau=O\left(\varepsilon\|\mathrm{exp}\left(-T\tilde H_N\right) \ket{\psi_{(0)}}\|/T\right)$, but we see from the numerical examples in Sect.~\ref{sec:4} that such a theoretical dependence on $T$ is far from sharp. 

Next, we consider the total success probability. According to our construction based on the AAPITE circuits, the success probability is given by
\begin{align*}
\mathbb{P}_{\text{tot}}(\ket{0}) = \left\|P^K\ket{\psi_{(0)}}\right\|^2.
\end{align*}
By the triangle inequalities and Eq.~\eqref{sec3:eq-err}, we have
\begin{align*}
\left\|\mathrm{exp}\left(-T \tilde H_N\right)\ket{\psi_{(0)}}\right\| - \tilde C T\Delta\tau \le \sqrt{\mathbb{P}_{\text{tot}}(\ket{0})} \le \left\|\mathrm{exp}\left(-T \tilde H_N\right)\ket{\psi_{(0)}}\right\| + \tilde C T\Delta\tau,
\end{align*}
which implies 
$$
\mathbb{P}_{\text{tot}}(\ket{0}) \approx \left\|\mathrm{exp}\left(-T \tilde H_N\right)\ket{\psi_{(0)}}\right\|^2 = \mathrm{exp}\left(2T V_0\right) \left\|\mathrm{exp}\left(-T H_N\right)\ket{\psi_{(0)}}\right\|^2,
$$
for any small $\Delta\tau$. 
\begin{remark}
We mention that $\left\|\mathrm{exp}\left(-T H_N\right)\ket{\psi_{(0)}}\right\|^2$ approximates the squared $L^2$-norm of the solution to the continuous equation $(\partial_t + \hat H) u(t) = 0$ with an $L^2$-normalized initial condition as $N\to \infty$. In other words, we have $\left\|\mathrm{exp}\left(-T H_N\right)\ket{\psi_{(0)}}\right\|^2$ approaches $\|u(T)\|_{L^2}^2/\|u(0)\|_{L^2}^2$ as $N$ tends to infinity. This implies that the success probability converges to some positive constant depending only on the equation and $T$ as $N\to \infty$ and $\Delta\tau\to 0$. 
Moreover, the factor $\mathrm{exp}\left(2TV_0\right)$ is necessary, otherwise the operator norm $\|\mathrm{exp}\left(-TH_N\right)\|$ can exceed one, which yields a contradiction that the success probability is larger than one. Furthermore, it is more efficient (higher success probability) if we shift the original differential operator by a constant potential so that its minimal eigenvalue (real part) is almost zero.   
\end{remark}

\subsection{Quantum resources estimation}

We sum up the required quantum resources for the ITE operator $e^{-T\mathcal{H}}$ using the AAPITE with the first-order Suzuki-Trotter formula in Table \ref{sec3:tab1} by focusing on the number of two-qubit gates (CNOT gates) and ancillary qubits. Due to our implementation of the real-time evolution operator \cite{Huang.2024p}, the circuit depth is at most the same order as the CNOT count.  
\begin{table}[htb]
\centering
\caption{Resources requirement for the ITE operator using the AAPITE with the first-order Suzuki-Trotter formula.}
\label{sec3:tab1}
\scalebox{0.95}[0.95]{
\begin{tabular}{l|ccc}
\hline
& CNOT count (Case 1) & CNOT count (Case 2) & Ancilla count \\
\hline
$d=1$ & $O\left((n^2+n^p)\varepsilon^{-1}\right)$ & $O\left((n^2+n^p\varepsilon^{-1/(p+1)})\varepsilon^{-1}\right)$ & $2$ \\
$d\ge 2$ & $O\left(d(n^3+n^p\varepsilon^{-1/(p+1)})\varepsilon^{-1}\right)$ & $O\left(d(n^3+n^p\varepsilon^{-1/(p+1)})\varepsilon^{-1}\right)$ & $2n+2+\lceil\log_2 d\rceil$ \\
\hline
\end{tabular}
}
\end{table}
Here, Case 1 denotes the case that the potential function itself is a (finite-interval) piecewise polynomial of degree $p\in \mathbb{N}$, while Case 2 denotes the case that the potential function is a piecewise $C^{(p+1)}([0, L]^d)$ function, so that it can be well approximated by a piecewise polynomial of degree $p$. Moreover, $\varepsilon$ denotes the error bound, and we need additional gate complexity $O(\varepsilon^{-1/(p+1)})$ in Case 2 that comes from the approximate error regarding the potential function. 
For $d\ge 2$, since we approximate a square root function for the kinetic part $D_N^{(2)}$, we need the additional gate complexity even for Case 1. 
Furthermore, for $d=1$, we only need one ancillary qubit for the PITE and another for the potential function. For $d\ge 2$, we need the other $2n+\lceil\log_2 d\rceil$ qubits for constructing the squared distance register. Finally, if we employ a $q$th-order AAPITE (higher order AAPITE operators are discussed in \ref{subsec:F-1}) as well as a $q$th-order Suzuki-Trotter formula \cite{Janke.1992}, then the dependence $\varepsilon^{-1}$ can be reduced to $\varepsilon^{-1/q}$ at the cost that the prefactor has a dependence on $q$. Although the asymptotic order becomes better as $q$ increases, to achieve a given error bound there exists an optimal $q$ due to the trade-off of the prefactor and the order of the error bound. 

To clarify the result, we give several comments as follows. 

\noindent - First, as a well-used setting for the QLSAs, the error bound in Table \ref{sec3:tab1} describes the $\ell^2$-norm of the difference between the normalized ITE operator and the normalized approximate operator based on AAPITE. Besides, we include only the quantum resources for the approximate ITE of the given Hamiltonian, while the resources for the statevector preparation and readout are not counted. Some comments on the statevector preparation/readout are provided in \ref{appC}. 

\noindent - Second, although we only consider the differential operator \eqref{sec3:eq-Ham} corresponding to advection-diffusion-reaction equations in this paper, the AAPITE algorithm can be readily extended to any time evolution PDEs including the convection equations, the (time-dependent) L\'ame equations, etc. The algorithm itself works for the cases with general boundary conditions as long as one finds an efficient implementation of the RTEs for the square root operator $\sqrt{H_N}$. In this paper, we restrict ourselves to the periodic boundary condition and propose also the explicit and efficient quantum circuits for the realization of such RTEs. 

\noindent - Third, in the best-order QLSAs (e.g. \cite{BCOW17, Krovi2023, Berry.2024}), the authors constructed an enlarged matrix to obtain a logarithmic dependence on $\varepsilon^{-1}$ as shown in Table \ref{sec1:tab2}, which is better than our power dependence on $\varepsilon^{-1}$. However, the dependence on $N^{1/d}=2^n$ therein is polynomial, which comes from the condition number of the discretized Laplace operator. Our algorithm achieves an exponential speedup in $N$ by suitably approximating the ITE operator at the cost of a worse dependence on the error bound $\varepsilon$. 
We discuss in more detail in Sect.~\ref{subsec:5-4}. 

\noindent - Fourth, in \cite{Montanaro.2016}, the authors proved that to readout the norm of the solution as well as some physical quantities related to the solution, $O(\varepsilon^{-1})$ repetitions of the quantum circuit are indispensable. Therefore, it is, to some extent, reasonable to consider a quantum algorithm for the calculation part with a power dependence on $\varepsilon^{-1}$ because the total complexity of any end-to-end quantum algorithm eventually has a power dependence regarding $\varepsilon^{-1}$ (see also the second comment in Sect.~\ref{subsec:5-5}).  
In this paper, keeping in mind the application for the eFTQC, we minimize the use of ancillary qubits and mainly focus on the quantum speedup regarding the grid size $N$. 

\section{Numerical examples}
\label{sec:4}

Considering the Hamiltonian $\hat H$ given in Eq.~\eqref{sec3:eq-Ham}, a direct application of the AAPITE operator is to solve the following advection-diffusion-reaction equation:
\begin{align}
\label{sec4:eq-gov}
\partial_t u(\mathbf{x},t) - a\nabla^2 u(\mathbf{x},t) + \mathbf{v}\cdot\nabla u(\mathbf{x},t) + V(\mathbf{x}) u(\mathbf{x},t) = 0, \ \mathbf{x}\in [0,L]^d, \, 0<t\le T,
\end{align}
where $d\in \mathbb{N}$, $L$ and $T$ are given positive constants which denote the spatial dimension, the length of the spatial domain, and the final time, respectively. Moreover, $a\in \mathbb{R}_{>0}$ and $\mathbf{v}\in \mathbb{R}^d$ are given diffusion coefficient and advection coefficient, respectively, while $V$ is a given zeroth-order (potential) term describing the self-reaction. 
In the following context, we consider the periodic boundary condition with the initial condition:
\begin{align}
\label{sec4:eq-init}
u(\mathbf{x},0) = u_0(\mathbf{x}), \quad \mathbf{x}\in [0, L]^d,
\end{align}
where $u_0$ is a sufficiently smooth function. 
In this paper, the numerical results are simulated by Qiskit \cite{Qiskit23}, a quantum gate-based emulator. Although we mention some efficient techniques, e.g. \cite{Moosa.2023, KDNTM23} for the (approximate) preparation of the initial state in \ref{appC}, to avoid discussing the additional errors from encoding the initial condition and obtaining the solution classically, we realize the quantum statevector preparation and readout by the Qiskit functions: \texttt{set\_statevector()} and \texttt{get\_statevector()}, respectively, and the success probability is calculated by the squared $\ell^2$-norm of partial statevector. 

\subsection{1D advection-diffusion equation}
\label{subsec:4-1}

First, we consider a simple example in 1D. 
Let $d=1$, $L=1$, $T=0.1$, $a=0.5$, $\mathbf{v}=5$, and $V = 0$. 
Moreover, we assume $u_0(x) = \sin(\pi x)$, $x\in [0,1]$. 
Actually, in this case, we can solve Eq.~\eqref{sec4:eq-gov} analytically using the Fourier series and obtain
\begin{align}
\label{sec4:eq-AS}
u(x,t) = \frac{2}{\pi} + \sum_{k=1}^\infty -\frac{4}{\pi(4k^2-1)}\mathrm{exp}\left(-2\pi^2 k^2 t\right)\cos(2\pi k(x-5t)), \ x\in [0,1], \ t\in [0,0.1].
\end{align}
As a reference solution, we introduce the truncated solution with an even number $M$:
\begin{align*}
u_M(x,t) := &\frac{2}{\pi} + \sum_{k=1}^{M/2-1} -\frac{4}{\pi(4k^2-1)}\mathrm{exp}\left(-2\pi^2 k^2 t\right)\cos(2\pi k(x-5t)) \\
& - \frac{2}{\pi(M^2-1)}\mathrm{exp}\left(-\pi^2 M^2 t/2\right) \mathrm{exp}\left(-\mathrm{i}M\pi(x-5t)\right), \ x\in [0,1], \ t\in [0,0.1],
\end{align*}
which is the true solution to Eq.~\eqref{sec4:eq-gov} with a truncated initial condition: 
\begin{align*}
u_{0,M}(x) &= \sum_{k=-M/2}^{M/2-1} \int_0^1 u_0(\xi) \mathrm{exp}\left(-\mathrm{i}2\pi k\xi\right)d\xi\, \mathrm{exp}\left(\mathrm{i}2\pi kx\right) \\
&= \sum_{k=-M/2}^{M/2-1} -\frac{2}{\pi(4k^2-1)} \mathrm{exp}\left(\mathrm{i}2\pi kx\right), \quad x\in [0,1]. 
\end{align*}
Although such a truncated solution may be unphysical due to its complex values coming from the third term for any $t>0$, no problem will occur in our discussion below, and it converges to the analytical solution exponentially fast as $M\to \infty$.
Furthermore, we call $u_N(t) := (u_N(p_0,t),\ldots,u_N(p_{N-1},t))^\mathrm{T}\in \mathbb{C}^N$ the $N$-truncated solution for any $t\in [0,T]$ where $p_l$, $l=0,\ldots, N-1$ is the grid point defined by Eq.~\eqref{sec3:eq-x}.   
Let $\tilde u_N(t)\in \mathbb{C}^N$ be the ``$N$-quantum solution" using the AAPITE operator. 
The ``$N$-quantum solution" is a vector of length $N$ approximating the solution to Eq.~\eqref{sec4:eq-gov} at grid points, see \ref{appC} for the detailed derivation. 

\noindent \underline{\bf Visualization of the quantum solution}
We plot the proposed $N$-quantum solution and the analytical solution at simulation time $t=0.01,0.05,0.1$ in Fig.~\ref{sec4:Fig0}. Here, we take the real part of the $M$-truncated solution with large $M=2000$ as the analytical solution, and the classical numerical solutions by the FDM in space and the Euler methods in time are also provided for the comparison of the results. 
\begin{figure}[htb]
\centering
\resizebox{15cm}{!}{
\includegraphics[keepaspectratio]{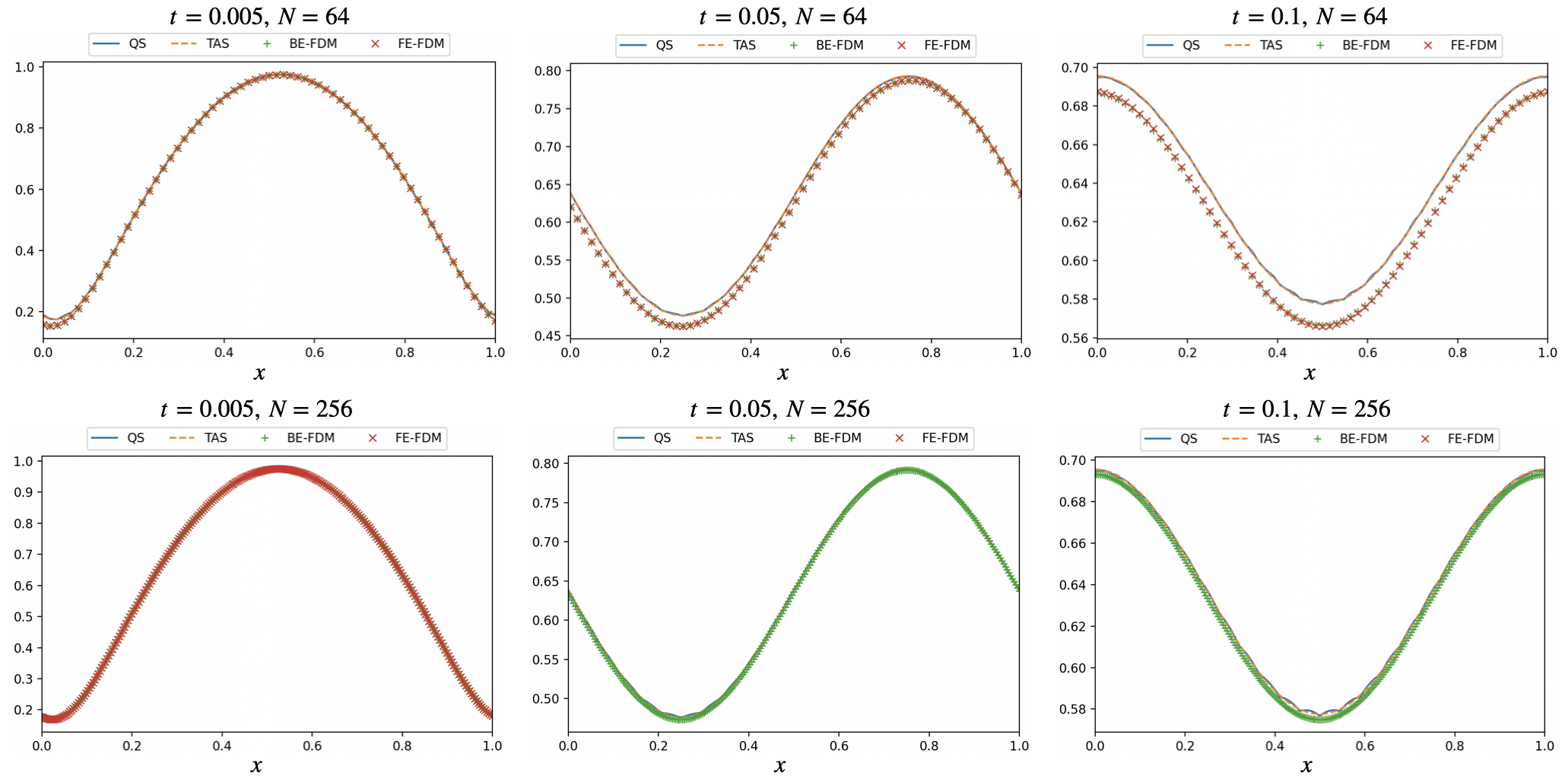}
}
\caption{Plot of solutions at time $t=0.005,0.05,0.1$. The $N$-quantum solution (QS), the (truncated) analytical solution (TAS), the classical solutions by the backward Euler FDM (BE-FDM) and the forward Euler FDM (FE-FDM) are illustrated by line, dashed line, plus marker, and cross marker, respectively. In the first row, we take $N=64$ while we take $N=256$ in the second row. Moreover, we choose the time step $\Delta\tau=0.0005$ for the quantum solution, and the time step $\Delta\tau = t/10^{3}$ for the classical FDM solutions.} 
\label{sec4:Fig0}
\end{figure}
We find that the proposed quantum solution approaches the analytical solution much better than the classical FDM solutions, especially for a small grid parameter. 
The reason is that our quantum solution is based on a special FSM whose $N$ describes the number of bases, and thus, it has a better precision than the FDM methods with the same grid parameter $N$. 
Besides, the quantum solutions in Fig.~\ref{sec4:Fig0} exhibit arc-like shapes, which owes to the approximation error (errors of relatively high frequency parts contribute the most since the Fourier coefficients with large indices are small), and can be relieved by taking smaller $\Delta\tau$. 
Here, we plot the classical FDM solutions to indicate the best possible performance of the QLSAs based on the backward Euler FDM (e.g. \cite{Ingelmann.2024}). It is clear from Fig.~\ref{sec4:Fig0} that for the PDEs with periodic boundary conditions, it is better to apply a quantum algorithm based on FSM, and we compare AAPITE algorithm with the ones in \cite{Ingelmann.2024} in Sect.~\ref{subsec:5-2}.  

\noindent \underline{\bf Error of the AAPITE} 
In this example, there is no Suzuki-Trotter error because $V\equiv 0$. Thus, we illustrate the approximation error of the AAPITE by plotting the $\ell^2$-error between the $N$-quantum solution and the $N$-truncated solution in Fig.~\ref{sec4:Fig1}. The error decreases, and the final success probability converges to some constant around $0.814$ as the time step $\Delta\tau$ becomes smaller. The theoretical linear dependence on $\Delta\tau$ is confirmed numerically in \ref{subsec:4-3-3}.
\begin{figure}[htb]
\centering
\resizebox{8cm}{!}{
\includegraphics[keepaspectratio]{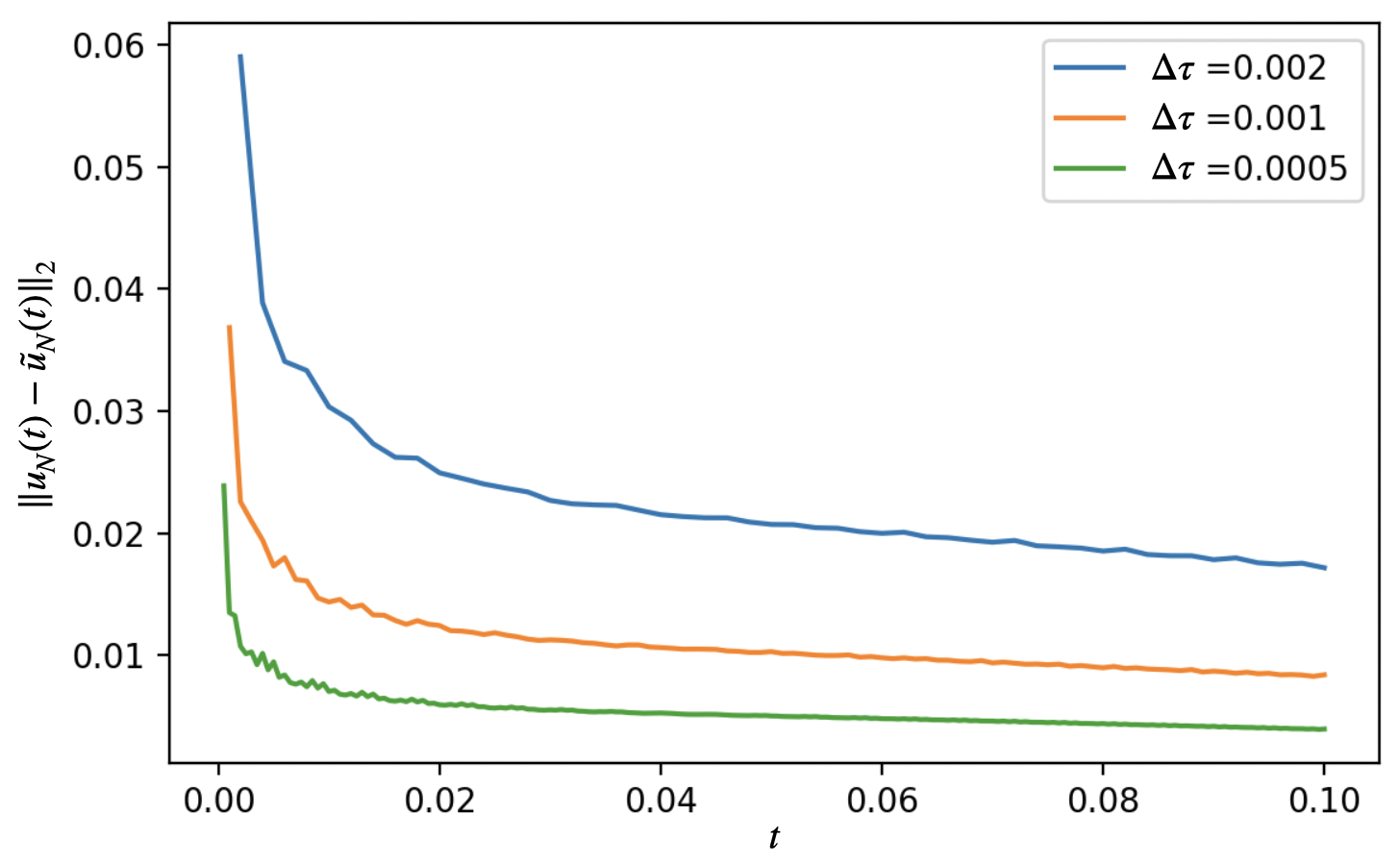}
}
\caption{The $\ell^2$-error between the $N$-quantum solution using the AAPITE operator and the $N$-truncated solution as a time-varying function. Here, $N=64$ and three different time steps of $\Delta\tau = 0.002, 0.001, 0.0005$ are considered in the implementations of the AAPITE operator. The final success probabilities of the AAPITE circuit are $0.81387$, $0.81395$, and $0.81400$, respectively.}
\label{sec4:Fig1}
\end{figure}

\noindent \underline{\bf Numerical dependence on diffusion/advection coefficient}
We consider the numerical dependence on the diffusion and advection coefficients $a$ and $\mathbf{v}$. The $\ell^2$-errors between the $N$-quantum solutions and the $N$-truncated solutions are shown in Fig.~\ref{sec4:Fig1b}. 
This indicates that the error increases as the diffusion coefficient $a$ becomes larger, but it does not depend on the advection coefficient $\mathbf{v}$. The dependence on $a$ can be understood by the following change of variables: 
\begin{align*}
\partial_{\tilde t} u(\mathbf{x},\tilde t) - \nabla^2 u(\mathbf{x},\tilde t) + \mathbf{\tilde v}\cdot\nabla u(\mathbf{x},\tilde t) = 0, \quad \mathbf{x}\in [0,L]^d, \, \tilde t>0,
\end{align*}
where we take $\tilde t = at$ to make the diffusion coefficient $1$ and $\mathbf{\tilde v}=a^{-1}\mathbf{v}$. Then, keeping $K$ invariant, we have $\widetilde{\Delta\tau} = a\Delta\tau$. This yields a linear dependence of the diffusion coefficient $a$ since the error overhead depends linearly on $\widetilde{\Delta\tau}$ (Sect.~\ref{subsec:3-2}). 
As a result, we find that the $\ell^2$-error depends only on $a$, but is independent of $\mathbf{v}$. 
It is well-known that the numerical solution is not stable for large advection coefficients for the conventional FDMs. Here, our algorithm is stable with respect to $\mathbf{v}$ because we use the FSM.  
\begin{figure}[ht]
\centering
\resizebox{15cm}{!}{
\includegraphics[keepaspectratio]{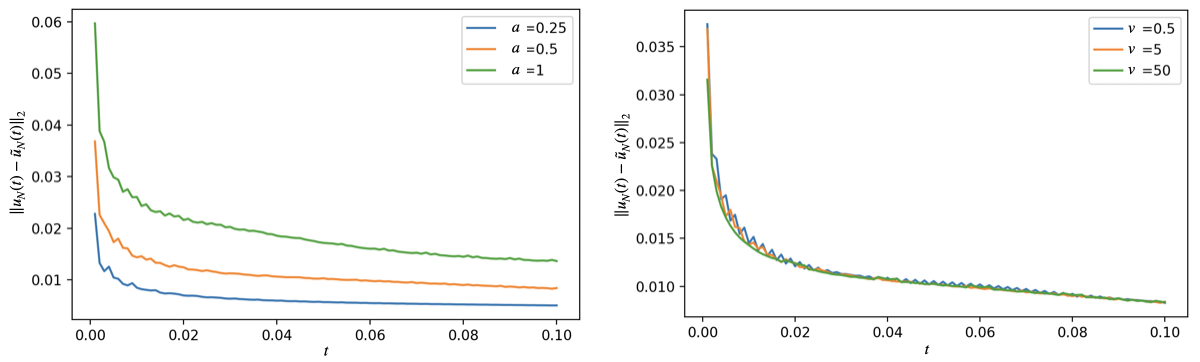}
}
\caption{The $\ell^2$-error between the $N$-quantum solution using the AAPITE operator and the $N$-truncated solution as a time-varying function. Here, $N=64$ and $\Delta\tau=0.001$ are fixed. 
In the left subplot, three different diffusion coefficients $a = 0.25, 0.5, 1$ are considered with fixed $\mathbf{v}=5$, while in the right subplot, three different advection coefficients $\mathbf{v} = 0.5, 5, 50$ are considered with fixed $a=0.5$. }
\label{sec4:Fig1b}
\end{figure}

\subsection{2D advection-diffusion-reaction equation}
\label{subsec:4-2}

Next, we consider a more involved example where we put an absorption potential at the center of the domain. More precisely, we let $d=2$, $L=2\pi$, $T=0.1$, $a=0.5$, $\mathbf{v}=(20, 0)^\mathrm{T}$, and 
$$
V(x_1,x_2) = 
\left\{
\begin{aligned}
& 10, && |x_1-\pi|\le \pi/2 \text{ and } |x_2-\pi|\le \pi/2,\\
& 0, && \text{ otherwise. }
\end{aligned}
\right.
$$
Moreover, we consider the initial condition of the Gaussian function:
$$
u_0(\mathbf{x}) = A\, \mathrm{exp}\left(-\frac{|\mathbf{x}-\mathbf{x}_0|^2}{2\sigma^2}\right),
$$
with $\mathbf{x}_0=(\pi/2,\pi/2)^\mathrm{T}$, $\sigma=0.5$ and $A=1$. 

\noindent \underline{\bf Visualization of the quantum solution}
We consider the $(2^4,2^7)$-quantum solutions, that is, the solutions on a $2^7\times 2^7$ grid calculated by the Fourier coefficients on a $2^4\times 2^4$ grid, see \ref{appC} for the precise definition. The solutions at several fixed time are illustrated by the colormap plots in Fig.~\ref{sec4:Fig2}. 
The initial source moves in $x_1$-direction as expected, and the solution at the center part of the domain decreases rapidly owing to the center absorption potential. 
\begin{figure}[htb]
\centering
\resizebox{14cm}{!}{
\includegraphics[keepaspectratio]{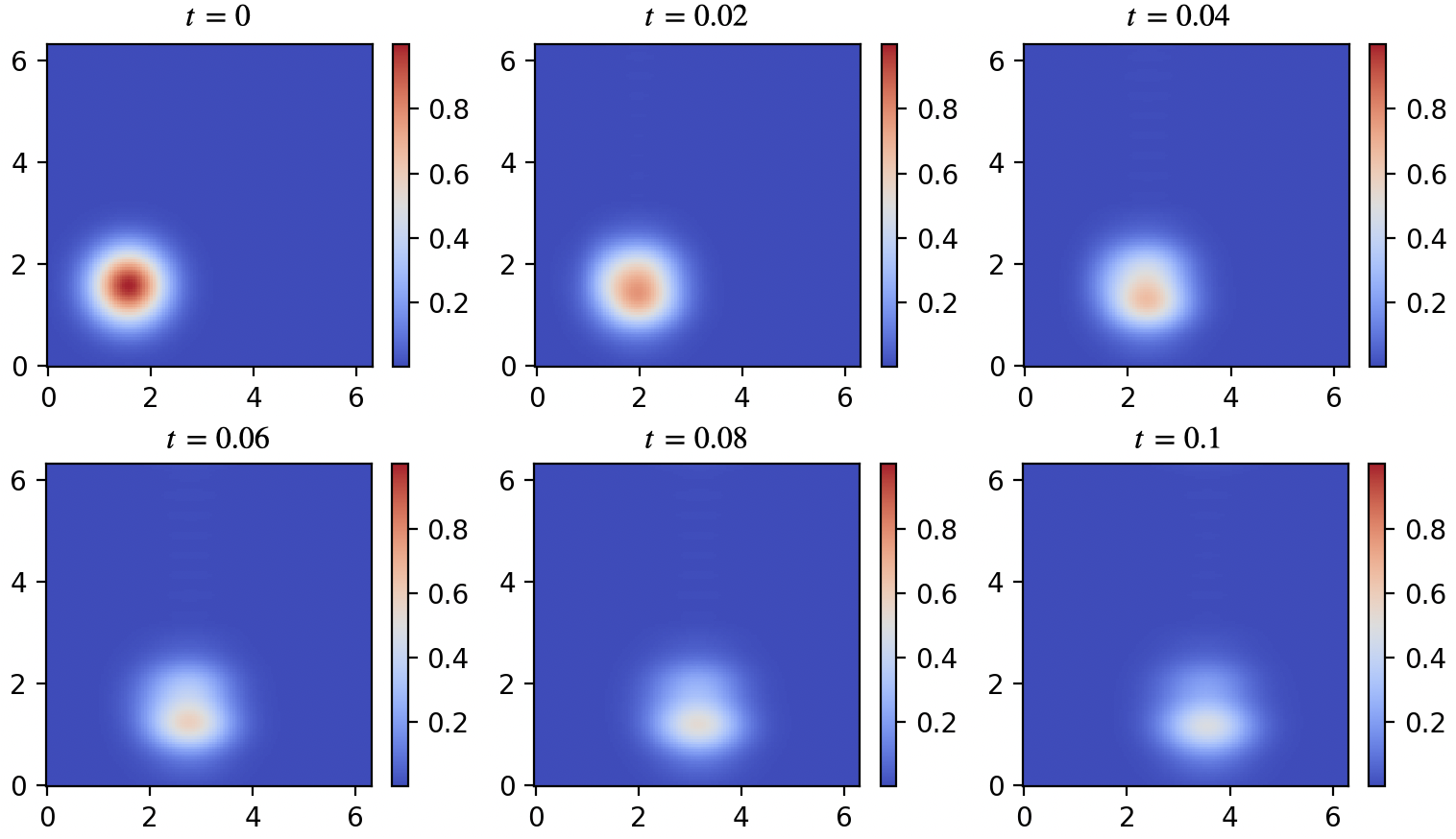}
}
\caption{Time evolution of the $(2^4,2^7)$-quantum solution using the AAPITE operator. The quantum solutions at $t=0, 0.02, 0.04, 0.06, 0.08, 0.1$ are visualized using the colormap plots. In each subplot, the $x$ and $y$ axes describe the location in $x_1$-direction and $x_2$-direction, respectively. The result is simulated by Qiskit, and we take $\Delta\tau = 0.0025$ with a second-order Suzuki-Trotter formula. }
\label{sec4:Fig2}
\end{figure}

\noindent \underline{\bf Error of the Suzuki-Trotter formula and the AAPITE}
With the presence of the spatially varying potential $V$, there is no analytical solution formula. To demonstrate the performance of the AAPITE, we choose 
\begin{align*}
\psi(t) := \mathrm{exp}\left(-t H_N\right)\psi_{\text{init}}, \quad \psi(t) \equiv \sum_{\mathbf{l}\in [\tilde N_0]^2}\psi_{\mathbf{l}}(t)\ket{\mathbf{l}}, 
\end{align*}
and 
$$
u_{N_0,N_f}(t) := \sum_{\mathbf{\tilde l}\in [\tilde N_0]^2} \psi_{\mathbf{\tilde l}}(t)g_N(\mathbf{p}_{\mathbf{l}};\mathbf{p}_{\mathbf{\tilde l}}), \quad \mathbf{l}\in [\tilde N_f]^2,
$$
as the reference solution, where $N_0=2^4$, $N_f=2^7$, $g_N$ is the two-dimensional pixel function defined in Eq.~\eqref{appB:eq10} in \ref{subsec:appB-2}, and $\psi_{\text{init}}$ is the initial state given by
\begin{align*}
\psi_{\text{init}} = \sum_{\mathbf{l}\in [\tilde N_0]^2} 
\left(\frac{1}{L N_0}\sum_{\mathbf{k}\in [\tilde N_0]^2} \left(\int_{[0,L]^2} u_0(\mathbf{x}) \mathrm{exp}\left(-\mathrm{i}\frac{2\pi}{L} \left(\mathbf{k}-\frac{N_0}{2}\right)\cdot\mathbf{x}\right) \mathrm{d}\mathbf{x}\right) \mathrm{exp}\left(\mathrm{i}\frac{2\pi}{N_0}\left(\mathbf{k}-\frac{N_0}{2}\right)\cdot\mathbf{l}\right)\right)\ket{\mathbf{l}}.
\end{align*}
This reference solution describes the exact solution to the ordinary differential system after the discretization in space and converges to the true solution rapidly as $N=N_0^2\to \infty$. Fig.~\ref{sec4:Fig3} shows the $\ell^2$-error between the $(2^4, 2^7)$-quantum solutions and the reference solution, which implies that the quantum solution converges to the reference solution, and the final probability success converges to a constant around $0.262$ as $\Delta\tau\to 0$.
\begin{figure}[htb]
\centering
\resizebox{8cm}{!}{
\includegraphics[keepaspectratio]{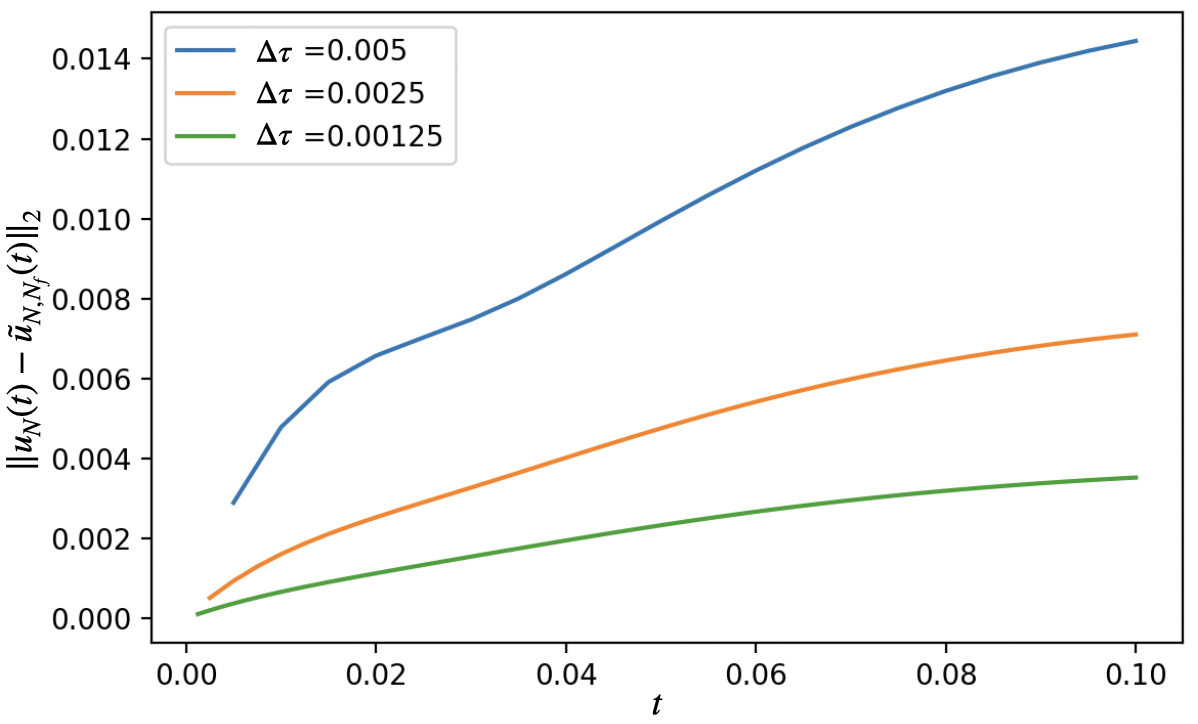}
}
\caption{The $\ell^2$-error between the $(2^4,2^7)$-quantum solution using the AAPITE operator and the reference solution as a time-varying function. Here, three different time steps $\Delta\tau = 0.005, 0.0025, 0.00125$ are considered in the AAPITE operator, and a second-order Suzuki-Trotter formula is employed. The final success probabilities are $0.26172$, $0.26187$, $0.26194$, respectively. 
}
\label{sec4:Fig3}
\end{figure}
This time, the error comes from both the Suzuki-Trotter formula and the approximation of the AAPITE operator. 
We can confirm that the approximation error depends on the time step linearly, while the Suzuki-Trotter error has the dependence on the time step according to its splitting order. The numerical details on each error regarding the time step are provided in \ref{subsec:4-3-3}, \ref{subsec:4-3-4}.  
Here, we observe that different from the decreasing errors (regarding simulation time $t$) in the simple case of vanishing potential (see Fig.~\ref{sec4:Fig1}), the errors in Fig.~\ref{sec4:Fig3} increase as $t$ becomes larger. This is consistent with our theoretical overhead of $\tilde{E}_{\text{tot}}$ in Sect.~\ref{subsec:3-2} that the error may increase for a long-time simulation when the potential function is nonzero. On the other hand, Fig.~\ref{sec4:Fig3} also implies that the linear dependence on $T$ in the theoretical overhead \eqref{sec3:eq-err} is not sharp.

\section{Comparisons and discussion}
\label{sec:5}

In this section, we compare our algorithm based on the AAPITE operator with some existing quantum/hybrid algorithms. 

\subsection{HHL algorithm and a VQA}
\label{subsec:5-2}

First, we compare our quantum algorithm with a previous work for a one-dimensional advection-diffusion equation \cite{Ingelmann.2024}, where the authors used two quantum algorithms. One of them is a QLSA based on the HHL algorithm and the other is a VQA. 
According to Fig.~6 in \cite{Ingelmann.2024}, the simulation results of both the QLSA and the VQA are comparable in the mean squared error (MSE). 

Here, we check the MSE of our quantum solution under the setting of \cite{Ingelmann.2024}. 
Let $d=1$, $L=1$, $a=1$, $\mathbf{v}=10$, $T=0.04$, and $V=0$. Moreover, the initial condition is set to be the shifted Dirac delta function $\delta(x-L/2)$ which can be rewritten in the following Fourier series:
$$
u_0(x) = \frac{1}{L} + \sum_{k=1}^\infty \frac{2}{L}\cos(k\pi)\cos\left(\frac{2k\pi}{L}x\right), \ x\in [0,L].
$$
Then, we define the truncated analytical solution as follows: 
\begin{align*}
u_{N_{\text{trun}}}(x,t) := \frac{1}{L} + \sum_{k=1}^{N_{\text{trun}}} \frac{2}{L}\cos(k\pi)\cos\left(\frac{2k\pi}{L}(x-vt)\right)\mathrm{exp}\left(-a\left(\frac{2k\pi}{L}\right)^2 t\right), \ x\in [0,L], \ t>0,
\end{align*}
with a parameter $N_{\text{trun}}\in \mathbb{N}$. We take sufficiently large $N_{\text{trun}}=1000$ above and regard it as the reference solution, which is almost an analytical solution because of the exponentially decay of the high frequency (i.e., large $k$) parts. 

First, we plot the proposed quantum solutions with different parameters at two time points $t=0.005$ and $t=0.04$ in Fig.~\ref{appD:Fig3}. 
\begin{figure}[htb]
\centering
\resizebox{15cm}{!}{
\includegraphics[keepaspectratio]{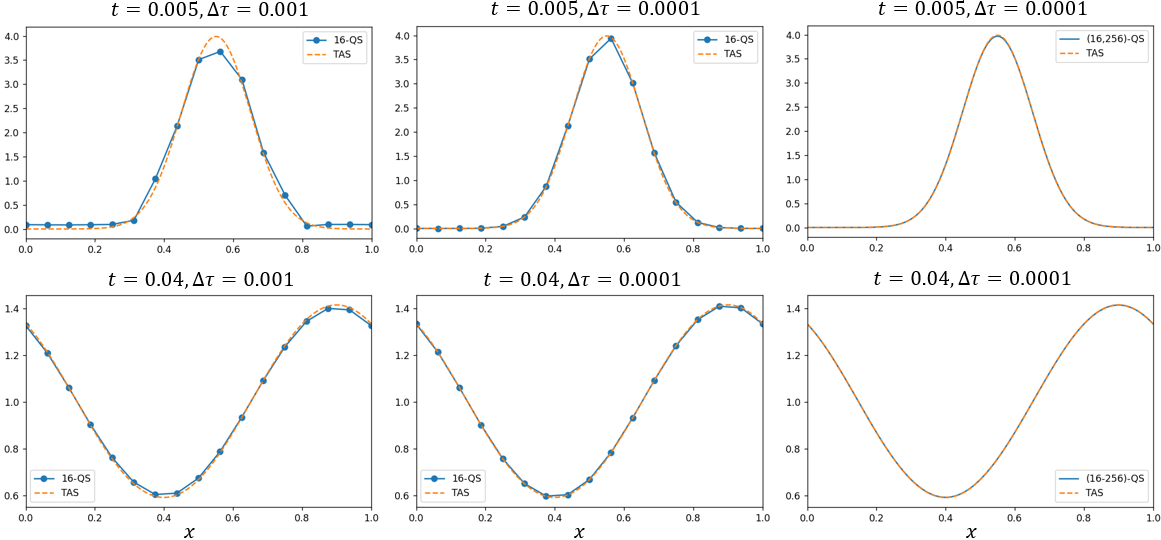}
}
\caption{Proposed quantum solutions with different time steps $\Delta\tau=0.001,0.0001$ at $t=0.005,0.04$. The quantum solutions at $t=0.005$ are shown in the first row, while the solutions at $t=0.04$ are shown in the second row. The first column gives the $2^4$-quantum solutions with a relatively large time step $\Delta\tau=0.001$, the second column gives the $2^4$-quantum solutions with a smaller $\Delta\tau=0.0001$, and the third column gives the post-processed $(2^4,2^8)$-quantum solutions with $\Delta\tau=0.0001$ for comparison. In each subplot, the reference solution (TAS) is plotted in a dashed line. }
\label{appD:Fig3}
\end{figure}
All the quantum solutions fit the analytical solution well. 
Focusing on the first two columns, we find that if a smaller time step $\Delta\tau$ is chosen, then the quantum solution approximates the analytical solution even better. The third column in Fig.~\ref{appD:Fig3} gives the $(2^4,2^8)$-quantum solutions, which indicates that our quantum solution using only $4$ qubits (one ancillary qubit is not counted) provides good approximation to the analytical solution on a $2^8$ grid. 

Next, we provide a quantitative analysis by plotting the MSEs between the quantum solutions and the reference solution. Again, we choose two different time steps $\Delta\tau=0.001$ and $\Delta\tau=0.0001$. The results are demonstrated in Fig.~\ref{appD:Fig4}.
\begin{figure}[htb]
\centering
\resizebox{15cm}{!}{
\includegraphics[keepaspectratio]{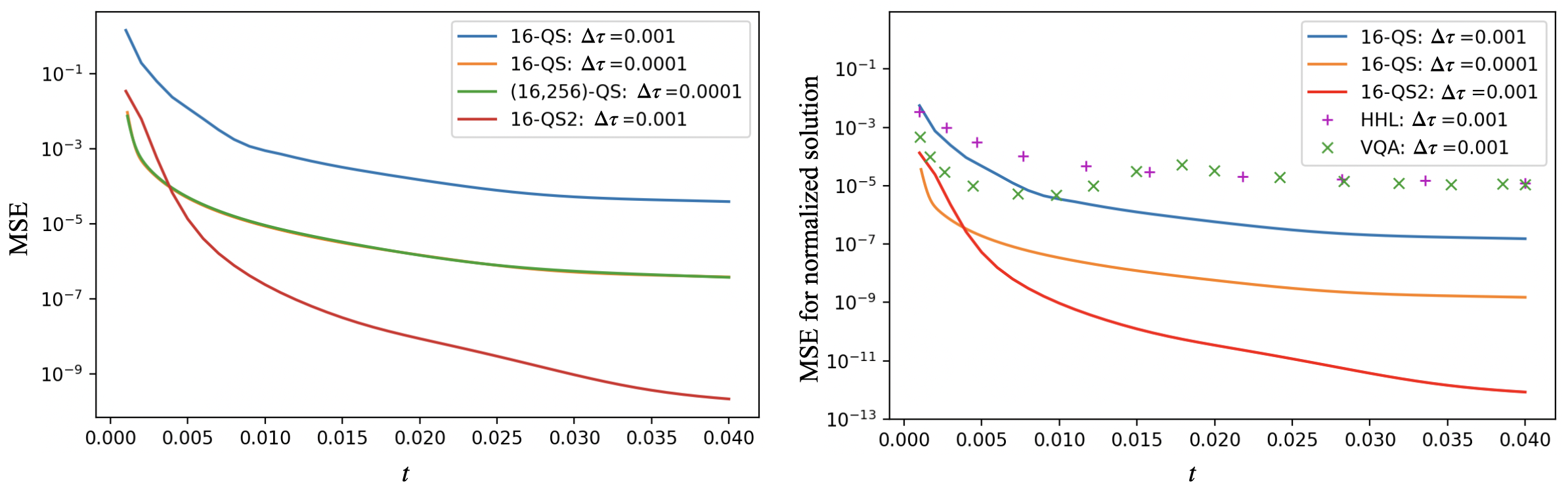}
}
\caption{MSEs between the quantum solutions and the reference solution. The results by several quantum solutions are plotted in different colors, while $4+1$ qubits are used to derive the quantum solutions in all cases. Here, $16$-QS2 indicates a second-order alternative approximate PITE, which implements the operation in Eq.~\eqref{appD:eq-2oPITE}. 
The left subplot describes the MSEs between the quantum solutions and the reference solution. The right one describes the MSEs between the normalized quantum solutions and the normalized reference solution using our proposed quantum solution and the HHL algorithm and the VQA in \cite{Ingelmann.2024}. The data of the HHL algorithm and the VQA are extracted from Fig.~6 in \cite{Ingelmann.2024} by the application PlotDigitizer \cite{PlotD}.}
\label{appD:Fig4}
\end{figure}
The MSEs are relatively large near the initial time, and they decrease as $t$ becomes larger. 
Comparing the blue line with the orange one, we observe that the MSE has a quadratic dependence on the time step $\Delta\tau$, which can be seen by the discussion in the next paragraph. Note that the orange line almost coincides with the green line in the left subplot of Fig.~\ref{appD:Fig4}. This implies that the post-processing step in \ref{subsec:C-3} enables us to obtain the solutions in a finer grid while keeping almost the same MSE. 
To compare with the results in \cite{Ingelmann.2024}, we provide also the MSEs between the normalized quantum solutions and the normalized reference solution. Here, we choose the same normalized solutions as those in \cite{Ingelmann.2024}, that is, we divide the quantum/reference(analytical) solution by the grid parameter $N$ so that the $\ell^2$-norm of the initial state is normalized to $1$. By choosing the same $\Delta\tau=0.001$, we have a comparable result to Fig.~6 in \cite{Ingelmann.2024} (see the blue line in the right subplot in Fig.~\ref{appD:Fig4}). More precisely, we have similar MSE as the HHL and one-order worse MSE than the VQA near $t=0$, while we obtain two-order better MSE than both algorithms near $t=0.04$. Besides, we can achieve further better precision by choosing small $\Delta\tau$ for fixed $N=16$ (see the orange line), and obtain a precision of $10^{-9}$ at $t=0.04$, which is a huge improvement than the HHL algorithm and the VQA in \cite{Ingelmann.2024}. 
To avoid misleading, we emphasize that here we use the MSE for the normalized solution only to compare with the result in \cite{Ingelmann.2024}. This error is not physical because it gives much smaller values than it performs. One should use either the $\ell^2$-error for the normalized solution or the MSE for the original solution (i.e., the left subplot in Fig.~\ref{appD:Fig4}). 
 
The reason why our algorithm has better performance in precision is as follows. 
Concerning the well-known theoretical error estimate of the Euler FDMs, the $\ell^2$-error scales as $O(\Delta\tau)+O(N^{-2})$, which means the MSE is $O(N^{-1}(\Delta\tau)^2)+O(N^{-5})$. 
On the other hand, by the theoretical estimates in Sect.~\ref{subsec:3-2} for both the approximation error and the Suzuki-Trotter error, as well as the theoretical bound of the discretization error (see Lemma 1 in \cite{Childs.2022}), the $\ell^2$-error for the proposed algorithm is $O(\Delta\tau)+O((N/2)^{-N/2})$. This implies that the MSE is $O(N^{-1}(\Delta\tau)^2)+O(N^{-1}(N/2)^{-N})$, whose second part has a more rapid decay in $N$. These theoretical estimates explain the essential superiority (regarding $N$) of our algorithm over the ones in \cite{Ingelmann.2024} based on the Euler FDMs. Numerically, we also refer to the third column in Fig.~\ref{sec4:Fig0} that one needs to take a larger $N$ for the Euler FDMs.  

Moreover, we can further consider the second-order AAPITE operator defined in Eq.~\eqref{appD:eq-2oPITE} in \ref{subsec:F-1}. 
The quantum circuit for the second-order AAPITE operator can be similarly implemented as the proposed (first-order) AAPITE operator because the kinetic part is diagonal in the Fourier domain and the potential part is diagonal. The only difference is that we need to implement a third-degree polynomial function, which can be precisely realized by the polynomial phase gates described in \cite{Kosugi.2023, Huang.2024p} using $O(\mathrm{polylog}\, N)$ quantum gates with depth $O(\mathrm{polylog}\, N)$. 
The MSE for the $16$-quantum solution using the second-order AAPITE is denoted by the red line in Fig.~\ref{appD:Fig4}. Owing to the higher-order dependence on $\Delta\tau$, the method based on the second-order AAPITE greatly outperforms the previous work. 
Combining such a high-order AAPITE operator (see \ref{subsec:F-1}) with a high-order Suzuki-Trotter formula, we can derive a quantum solution with $\ell^2$-error of $O((\Delta\tau)^p)+O((N/2)^{-N/2})$ for any $p\ge 2$ at the cost of larger gate count/circuit depth (but still of the order $O(\mathrm{polylog}\, N)$). 

Finally, we discuss the required computational resources. 
For the total number of qubits, we use only $1+\log_2 N$, that is, $\log_2 N$ qubits for the spatial discretization and another ancillary qubit for the AAPITE (we do not need more qubits because $d=1$ and $V=0$). This is the same as that for the VQA and smaller than that for the HHL algorithm, that is, $1+n_q+\log_2 N$ where $n_q=2\log_2 N +O(1)$ (\cite{HHL09}) is the number of ancillary qubits for the QPE. 
As for the execution time, since our algorithm needs much less classical computations, the total execution time for simulating the above example (by Qiskit with $400$ time steps) requires less than $3$ minutes on a laptop (MacBook Pro, Z16R0004VJ/A, FY2022, Apple M2, 8 Cores, 16GB), much faster than the VQA (a half hour to several hours, see Table C.1 in \cite{Ingelmann.2024}). The circuit depth of our algorithm in each time step is $O(\log N)$, which is much smaller than that for the HHL algorithm because the controlled rotation part needs $O(N^2)$ single-qubit/two-qubit operations. 
Therefore, we conclude that our algorithm outperforms the ones in \cite{Ingelmann.2024} (the classical/quantum computational cost is much smaller, and we obtain a higher precision at the final time). 
 
Although the quantum algorithm based on the FSM is not flexible in boundary conditions compared to the other discretization methods including the FDMs, the finite element methods (FEMs), etc., it is compatible with the first-quantized Hamiltonian simulation so that the corresponding quantum circuits for solving the PDEs can be efficiently and explicitly constructed. Thus, we suggest the FSM for the spatial discretization if the boundary condition is periodic or is open (i.e., it does not influence the solution in the interested domain).

\subsection{Previous approximate PITE}
\label{subsec:5-3}

Second, we compare the AAPITE with the original approximate PITE in \cite{Kosugi.2022} and its derivative in \cite{Nishi.2023} (variable-time-step approximate PITE or simply VS-APITE) that takes varying time steps $\{\Delta\tau_j\}_{j=1}^K$, using a more involved case of $V\not= 0$. 
Let $d=1$, $L=1$, $T=0.1$, $a=0.5$, $\mathbf{v}=5$, and $u_0(x)=\sin(\pi x)$, $x\in [0,1]$. Moreover, we consider the piecewise constant potential function:
\begin{align}
\label{sec4:eq-V}
V(x)=
\left\{
\begin{aligned}
&10, && |x-1/2|\le 1/4, \\
&0, && \text{otherwise},
\end{aligned}
\right.
\end{align}
which has large absorption around the center of the domain. 
In such a case with a nonzero potential, there is no analytical formula for the solution in general. 
We first check the classical solutions by the backward Euler FDM with different grid parameters $N=64,256,1024,4096$ to find that the solutions have small changes after $N\ge 1024$, and hence, we take the one with $N=4096$ as the reference solution. Then, we demonstrate the quantum solutions using the original approximate PITE (APITE) \cite{Kosugi.2022}, the variable-time-step approximate PITE \cite{Nishi.2023}, and the AAPITE under the grid $N=64$ in Fig.~\ref{sec5:Fig1}. 
\begin{figure}[tb]
\centering
\resizebox{15cm}{!}{
\includegraphics[keepaspectratio]{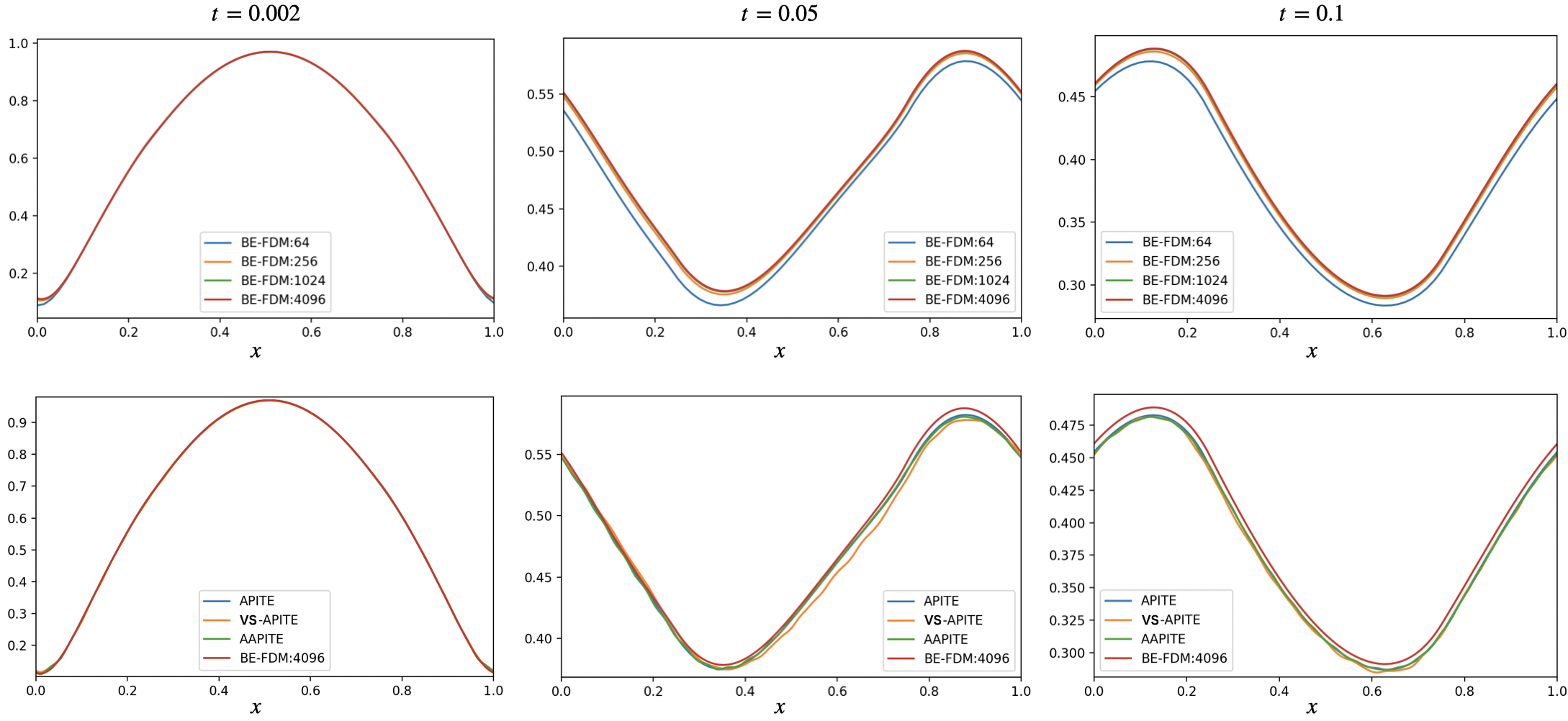}
}
\caption{Plot of solutions at time $t=0.002,0.05,0.1$. In the first row, we plot the classical solutions by the backward Euler FDM with different grid parameters $N=64,256,1024,4096$. 
In the second row, we plot the $(2^6,2^8)$-quantum solutions using the approximate PITE \cite{Kosugi.2022} (APITE), the variable-time-step approximate PITE \cite{Nishi.2023} (VS-APITE), and the alternative approximate PITE (AAPITE). A reference solution by the backward Euler FDM with large parameter $N=4096$ is also given in red. 
Here, the first-order Suzuki-Trotter formula is employed for the quantum solutions, and we choose the time step $\Delta\tau=0.00005$ for APITE and $\Delta\tau=0.0005$ for AAPITE, the linear-scheduled time steps $\Delta\tau_j\in [0.0001,0.0009]$ for VS-APITE, and the time step $\Delta\tau = t/10^{3}$ for the classical FDM solutions. Moreover, a relatively large parameter $m_0=0.9$ is chosen for both APITE and VS-APITE. }
\label{sec5:Fig1}
\end{figure}
From the plots of the quantum solutions, we find that all the quantum solutions approximate the reference solution. 
The APITE solution provides a good approximation for small $\Delta\tau = 0.00005$, but we confirmed the numerical instability as we take a slightly larger $\Delta\tau = 0.0001$ since $\Delta\tau$ should be smaller than $\pi/(2s_0\lambda_N)$ by the estimate in \ref{subsec:A-1}. On the contrary, the quantum solution with a linear-scheduled VS-APITE \cite{Nishi.2023} (whose time step varies linearly from $\Delta\tau_{\text{min}}=0.0001$ to $\Delta\tau_{\text{max}}=0.0009$) demonstrates a normal approximation, and the AAPITE solution has already approached the reference solution with a relatively large $\Delta\tau=0.0005$. 
The gap between the quantum solutions and the reference solution comes from the discretization error and it shrinks as we take larger $N$. 
On the other hand, the success probability of the previous approximate PITE algorithms is extremely small ($\le 10^{-17}$ for VS-APITE and $\le 10^{-183}$ for APITE at $t=0.1$) compared to the proposed AAPITE $(\approx 10^{-1})$, which is demonstrated in Fig.~\ref{sec5:Fig2}. 
\begin{figure}[htb]
\centering
\resizebox{9cm}{!}{
\includegraphics[keepaspectratio]{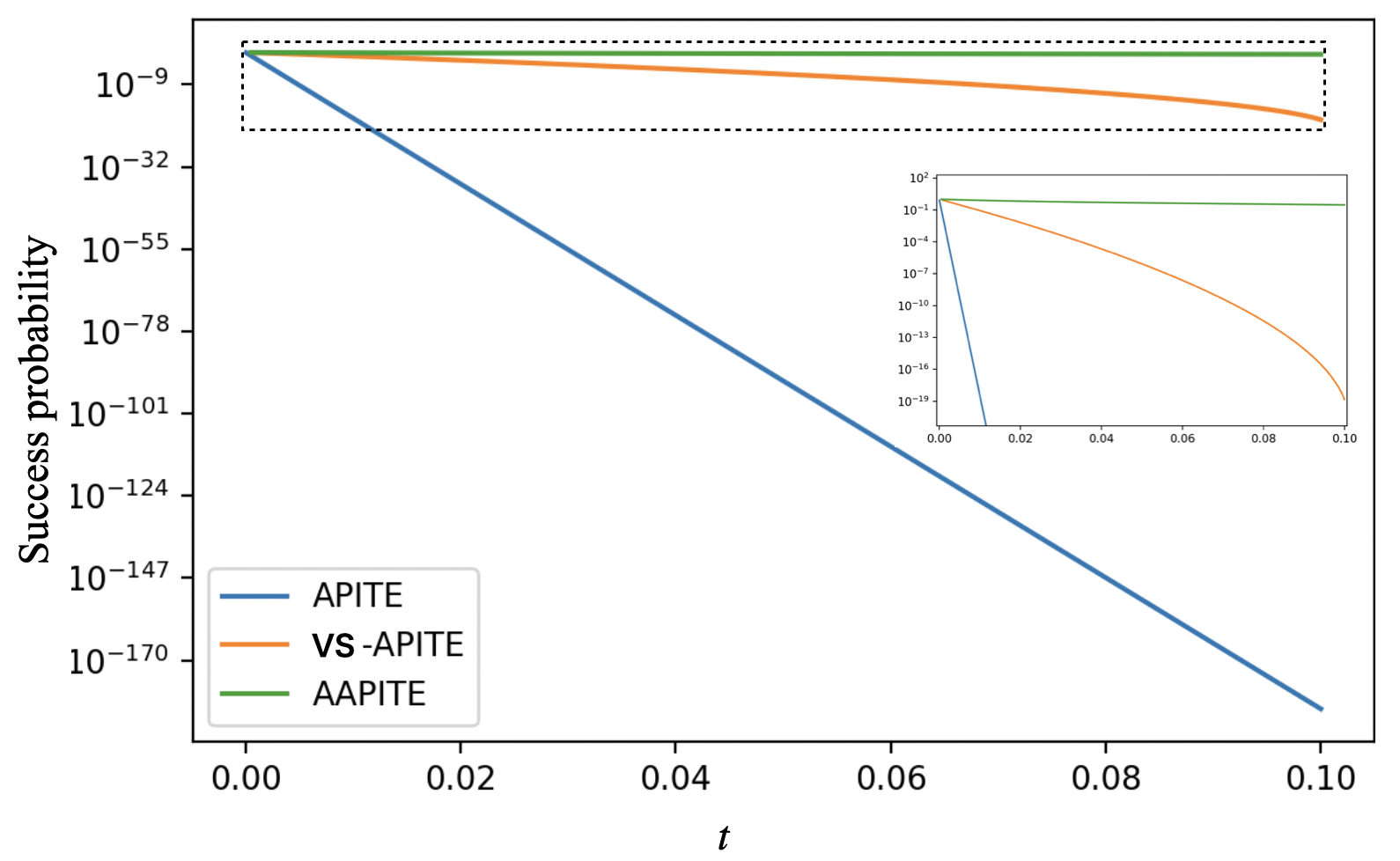}
}
\caption{Success probabilities regarding simulation time $t$ for the quantum solutions by the original approximate PITE (APITE), the variable-time-step approximate PITE (VS-APITE) and the alternative approximate PITE (AAPITE). A zoom-in plot is given to show that the success probability is larger than $10^{-1}$ ($\approx 0.29889$) for AAPITE. }
\label{sec5:Fig2}
\end{figure}

To provide a more quantitative comparison, we choose an exact PITE solution in a similar way as that in Sect.~\ref{subsec:4-2} as the reference solution (so that the discretization error is excluded), and we compare the $\ell^2$-errors between the quantum solutions by these three approximate PITE algorithms and the reference solution under several different choices of $\Delta\tau$ and $m_0$ in Fig.~\ref{sec5:Fig3}. For APITE, we find the $\ell^2$-error becomes larger for larger $m_0$, and we check that the quantum solution is unavailable due to the numerical instability if $m_0\ge 0.92$ and $\Delta\tau\ge 0.0001$. For VS-APITE, we can choose $m_0=0.99$ and the minimum time step $\Delta\tau_{\text{min}}\ge 0.0001$, but larger $m_0$ or larger maximum time step $\Delta\tau_{\text{max}}$ leads to larger $\ell^2$-error. For AAPITE, we find that the quantum solution gives a precision of $10^{-2}$ even when we take $\Delta\tau = 0.0005$, and the $\ell^2$-error further decreases for smaller $\Delta\tau$. In the right subplot of Fig.~\ref{sec5:Fig3}, we confirm numerically the crucial advantage of the AAPITE over the previous approximate PITEs. The success probability for the AAPITE algorithm almost remains constant provided that $\Delta\tau$ is sufficiently small, and this corresponds to the theoretical estimate in Sect.~\ref{subsec:3-2}. Therefore, we conclude the superiority of the AAPITE compared to the previous ones as a quantum solver for the PDEs. 
\begin{figure}[htb]
\centering
\resizebox{15cm}{!}{
\includegraphics[keepaspectratio]{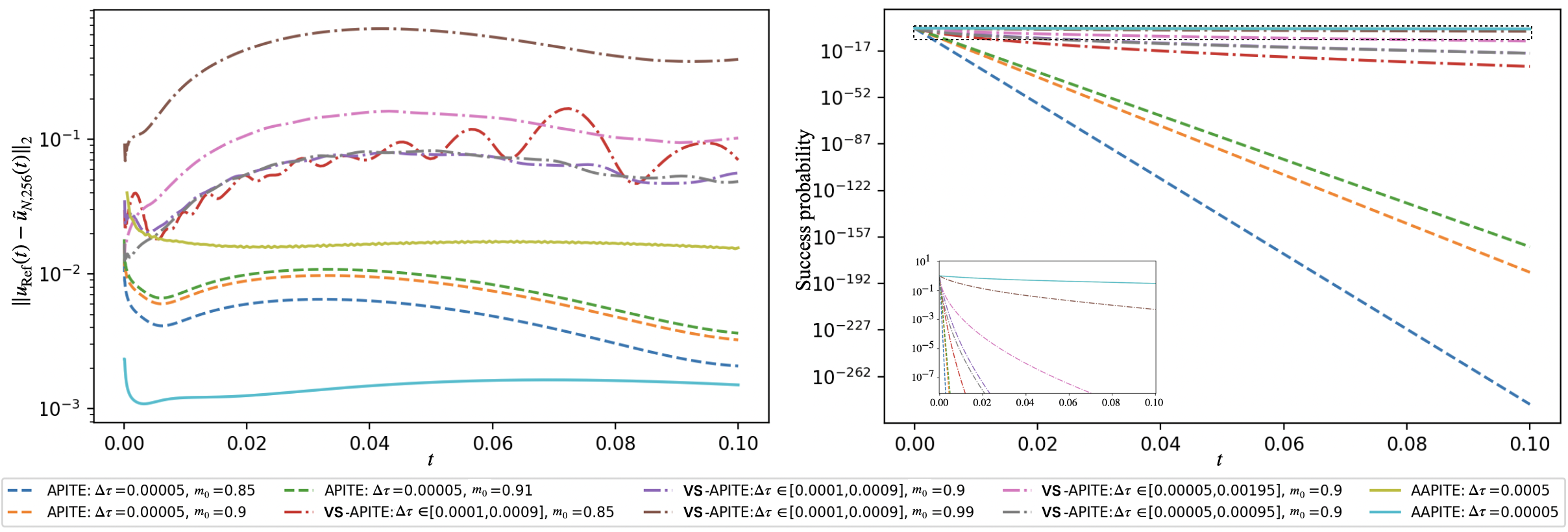}
}
\caption{$\ell^2$-errors between the quantum solutions and the reference solution and corresponding success probabilities under several different choices of $\Delta\tau$ and $m_0$. }
\label{sec5:Fig3}
\end{figure}

\subsection{A best-order QLSA}
\label{subsec:5-4}

Many previous papers apply the QSVT-based method \cite{Gilyen.2019} to solve the quantum linear system problem (QLSP) to obtain the current best asymptotic complexity $O(\kappa\mathrm{polylog}\, N \log(1/\varepsilon))$, e.g. \cite{Lin.2020, Costa.2022, Krovi2023}. Here, $N$ is the matrix size, $\kappa$ is the condition number, and $\varepsilon$ is the error bound of the $\ell^2$-error for the normalized solution. For the time evolution PDEs discussed in this paper, we first discretize the spatial differential operator by some discretization methods to obtain an ordinary differential linear system:
\begin{equation}
\label{appF:eq-gov}
\partial_t \mathbf{v}(t) = -H_N \mathbf{v}(t), \quad t>0, 
\end{equation}
for example, $H_N$ is the discretized matrix in \eqref{sec3:eq-HN}. Without loss of generality, we assume that the minimal real part of the eigenvalues is nonnegative. Then, there are usually two ways to transform the problem into a QLSP. 
The first one is the time-sequential strategy using e.g. the backward Euler (BE) method. The target linear system is
$$
A\mathbf{\tilde{v}}_j := (I+\Delta\tau H_N)\mathbf{\tilde{v}}_j = \mathbf{\tilde{v}}_{j-1},
$$
for each $j=1,2,\ldots$, and we solve the linear system for each time step. Here, $\Delta\tau>0$ is the time step, and $\mathbf{\tilde{v}}_j\approx \mathbf{v}(j\Delta\tau)$ denotes the approximate solution after each time step. 
The other one is the one-shot strategy (e.g. \cite{BCOW17, Krovi2023}), in which all the time steps are incorporated into an enlarged matrix $\tilde A$ by using suitable truncated Taylor expansion, and we only solve the enlarged system once. 
On classical computers, usually the time-sequential strategy is used because it is not efficient to directly deal with extremely large matrices, while the one-shot strategy seems to improve the precision at the cost of using more quantum qubits on quantum computers. 

Next, we discuss the theoretical complexity of the QSVT-based QLSA. Recall that in the QVST-based method, one assumes a basic oracle of the block encoding of $A/s$ where $s$ is a scaling constant depending on the matrix $A$ itself and the employed encoding method. Since the quantum gates are unitary, there is a theoretical lower bound of $s\ge \|A\|$. The conventional way in the previous works is to assume the gate implementation of the oracle and discuss the query complexity (number of queries to the oracle). The best order is given by \cite{Costa.2022}, in which the authors obtained the optimal query complexity $O(\kappa_A\log (1/\varepsilon))$. Here and henceforth, $\kappa_A$ denotes the condition number of the matrix $A$, and $\kappa_{\tilde A}$ is the condition number of $\tilde A$. 

For the time-sequential strategy, we should take a similar time step as the AAPITE algorithm $\Delta\tau = O(\varepsilon)$ (see \ref{subsec:F-2}) such that the error from the time discretization is also controlled by the error bound $\varepsilon$. 
Let $\lambda_{\text{max}}$ denote the maximal eigenvalue (or singular value if $H_N$ is not Hermitian) of the matrix $H_N$. Since the minimal eigenvalue/singular value is a constant independent of $N$, we have
$$
\kappa_A = 1+O(\varepsilon\lambda_{\text{max}}).
$$ 
Recall that for the (second-order) PDEs discussed in this paper, we have $\lambda_{\text{max}} = O(dN^{2/d})$. Henceforth, we omit the minor dependence on $d$, and we conclude that the query complexity for the time-sequential strategy is 
$$
O\left((1/\varepsilon)\log(1/\varepsilon)\right) + O\left(N^{2/d}\log(1/\varepsilon)\right) \le \tilde O\left(N^{2/d} (1/\varepsilon)\right).
$$
Here, the $1/\varepsilon$ factor comes from the iterations of $K=O(1/\Delta\tau)$ time steps, and the notation $\tilde O$ is similar to $O$ but allows an additional $\mathrm{polylog}$ factor. 

For the one-shot strategy, we use the enlarged matrix in \cite{Krovi2023} and denote it by $\tilde A$. To construct the block encoding of $\tilde A$, the time step is chosen to be $\Delta\tau = O(\|H_N\|^{-1})=O(\lambda_{\text{max}}^{-1})$. According to Theorem 4 in \cite{Krovi2023}, and noting that we should take the parameters in \cite{Krovi2023} as $m=p=O(1/\Delta\tau)$, $k=\log(\lambda_{\text{max}}^3/\varepsilon)+O(1)$, the condition number of
the enlarged matrix satisfies 
$$
\kappa_{\tilde A} = O\left(N^{2/d}\sqrt{\log(N^{6/d}/\varepsilon)}\right). 
$$
Therefore, the query complexity of the one-shot strategy is 
$$
\tilde O\left(N^{2/d}\mathrm{polylog}(1/\varepsilon)\right).
$$
Focusing only on the asymptotic order of the key parameters $N$ and $\varepsilon$ (without comparing the prefactor), we find that the one-shot strategy is better than the time-sequential strategy in precision.  

To obtain the gate complexity of the QSVT-based QLSA, one has to consider also the explicit gate implementation of the basic oracle (i.e., the block encoding of $A/s$ or $\tilde A/s$). 
It was believed that there exist efficient gate implementations (up to $O(\mathrm{polylog} N)$) for any sparse matrix, but it was clarified recently that this is true only for well-structured (possibly non-sparse) matrices including the Toeplitz matrix, extended binary tree matrix, etc. with limited numbers of distinct values \cite{Camps.2024, Sunderhauf.2024}.
For this moment, we assume the simple cases (e.g. constant coefficients without potential) such that the gate complexity of the oracle is $O(\mathrm{polylog}\, N)$. Recalling that the gate complexity of the AAPITE algorithm is $O(\varepsilon^{-1-\alpha}\mathrm{polylog}\, N)$ (see Table \ref{sec3:tab1}) where $\alpha\in (0,1]$ comes from the polynomial approximation of the potential term, we conclude that compared to the QSVT-based QLSA, the AAPITE algorithm has an exponential improvement in the matrix size at the cost of a worse behavior regarding the error bound $\varepsilon$. The above discussion is summarized in Table \ref{sec1:tab2} in Sect.~\ref{sec:1}. 
\begin{remark}
We give a key remark on our exponential improvement in the matrix size $N$. Considering a simple case without the potential function so that there is no Suzuki-Trotter error, we can employ the exact PITE circuit to implement the ITE operator. This does not introduce any approximation error, and the gate complexity/circuit depth is $O(dN^{2/d})$, which is the same order as the best-order QLSAs (in 1D case, the gate complexity for the exact PITE is $O(N)$ which is better than $O(N^2)$ by the QLSAs). To reduce the gate complexity regarding $N$, we introduce an approximation, which yields large errors for the large Fourier coefficients. Thanks to the underlying structure of the PDEs, the quantum statevector corresponds to the solution which is usually smooth enough so that the $k$-th Fourier coefficient decreases rapidly as $k\to \infty$. 
This guarantees that our approximation algorithm provides a good approximate solution for a fixed simulation time. 
Such an idea cannot be readily adopted by the QLSAs. The essential reason is the singularity of the inverse operator near the zero point, which means the approximation will lead to larger errors for the smaller eigenvalues. 
Thus, it seems that the dependence on the condition number (i.e., polynomial dependence on $N$) is not avoidable unless the matrix $A$ or $\tilde A$ itself is preconditioned to a modified one with eigenvalues bounded by $\mathrm{polylog}\, N$. 
Therefore, the preconditioning technique is indispensable for the QLSAs, and one has to consider the additional classical and quantum computational costs, which are in general $O(\mathrm{poly}\, N)$, but may be reduced in some specific cases. Some ideas were discussed in \cite{Bagherimehrab.2023pre} and the references therein. 
\end{remark}

\begin{remark}
We have compared the detailed gate count of the QVST-based QLSA with our algorithm for the advection-diffusion equation without the potential function. In this case, the block encoding of $A$ is possible by the explicit quantum circuits in \cite{Camps.2024, Sunderhauf.2024}. Applying the QSPPACK \cite{LinQSP} to calculate the phase factors in the QSVT, we find that the detailed gate count of the time-sequential QSVT-based QLSA is much larger than our AAPITE algorithm for small $\varepsilon\approx 10^{-3}$. 
As for the one-shot QSVT-based QLSA, the condition number for the enlarged matrix is larger than $3000$ even for relatively small $N=64$. Although the theoretical gate complexity suggests a superiority of the QLSA for sufficiently small $\varepsilon$, the derivation of the phase factors is impossible \cite{Lapworth2024} since the classical computational cost for the phase factors scales quadratically regarding the condition number. 
Although many QLSAs give a good theoretical order regarding the error bound, unfortunately, the algorithms are not applicable in practice due to the large condition numbers (for large matrices). 
\end{remark}

\subsection{Comments}
\label{subsec:5-5}

We end up with the following two comments. 

The first comment is on some other quantum algorithms solving the time evolution equation beyond the QLSAs. 
In the QLSAs, usually the basic oracle is the block encoding of normalized discretized differential operators, which is further implemented by the oracles that obtain the indices and values of the entries (see e.g. \cite{Krovi2023, Lin2022}).   
Recently, there are other methods to avoid such a matrix oracle, which use the access to the real-time evolution operation of the (modified) matrix \cite{Sato.2024pre, Bharadwaj.2024pre, Sato.2024, Jin.2022, Jin.2023}. Based on the finite difference method (FDM) for simple heat and wave equations with constant coefficients, such RTE operations can be efficiently constructed by explicit quantum gates. However, for general time evolution equations, e.g. there is a potential function in the PDEs, the RTE operations are not easy to implement and one suggestion is to call the matrix oracles again. 
In particular, we mention the method of the Schr\"odingerisation proposed in \cite{Jin.2022}, the algorithm can be applied to a class of time evolution equations including the diffusion equation with a function potential. 
In general, this algorithm also requires the matrix oracle, and its explicit implementation is discussed in \cite{Guseynov.2024}, which has a quite large prefactor. On the other hand, for the specific problem of the diffusion equation with a function potential, Jin et al. used the Suzuki-Trotter formula to obtain an explicit quantum circuit using two diagonal unitary matrices \cite{Jin.2023}. By using the explicit implementation of the diagonal unitary matrices in \cite{Huang.2024p}, this algorithm seems to give a similar gate complexity as our AAPITE algorithm. 
The rigorous error estimates on the discretization error, Suzuki-Trotter error, and the approximation error were not provided in \cite{Jin.2023}, so that the precise parameter dependence in the prefactor is not yet clarified (the accumulation of the errors and a possible hidden dependence on $N$ in the choice of time step). Therefore, we need to discuss in more detail to find the strong and weak points of both algorithms in a future work.  

The second comment is about the end-to-end quantum speedup including the readout of the quantum state. In the problems of solving PDEs, what we need are the solutions themselves or sometimes certain physical quantities that can be integrated/calculated in terms of the solutions. For all the mentioned quantum algorithms, the normalized solution is prepared in the amplitudes of the corresponding basis, and such an amplitude encoding is indispensable to achieve the quantum speedup regarding the matrix size $N$. After the quantum circuit prepares the quantum state corresponding to the solution, we have to readout the information of the normalized solution at grid points and the norm of the solution, or at least the expectations of some physical observables using the prepared quantum state (i.e., the normalized solution). In either case, this can be done by repeated measurements of the quantum state, and it boils down to the quantum amplitude estimation (QAE) problem. Owing to the stochastic error in readout, we need $\Omega(1/\varepsilon)$ repetitions to estimate the amplitudes with the desired error bound $\varepsilon$, see e.g., \cite{Manzano.2023, Maronese.2024} and the references therein (there is also a prefactor of $O(N)$ in the naive readout if the solution at every grid point is needed). This implies that if the matrix size $N$ is fixed in advance, then the end-to-end quantum algorithms (the initial state preparation is efficient if the underlying function is good enough) do NOT outperform any classical methods with logarithmic dependence on $1/\varepsilon$ (in the asymptotic behavior of the gate complexity as $\varepsilon\to 0$). 
Considering the end-to-end quantum algorithms, the quantum speedup appears in the following two settings:

\noindent (1) speedup regarding the matrix size $N$ when the error bound is relatively large and fixed in advance; 

\noindent (2) total speedup after balancing the improvement in $N$ and the drawback in $\varepsilon$. 

\noindent The first setting refers to the applications that require the treatment of large matrices with moderate precision, and the quantum speedup appears when $N\ge N_{\varepsilon_0}$. Here, $N_{\varepsilon_0}$ is a large constant depending on the fixed error bound $\varepsilon_0>0$. 
In the second setting, we use the speedup in the matrix size $N$ to recover the drawback in the error bound $\varepsilon$ by choosing a suitable $N=N(\varepsilon)$, depending on the error bound, and consider the total complexity with only one parameter $\varepsilon$. This direction includes the applications to high-precision solvers and was discussed in \cite{Childs.2021, Montanaro.2016} for the QLSAs. In particular, Montanaro and Pallister \cite{Montanaro.2016} investigated the conventional FEM, and demonstrated a polynomial speedup for sufficiently large dimension $d>d_0$ where $d_0=p+1$ depends on the polynomial degree $p$ of the FEM basis functions (since $N=h^{-1}=\Omega(\varepsilon^{-d/(p+1)})$ due to the discretization error of the FEM). For the QLSAs, this figures out that the theoretical quantum speedup (i.e., the asymptotic performance as $\varepsilon\to 0$) is polynomial for large $d$ if we need relatively large $N=\Omega(\varepsilon^{-\gamma})$ for some $\gamma>0$, while the speedup would vanish if we only need relatively small $N=O(\mathrm{polylog}(1/\varepsilon))$. Therefore, whether there exists a theoretical quantum speedup or not heavily relies on the relation between the matrix size, depending on the grid/mesh parameter, and the error bound. For the conventional second-order PDEs (Poisson equations, diffusion equations, and wave equations), such a relation can be theoretically estimated by establishing the error estimate for the discretization error (i.e., the difference between the continuous solution and the discretized solution). The error estimate depends on the choice of the discretization methods. However, it is difficult to rigorously estimate the discretization error for the complicated coupled PDE system with nonlinearity. Therefore, in the second setting, even the theoretical quantum speedup is still not clarified for many complex systems in practical applications. 

\section{Extension to system of diffusion equations}
\label{sec:6}

We also consider a coupled system of linear diffusion equations, which can be regarded as the linearized reaction-diffusion system. 
The governing equations are as follows:
\begin{align}
\label{sec5:eq-gov}
\partial_t u^{(j)}(\mathbf{x},t) = a_j \nabla^2 u^{(j)}(\mathbf{x},t) - \sum_{\ell=1}^J p_{j\ell}(\mathbf{x},t) u^{(\ell)}(\mathbf{x},t), \quad \mathbf{x}\in [0,L]^d,\ t>0,
\end{align}
for $j=1,\ldots,J$, where $a_j\in \mathbb{R}_{>0}$, $j=1,\ldots,J$ are positive diffusion coefficients and $(p_{j\ell})_{j,\ell=1,\ldots,J}\in \mathbb{R}^{J\times J}$ denotes the coefficient matrix of reaction. 
Moreover, we include the initial conditions: 
\begin{align}
\label{sec5:eq-init}
u^{(j)}(\mathbf{x},0) = u_0^{(j)}(\mathbf{x}), \quad j=1,\ldots,J.
\end{align}
Let $n\in \mathbb{N}$, $N_0=2^n$, $N=N_0^d$, $[N_0] := \{-N_0/2,-N_0/2+1,\ldots,N_0/2-1\} = [\tilde N_0]-N_0/2$, and introduce the basis
$\phi_{\mathbf{k}}(\mathbf{x}) = \frac{1}{L^{d/2}} \mathrm{exp}\left(\mathrm{i}\frac{2\pi}{L}k_1 x_1\right) \ldots \mathrm{exp}\left(\mathrm{i}\frac{2\pi}{L}k_{d} x_{d}\right)$, $\mathbf{k}\in [N_0]^d$. Let the following linear combination of the bases:
$$
u_{N}^{(j)}(\mathbf{x},t) = \sum_{\mathbf{k}\in [N_0]^d} \hat u_{\mathbf{k}}^{(j)}(t)\phi_{\mathbf{k}}(\mathbf{x}),
$$
solve the governing equations \eqref{sec5:eq-gov} with the initial conditions: 
\begin{align*}
u_N^{(j)}(\mathbf{x},0) = \sum_{\mathbf{k}\in [N_0]^d} (\phi_{\mathbf{k}}, u_0^{(j)}) \phi_{\mathbf{k}}(\mathbf{x}), \quad j=1,\ldots,J,
\end{align*}
which is derived from \eqref{sec5:eq-init}. 
By taking $\mathbf{x} = \mathbf{p}_{\mathbf{l}}$ (see Eq.~\eqref{sec3:eq-x}) for all $\mathbf{l}\in [\tilde N_0]^d$, we apply the real-space grid method (i.e., a special FSM) to obtain a matrix equation:
\begin{align*}
\partial_t 
\begin{pmatrix}
F_N &  & \\
 & \ddots & \\
 &  & F_N
\end{pmatrix}
\begin{pmatrix}
\mathbf{\hat u}^{(1)} \\
\vdots \\
\mathbf{\hat u}^{(J)}
\end{pmatrix}
= & - \begin{pmatrix}
F_N &  & \\
 & \ddots & \\
 &  & F_N
\end{pmatrix}
\begin{pmatrix}
D_N^{(1)} &  & \\
 & \ddots & \\
 &  & D_N^{(J)}
\end{pmatrix}
\begin{pmatrix}
\mathbf{\hat u}^{(1)} \\
\vdots \\
\mathbf{\hat u}^{(J)}
\end{pmatrix} \\
& - \begin{pmatrix}
P_N^{(11)} & \cdots & P_N^{(1J)}\\
\vdots & \ddots & \vdots \\
P_N^{(J1)} & \cdots & P_N^{(JJ)}
\end{pmatrix}
\begin{pmatrix}
F_N &  & \\
 & \ddots & \\
 &  & F_N
\end{pmatrix}
\begin{pmatrix}
\mathbf{\hat u}^{(1)} \\
\vdots \\
\mathbf{\hat u}^{(J)}
\end{pmatrix},
\end{align*}
where 
\begin{align*}
&\mathbf{\hat u}^{(j)} = \sum_{\mathbf{k}\in [\tilde N_0]^d} \hat u_{\mathbf{k}-N_0/2}^{(j)}(t) \ket{\mathbf{k}}, \quad j=1,\ldots,J, \\
&D_N^{(j)} = \sum_{\mathbf{k}\in [\tilde N_0]^d} a_j (2\pi/L)^2\left|\mathbf{k}-\frac{N_0}{2}\right|^2 \ket{\mathbf{k}} \bra{\mathbf{k}}, \quad j=1,\ldots,J, \\
&P_N^{(j\tilde j)}(t) = \sum_{\mathbf{l}\in [\tilde N_0]^d} p_{j\tilde j}(\mathbf{p}_{\mathbf{l}},t) \ket{\mathbf{l}} \bra{\mathbf{l}}
= \sum_{\mathbf{l}\in [\tilde N_0]^d} p_{j\tilde j}\left(\frac{L}{N_0}\mathbf{l}, t\right) \ket{\mathbf{l}} \bra{\mathbf{l}}, \quad j,\tilde j=1,\ldots,J.
\end{align*}
Since we intend to solve the equations on the quantum computers, we should assume $J=2^{j_0}$ for some integer $j_0\in \mathbb{N}$. Otherwise, we need to introduce some dummy equations. 
Henceforth, we consider the simplest case of $J=2$. Then, the above matrix equation yields
\begin{align*}
\partial_t \mathbf{y} &= - \left(I\otimes F_N\right) \left(\ket{0}\bra{0}\otimes D_N^{(1)}+\ket{1}\bra{1}\otimes D_N^{(2)}\right) \left(I\otimes F_N^\dag\right) \mathbf{y} - 
\begin{pmatrix}
P_N^{(11)} & P_N^{(12)}\\
P_N^{(21)} & P_N^{(22)}
\end{pmatrix}
\mathbf{y} =: - H_{2N}(t) \mathbf{y},
\end{align*}
where $\mathbf{y} = \left(I\otimes F_N\right) \mathbf{\hat u}$ with $\mathbf{\hat u} = (\mathbf{\hat u}^{(1)};\mathbf{\hat u}^{(2)})$. 
For a given target time $T>0$, we intend to derive $\mathbf{y}(T)$. 
By choosing sufficiently large $K\in\mathbb{N}$ and $\Delta\tau = T/K$, we integrate the above system and obtain the approximation
\begin{align*}
\mathbf{y}(T) \approx \prod_{j=0}^{K-1} \mathrm{exp}\left(-\Delta\tau H_{2N}(j\Delta\tau)\right) \mathbf{y}(0),
\end{align*}
with $\mathbf{y} = (\mathbf{y}^{(1)};\mathbf{y}^{(2)})$, and 
\begin{align*}
\mathbf{y}^{(j)}(0) = \frac{1}{N_0^{d/2}} \sum_{\mathbf{l}\in [\tilde N_0]^d} 
\left(\sum_{\mathbf{k}\in [N_0]^d} (\phi_{\mathbf{k}}, u_0^{(j)})\mathrm{exp}\left(\mathrm{i}\frac{2\pi}{N_0}\mathbf{k}\cdot\mathbf{l}\right) \right)\ket{\mathbf{l}} 
\approx \left(\frac{L}{N_0}\right)^{d/2} \sum_{\mathbf{l}\in [\tilde N_0]^d} u_0^{(j)}(\mathbf{x}_\mathbf{l}) \ket{\mathbf{l}},
\end{align*}
for $j=1,2$. Now, we are interested in the ITE operator $\mathrm{exp}\left(-\Delta\tau H_{2N}(j\Delta\tau)\right)$ for each $j=1,\ldots,K$. 
By the first-order Suzuki-Trotter formula, we have
\begin{align*}
\mathrm{exp}\left(-\Delta\tau H_{2N}(j\Delta\tau)\right) &\approx \mathrm{exp}\left(-\Delta\tau \left(I\otimes F_N\right) \left(\ket{0}\bra{0}\otimes D_N^{(1)}+\ket{1}\bra{1}\otimes D_N^{(2)}\right) \left(I\otimes F_N^\dag\right)\right) \\
&\quad \mathrm{exp}\left(-\Delta\tau \left(\ket{0}\bra{0}\otimes P_N^{(11)}(j\Delta\tau)+\ket{1}\bra{1}\otimes P_N^{(22)}(j\Delta\tau)\right)\right) \\
&\quad \mathrm{exp}\left(-\frac12\Delta\tau X\otimes \left(P_N^{(12)}(j\Delta\tau)+P_N^{(21)}(j\Delta\tau)\right)\right) \\
&\quad \mathrm{exp}\left(-\frac12\mathrm{i}\Delta\tau Y\otimes \left(P_N^{(12)}(j\Delta\tau)-P_N^{(21)}(j\Delta\tau)\right)\right).
\end{align*}
Here and henceforth, $X$, $Y$, and $Z$ denote Pauli matrices. More precisely, we rewrite the four operations as follows:
\begin{align*}
&\quad \mathrm{exp}\left(-\Delta\tau \left(I\otimes F_N\right) \left(\ket{0}\bra{0}\otimes D_N^{(1)}+\ket{1}\bra{1}\otimes D_N^{(2)}\right) \left(I\otimes F_N^\dag\right)\right) \\
&= \left(\ket{0}\bra{0}\otimes \mathrm{exp}\left(-\Delta\tau F_ND_N^{(1)}F_N^\dag\right)\right)+\left(\ket{1}\bra{1}\otimes \mathrm{exp}\left(-\Delta\tau F_ND_N^{(2)}F_N^\dag\right)\right) \\
&= (I\otimes F_N)\mathrm{exp}\left(-\Delta\tau \left(\ket{0}\bra{0}\otimes D_N^{(1)}+\ket{1}\bra{1}\otimes D_N^{(2)}\right)\right)(I\otimes F_N^\dag), 
\end{align*}
\begin{align*}
&\quad \mathrm{exp}\left(-\Delta\tau \left(\ket{0}\bra{0}\otimes P_N^{(11)}(j\Delta\tau)+\ket{1}\bra{1}\otimes P_N^{(22)}(j\Delta\tau)\right)\right) \\ 
&= \mathrm{exp}\left(\Delta\tau P_{0}\right)\mathrm{exp}\left(-\Delta\tau \left(\ket{0}\bra{0}\otimes P_N^{(11)}(j\Delta\tau)+\ket{1}\bra{1}\otimes P_N^{(22)}(j\Delta\tau)+P_{0}I\right)\right), 
\end{align*}
\begin{align*}
&\quad \mathrm{exp}\left(-\frac12\Delta\tau X\otimes \left(P_N^{(12)}(j\Delta\tau)+P_N^{(21)}(j\Delta\tau)\right)\right) \\
&= (H\otimes I)\mathrm{exp}\left(-\frac12\Delta\tau Z\otimes \left(P_N^{(12)}(j\Delta\tau)+P_N^{(21)}(j\Delta\tau)\right)\right) (H\otimes I) \\
&= \mathrm{exp}\left(\frac12\Delta\tau P_{1}\right) (H\otimes I) \mathrm{exp}\left(-\frac12\Delta\tau \left(Z\otimes \left(P_N^{(12)}(j\Delta\tau)+P_N^{(21)}(j\Delta\tau)\right)+P_{1}I\right)\right) (H\otimes I), 
\end{align*}
\begin{align*}
&\quad \mathrm{exp}\left(-\frac12\mathrm{i}\Delta\tau Y\otimes \left(P_N^{(12)}(j\Delta\tau)-P_N^{(21)}(j\Delta\tau)\right)\right) \\
&= (W\otimes I) \mathrm{exp}\left(-\frac12\mathrm{i}\Delta\tau Z\otimes \left(P_N^{(12)}(j\Delta\tau)-P_N^{(21)}(j\Delta\tau)\right)\right) (W^\dag\otimes I),
\end{align*}
where 
\begin{align*}
P_{0}(j\Delta\tau) &= -\min\left\{\min_{\mathbf{x}\in [0,L]^d} p_{11}(\mathbf{x},j\Delta\tau), \min_{\mathbf{x}\in [0,L]^d} p_{22}(\mathbf{x},j\Delta\tau)\right\}, \\
P_{1}(j\Delta\tau) &= \max_{\mathbf{x}\in [0,L]^d} \left|p_{12}(\mathbf{x},j\Delta\tau)+p_{21}(\mathbf{x},j\Delta\tau)\right|,
\end{align*}
and 
$$
H = \frac{1}{\sqrt{2}}\begin{pmatrix}
1 & 1 \\
1 & -1
\end{pmatrix}, \quad
W = \frac{1}{\sqrt{2}}\begin{pmatrix}
1 & \mathrm{i} \\
\mathrm{i} & 1
\end{pmatrix} = R_x(-\pi/2).
$$
Recalling that $D_N^{(j)}$, $j=1,2$ and $P_N^{(j\tilde j)}$, $j,\tilde j=1,2$ are both diagonal matrices, the above calculations imply 
\begin{align*}
\mathrm{exp}\left(-\Delta\tau H_{2N}(j\Delta\tau)\right) 
&\approx \mathrm{exp}\left(\Delta\tau(P_0+P_1/2)\right) (I\otimes F_N) \mathrm{exp}\left(-D_1\right)(I\otimes F_N^\dag) \mathrm{exp}\left(-D_2\right) \\
&\quad (H\otimes I) \mathrm{exp}\left(-D_3\right) (HW\otimes I)\mathrm{exp}\left(-\mathrm{i}D_4\right) (W^\dag\otimes I),
\end{align*}
where $D_1,D_2,D_3$ and $D_4$ are all diagonal matrices with non-negative entries. Therefore, we can apply the proposed AAPITE operator for three times and a RTE operator once, all for real-valued diagonal matrices, to approximately implement the ITE operator in each time step. The quantum circuit in one time step is demonstrated in Fig.~\ref{sec5:fig3}. 
\begin{figure}
\centering
\resizebox{15cm}{!}{
\includegraphics[keepaspectratio]{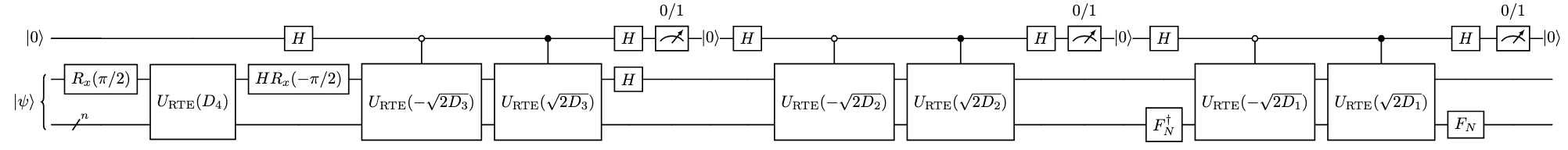}
}
\caption{Quantum circuit in each time step for solving a system of diffusion equations using the AAPITE. Here, three AAPITE circuits are employed. The first one is for the diffusion, and the other two are for the diagonal components and the symmetric part of the non-diagonal components of the reaction matrix, respectively. }
\label{sec5:fig3}
\end{figure}

If we can retrieve the solutions after each time step by measurements, then we can employ the solutions to construct the Hamiltonian in the next time step, which means we can deal with the nonlinear reaction-diffusion equations. 
For example, letting $J=2$ and $d=2$ in Eq.~\eqref{sec5:eq-gov}, we have
\begin{align*}
\partial_t u^{(1)}(\mathbf{x},t) = a_1 \nabla^2 u^{(1)}(\mathbf{x},t) - p_{11}(\mathbf{x},t) u^{(1)}(\mathbf{x},t) - p_{12}(\mathbf{x},t) u^{(2)}(\mathbf{x},t), \ \mathbf{x}\in [0,L]^2,\ t>0, \\
\partial_t u^{(2)}(\mathbf{x},t) = a_2 \nabla^2 u^{(2)}(\mathbf{x},t) - p_{21}(\mathbf{x},t) u^{(1)}(\mathbf{x},t) - p_{22}(\mathbf{x},t) u^{(2)}(\mathbf{x},t), \ \mathbf{x}\in [0,L]^2,\ t>0. 
\end{align*}
Let $p_{ij}$, $i,j=1,2$ be some functions of $u^{(1)}$ and $u^{(2)}$. 
Then, we obtain the nonlinear reaction-diffusion equations. 
In the following context, we discuss two numerical examples. 

\noindent \underline{\bf Example 1} Turing Pattern formulation

Let $a_1=0.005$, $a_2=0.1$, $L=2\pi$, $p_{11}=(u^{(1)})^2-0.6$, $p_{12}=1$, $p_{21}=-1.5$, $p_{22}=2$. 
We consider the initial conditions of four point sources described by the linear combination of the Gaussian functions:
$$
u_0^{(1)}(\mathbf{x}) = u_0^{(2)}(\mathbf{x}) = A\sum_{j=1}^4 \mathrm{exp}\left(-\frac{|\mathbf{x}-\mathbf{x}_0^{(j)}|^2}{2\sigma^2}\right).
$$
Here, $\sigma^2=0.05$, $\mathbf{x}_0^{(1)}=(\pi/2,\pi/2)^\mathrm{T}$, $\mathbf{x}_0^{(2)}=(\pi/2,3\pi/2)^\mathrm{T}$, $\mathbf{x}_0^{(3)}=(3\pi/2,\pi/2)^\mathrm{T}$, $\mathbf{x}_0^{(4)}=(3\pi/2,3\pi/2)^\mathrm{T}$, and $A$ is a normalized parameter such that the $\ell^2$-norm of the function at $N=N_0^2$ grid points is normalized to $1$. 
We provide a simulation result with $N_0=2^4$ in Fig.~\ref{appE:Fig1}. 
\begin{figure}[htb]
\centering
\resizebox{14cm}{!}{
\includegraphics[keepaspectratio]{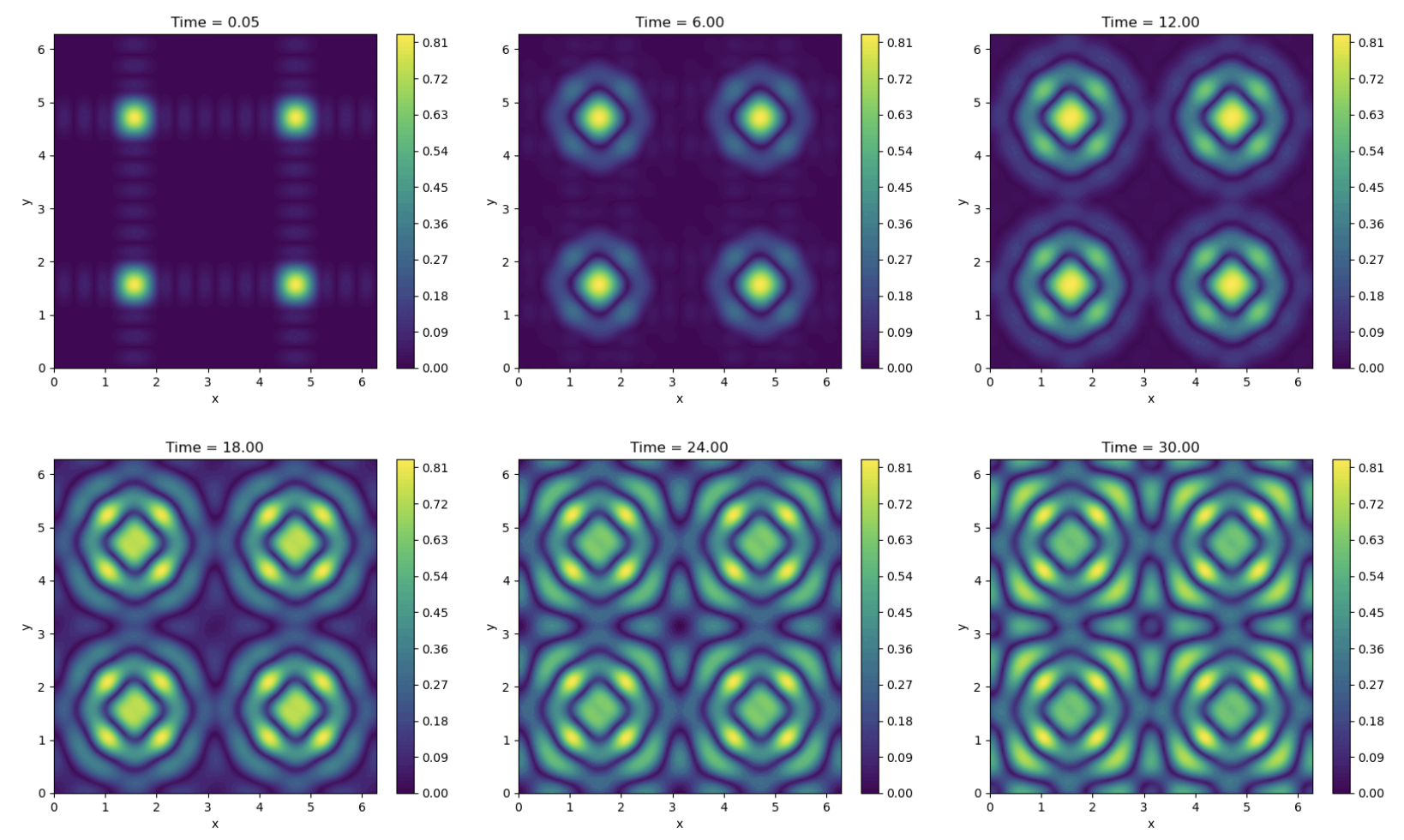}
}
\caption{Time evolution of the $(2^4,2^7)$-quantum solution using the AAPITE operator with $\Delta\tau=0.05$ for the first component of a reaction-diffusion system. The quantum solutions at $t=0.05, 6, 12, 18, 24, 30$ are visualized using the colormap plots. In each subplot, the $x$ and $y$ axes describe the location in $x_1$-direction and $x_2$-direction, respectively. }
\label{appE:Fig1}
\end{figure}
It demonstrates the formulation of ring pattern around the initial point sources. The simulation is done by Qiskit emulator, and we directly obtain the success probability by the statevector after each PITE step. 
In this example, the total success probability is extremely small ($\approx 10^{-21}$). 
According to Sect.~\ref{subsec:3-2}, the success probability for this example is proportional to $\mathrm{exp}\left(-2T(P_0+P_1/2)\right)=\mathrm{exp}\left(-1.7T\right)$, this explains the extremely small success probability for this simulation. 

\noindent \underline{\bf Example 2} Burgers' equation

Let $a_1=a_2=0.05$, $L=2\pi$, $p_{11}=\partial_{x_1} u^{(1)}$, $p_{12}=\partial_{x_2} u^{(1)}$, $p_{21}=\partial_{x_1} u^{(2)}$, $p_{22}=\partial_{x_2} u^{(2)}$. Then, Eq.~\eqref{sec5:eq-gov} becomes (viscous) Burgers' equation with viscosity $\nu = 0.05$: 
$$
\partial_t \mathbf{u}(\mathbf{x},t) = \nu \nabla^2 \mathbf{u}(\mathbf{x},t) - \mathbf{u}\cdot\nabla \mathbf{u}(\mathbf{x},t), \quad \mathbf{x}\in [0,L]^2,\ t>0. 
$$
Here, the spatial derivatives of the solutions are calculated from the solutions using the fast Fourier transform on a classical computer. 
As a simple numerical example, we consider the initial conditions:
\begin{align*}
& u_0^{(1)}(x_1,x_2) = \sin(2x_1), && x_1,x_2\in [0,2\pi],\\
& u_0^{(2)}(x_1,x_2) = 0.5, && x_1,x_2\in [0,2\pi].
\end{align*}
Under these initial conditions, the 2D Burgers' equation reduces to a 1D Burgers' equation, and the simulation result is shown in Fig.~\ref{appE:Fig2}. 
\begin{figure}[htb]
\centering
\resizebox{14cm}{!}{
\includegraphics[keepaspectratio]{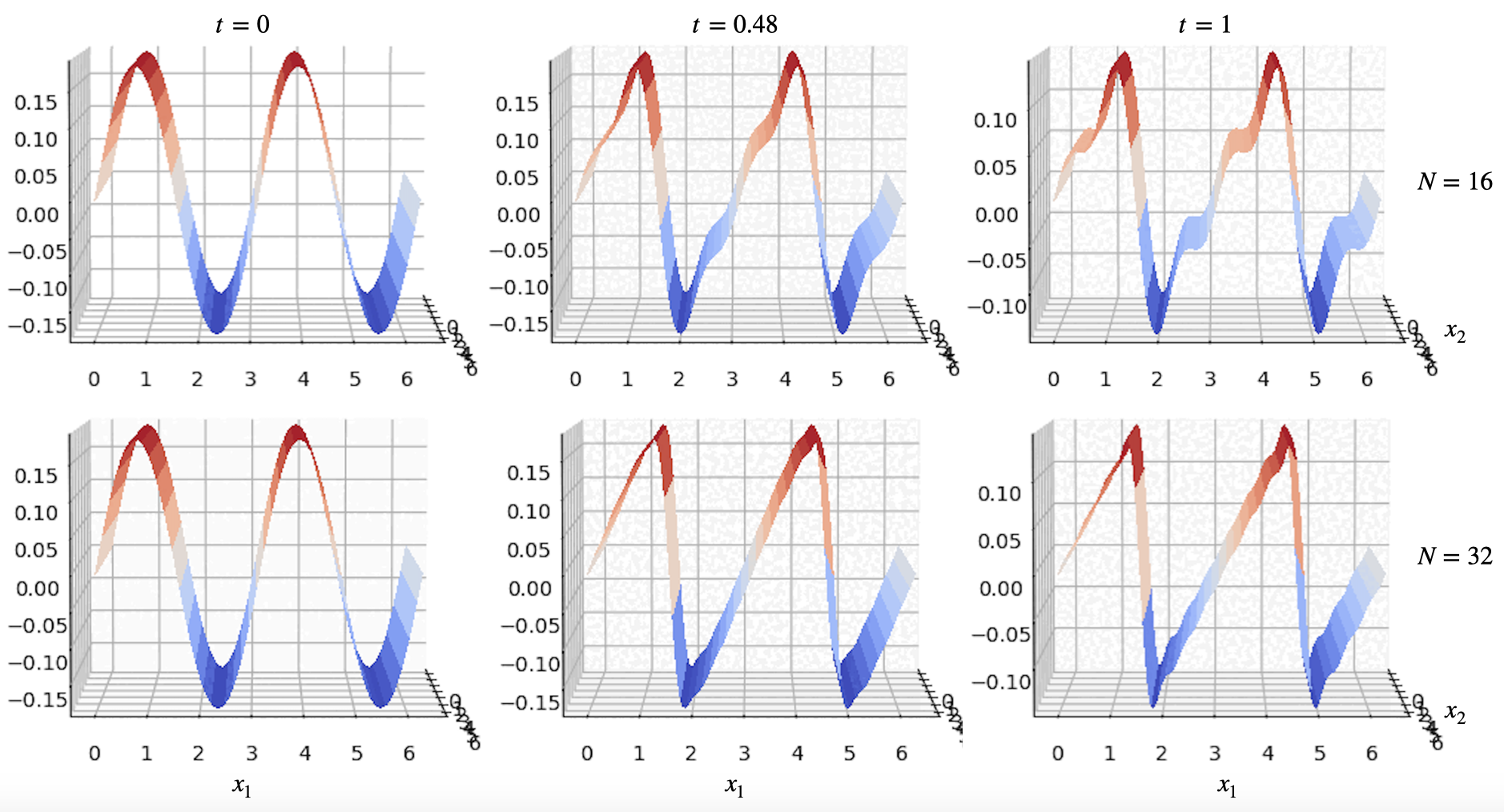}
}
\caption{Time evolution of the $(N_0,2^7)$-quantum solution using the AAPITE operator with $\Delta\tau=0.04$ for the first component of the viscous Burgers' equation. The quantum solutions at $t=0, 0.48, 1$ are visualized using 3D plots from left to right. We take $N_0=16$ in the first row and $N_0=32$ in the second row. Despite of $N_0=16$ or $N_0=32$, the success probability is about $2\times 10^{-4}$. }
\label{appE:Fig2}
\end{figure}
We find that the initial sine wave becomes a sawtooth wave as time goes by. 
Moreover, the quantum solution has better performance for a larger grid parameter $N$ as we observe less numerical oscillation in the $(2^5,2^7)$-quantum solution than that in the $(2^4,2^7)$-quantum solution. 
The success probability is about $2\times 10^{-4}$, which is relatively large compared to the extremely small success probability in the previous example. 
This implies nonlinearity does not necessarily lead to a small success probability. According to the theoretical analysis in Sect.~\ref{subsec:3-2}, we find that the success probability mainly depends on the change of the $\ell^2$-norm of the solution compared to the initial condition. Therefore, quantum solvers may be more expensive (large number of repetitions and measurements) than their classical counterparts in the simulations of highly dissipative systems, whatever they are linear or nonlinear. 

In the simulation of Burgers' equation, a further observation is that smaller viscosity $\nu$ yields severer numerical oscillation (this is also a crucial problem in the classical simulations), and hence, we have to take larger grid parameter $N$ and smaller time step $\Delta\tau$ to relieve the oscillation.  

\section{Summary and concluding remarks}

In this paper, we discuss a new approximate PITE and its application to solving the time evolution differential equations using examples of the advection-diffusion-reaction equations. 

First, we proposed an alternative approximate PITE (AAPITE) operator, which approximates the ITE operator $m_0 \mathrm{exp}\left(-\Delta\tau \mathcal{H}\right)$ with $m_0=1$. We obtained analytical estimations for both the $\ell^2$-error and its success probability. Compared to the original approximate PITE in \cite{Kosugi.2022} where $m_0$ is strictly smaller than one, the AAPITE relieves the restriction in the time step $\Delta\tau$. Moreover, it avoids the problem of the rapid vanishing of the success probability. 
For the convenience of applications, the $\ell^2$-error and the success probability in the case of the Hamiltonian coming from the advection-diffusion-reaction equations were also estimated. 

Second, as for the gate implementation of the AAPITE, we need to employ a basic oracle of the RTE operator for the squared root Hamiltonian, that is, $U_{\text{RTE}}(\sqrt{2\Delta\tau \mathcal{H}})$. Although the squared root Hamiltonian is usually more complicated than the original one in general, one can use the general Hamiltonian simulation techniques in e.g. \cite{Gilyen.2019, BCK15}. 
In this paper, for the advection-diffusion-reaction equation, we provided an explicit gate implementation for the basic oracle based on the real-space grid method using the efficient quantum circuits for the diagonal unitary matrices \cite{Kosugi.2023, Huang.2024p}. Applying the techniques in \cite{Huang.2024p}, we can implement the AAPITE operator by $O(\mathrm{poly}\, n)$ two-qubit gates with only $2$ ancillary qubits (one for the AAPITE and the other for the diagonal unitary matrices) in 1D cases and $2n+2+\lceil\log_2 d\rceil$ ancillary qubits (another $2n+\lceil\log_2 d\rceil$ qubits for the distance register) in multi-dimensional cases. The required number of ancillary qubits is the least one among all the existing quantum algorithms to our best knowledge. 

Third, we provided 1D/2D numerical examples for solving the advection-diffusion-reaction equations by the proposed algorithm. The simulations were executed by Qiskit (a quantum gate-based emulator). The numerical results demonstrated good agreement with the true/reference solutions. Moreover, based on the mathematical theory for the real-space grid method (\ref{appB}), we suggested pre-processing/post-processing steps on classical computers to derive the solutions on a finer grid (without loss of precision) when the (quantum) computational grid is relatively small due to the limited capacity of the near-term quantum computers. Furthermore, we divided the error into the discretization error, the Suzuki-Trotter error and the (PITE) approximation error, and confirmed the important dependence on the grid parameter $N$ and the time step $\Delta\tau$ numerically (see \ref{appE}), which coincides with the theoretical estimations. 

Fourth, we compared our algorithm based on the AAPITE with the ones in the previous works \cite{Kosugi.2022, Nishi.2023}. The result implies that the original approximate PITE \cite{Kosugi.2022} and even the improved version \cite{Nishi.2023} with varying time steps are not suitable for the task of solving differential equations since they undergo the rapid decrease of the success probability owing to $m_0<1$. 
Besides, we also compared the AAPITE with a (time-sequential) QLSA (the HHL algorithm) and a VQA in \cite{Ingelmann.2024} for a specified advection-diffusion equation, and concluded that our quantum algorithm based on the real-space grid method (a specific FSM) outperforms the (time-sequential) QLSAs using the sparse discretized matrices by the backward Euler FDM. If the boundary condition is not important or is periodic, then the discretization based on the FSM is recommended because it has better precision than many other discretization methods (FDM, FEMs, etc.) as $N$ increases and can be explicitly and efficiently implemented on quantum computers. 
In fact, the AAPITE circuit can also be applied to other discretized matrices deriving from the FDMs, FEMs, etc. In such cases, we need an oracle implementing the RTE operator for the square root of the discretized Hamiltonian, which is possible by the general techniques in \cite{Gilyen.2019, BCK15} although the explicit gate count/circuit depth of the oracle is not yet clarified in general cases. 
Moreover, we discussed the implementation with the QSVT-based QLSA, one of the best-order QLSAs, and found that our proposed algorithm achieved an exponential improvement regarding the matrix size at the cost that the dependence on the reciprocal of the error bound is polynomial. Furthermore, the practical implementation of the QSVT-based QLSA is not applicable because the classical computation of the phase factors in the QSVT becomes unstable and intractable for a large condition number. The QLSAs require an efficient preconditioning that will introduce additional classical/quantum costs, while the preconditioning is not necessary for our algorithm.

Finally, we extended our algorithm to the (linearized) diffusion-reaction systems. For the quantum algorithm, we need only $dn+\log_2 J$ qubits to simulate the system consisting of $J$ equations, which seems more efficient than its classical counterpart. In the simplest case of $J=2$, we provided the detailed implementation of the system using three AAPITE circuits in each time step. Moreover, if we allow the measurement of the solution after each time step, we can apply such information to update the Hamiltonian in the next time step, so that nonlinear systems can be treated. 
Simulations of Turing Pattern formulation and Burgers' equation were provided for potential applications if efficient statevector preparation/readout techniques would be developed in the future. 

One future topic is the application to the problem of ground state/eigenstate preparation. We know that the ITE operator of a given Hamiltonian $\mathcal{H}$ is useful in deriving its ground state since the overlaps regarding the excited states tend to zero exponentially fast as the time-like parameter increases. 
Different from solving the numerical solutions to differential equations, for the ground state preparation problem, the evolution of the excited states is not necessarily to obey the ITE operator, and hence, one can use an energy shift technique to avoid the rapid decay of the success probability \cite{Nishi.2023, Kosugi.2023} in the original approximate PITE. Although the AAPITE works for such a problem, it remains open whether it would outperform the original approximate PITE \cite{Kosugi.2022} and its derivative \cite{Nishi.2023} in such an application. 

Another future work is the insight comparisons to the QSVT-based QLSA \cite{Krovi2023}, the LCU-based QLSA \cite{Childs.2017}, and the recently proposed Schr\"odingerisation method \cite{Jin.2022, Jin.2023}. The explicit gate count and circuit depth should be compared in practical settings. This is indispensable if we intend to find which is the best quantum solver for the linear/nonlinear PDEs or how to choose a suitable one for a given problem.  


\section*{Acknowledgements}
This work was supported by Japan Society for the Promotion of Science (JSPS) KAKENHI under Grant-in-Aid for Scientific Research No.21H04553, No.20H00340, and No.22H01517. 
This work was partially supported by the Center of Innovations for Sustainable Quantum AI (JST Grant number JPMJPF2221).

\appendix
\section{Previous approximate PITE [Kosugi et al. 2022]}
\label{appA}

We provide detailed analysis for an approximate PITE that is equivalent to the one in \cite{Kosugi.2022}. 
Take $\Theta := \arccos(m_0 \mathrm{exp}\left(-\Delta\tau \mathcal{H}\right))$ with an arbitrarily fixed parameter $m_0\in (0,1)$. We denote the real-valued function $g_0(x) := \arccos(m_0\mathrm{exp}\left(-x\right))$, $x\ge 0$. 
By the Taylor expansion around $x_0\ge 0$ up to first order, we obtain 
\begin{align*}
g_0(x) = g_0(x_0) + g_0^\prime(x_0)(x-x_0) + O(|x-x_0|^2). 
\end{align*}
By noting 
$$
\lim_{x\to 0} g_0(x) = \arccos(m_0) =: \theta_0, \quad \lim_{x\to 0} g_0^\prime(x) = \frac{m_0}{\sqrt{1-m_0^2}} =: s_0,
$$
and taking $x_0=0$, we reach the approximation
\begin{align*}
\Theta = g_0(\Delta\tau \mathcal{H}) = \theta_0 I + s_0\Delta\tau \mathcal{H} + O((\Delta\tau)^2) \approx \theta_0 I + s_0\Delta\tau \mathcal{H}.
\end{align*}
In other words, 
\begin{align*}
m_0\mathrm{exp}\left(-\Delta\tau \mathcal{H}\right) = \cos(g_0(\Delta\tau \mathcal{H}))\approx \cos(\theta_0 I + s_0\Delta\tau \mathcal{H}). 
\end{align*}
Here, we mention that $m_0\not= 1$, otherwise, the above Taylor expansion does not converge. 
Since 
\begin{align*}
U_{\text{RTE}}(\pm (\theta_0 I + s_0\Delta\tau \mathcal{H})) = U_{\text{RTE}}(\pm\theta_0 I) U_{\text{RTE}}(\pm s_0\Delta\tau \mathcal{H}), 
\end{align*}
and the four RTE operators $U_{\text{RTE}}(\pm\theta_0 I)$, $U_{\text{RTE}}(\pm s_0\Delta\tau \mathcal{H})$ are commutable, from Fig.~\ref{sec2:fig1}, we obtain the quantum circuit in Fig.~\ref{appC2:fig3}, in which we used the relation
$$
\mathrm{exp}\left(\mathrm{i}\theta_0\right)\ket{0}\bra{0} + \mathrm{exp}\left(-\mathrm{i}\theta_0\right)\ket{1}\bra{1} = R_z(-2\theta_0).
$$
\begin{figure}
\centering
\resizebox{12cm}{!}{
\includegraphics[keepaspectratio]{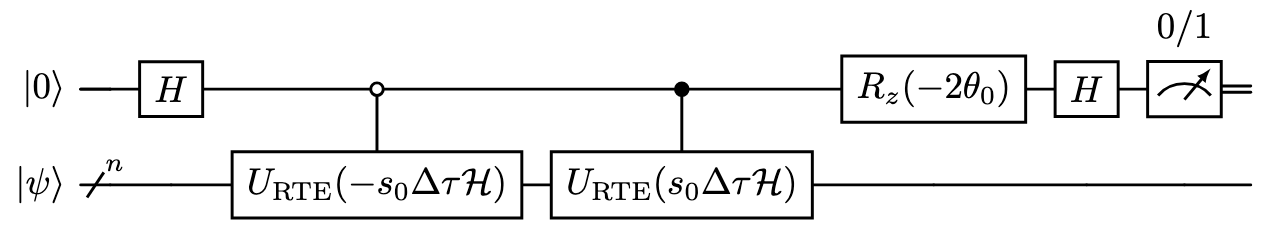}
}
\caption{A quantum circuit for approximate PITE. This quantum circuit is equivalent to Fig. 1(b) in \cite{Kosugi.2022}. }
\label{appC2:fig3}
\end{figure}
In the sense of possibly applying two single-qubit gates $Q$ and $Q^\dag$ to the ancillary qubit, the proposed circuit in Fig.~\ref{appC2:fig3} is equivalent to Fig. 1(b) in \cite{Kosugi.2022}. 

Before estimating the error and the success probability, we mention the quantum resources used in a single PITE step. By the construction, it is clear that the main request of quantum resources comes from the part of controlled RTE operator of a given Hermite operator $\mathcal{H}$. In the practical applications with the first-quantized formulation, admitting the error from the Suzuki-Trotter expansion, it is well-known that such controlled RTE operators can be implemented efficiently in gate complexity $O(\mathrm{polylog}\, N)$ \cite{Kosugi.2022}. Considering the dependence of $N$, this is an exponential improvement compared to the exact implementation mentioned in Sect.~\ref{sec:2}. 

\subsection{Error estimate}
\label{subsec:A-1}

For simplicity, we consider a $K$-step approximate PITE for $K\in \mathbb{N}$ and assume that the eigenvalues of $\mathcal{H}$: $\{\lambda_k\}_{k=0}^{N-1}$ are non-negative without loss of generality. 
As long as we know the smallest eigenvalue of $\mathcal{H}$: $\lambda_{min}$, we can shift the operator $\mathcal{H}$ by $\lambda_{\ast}I$ for any $\lambda_{\ast}\le \lambda_{min}$ and consider the shifted operator $\widetilde{\mathcal{H}} := \mathcal{H} - \lambda_{\ast} I$. Thus, 
$$
\mathrm{exp}\left(-\Delta\tau \mathcal{H}\right) = \mathrm{exp}\left(-\Delta\tau \lambda_{\ast}\right) \mathrm{exp}\left(-\Delta\tau \widetilde{\mathcal{H}}\right),
$$
and we can moreover assume that the smallest eigenvalue of the shifted operator is exactly zero, provided that we know exactly the smallest eigenvalue of the original operator $\mathcal{H}$. 

For an approximate PITE circuit, we have error due to the approximation. Here, we introduce the error after normalization, which is defined by
\begin{align*}
\mathrm{Err}_2 := \left\|\frac{\cos^K(\theta_0+s_0\Delta\tau \mathcal{H})\ket{\psi}}{\|\cos^K(\theta_0+s_0\Delta\tau \mathcal{H})\ket{\psi}\|}-\frac{\mathrm{exp}\left(-K\Delta\tau \mathcal{H}\right)\ket{\psi}}{\|\mathrm{exp}\left(-K\Delta\tau \mathcal{H}\right)\ket{\psi}\|}\right\|,
\end{align*}
where $\ket{\psi}$ is an initial state. 
Moreover, we estimate by the triangle inequality $\|a-b\|\le \|a\|+\|b\|$ that
\begin{align*}
\mathrm{Err}_2 &\le \left\|\frac{\cos^K(\theta_0+s_0\Delta\tau \mathcal{H})\ket{\psi}}{\|\cos^K(\theta_0+s_0\Delta\tau \mathcal{H})\ket{\psi}\|}-\frac{m_0^K\mathrm{exp}\left(-K\Delta\tau \mathcal{H}\right)\ket{\psi}}{\|\cos^K(\theta_0+s_0\Delta\tau \mathcal{H})\ket{\psi}\|}\right\| \\
&\quad + \left\|\frac{m_0^K\mathrm{exp}\left(-K\Delta\tau \mathcal{H}\right)\ket{\psi}}{\|\cos^K(\theta_0+s_0\Delta\tau \mathcal{H})\ket{\psi}\|}-\frac{\mathrm{exp}\left(-K\Delta\tau \mathcal{H}\right)\ket{\psi}}{\|\mathrm{exp}\left(-K\Delta\tau \mathcal{H}\right)\ket{\psi}\|}\right\| \\
&\le 2\frac{\|\cos^K(\theta_0+s_0\Delta\tau \mathcal{H})\ket{\psi}-m_0^K\mathrm{exp}\left(-K\Delta\tau \mathcal{H}\right)\ket{\psi}\|}{\|\cos^K(\theta_0+s_0\Delta\tau \mathcal{H})\ket{\psi}\|}.
\end{align*}
The denominator is the square root of the success probability: 
$$
\mathbb{P}_2(\ket{0}) := \|\cos^K(\theta_0+s_0\Delta\tau \mathcal{H})\ket{\psi}\|^2.  
$$
If we denote the numerator by 
$$
\widetilde{\mathrm{Err}_2} := \|\cos^K(\theta_0+s_0\Delta\tau \mathcal{H})\ket{\psi}-m_0^K\mathrm{exp}\left(-K\Delta\tau \mathcal{H}\right)\ket{\psi}\|,
$$
then again by the triangle inequality $\|a-b\|\ge \|a\|-\|b\|$, we obtain
\begin{align}
\label{appC2:eq1}
\|m_0^K\mathrm{exp}\left(-K\Delta\tau \mathcal{H}\right)\ket{\psi}\| - \widetilde{\mathrm{Err}_2} 
\le \|\cos^K(\theta_0+s_0\Delta\tau \mathcal{H})\ket{\psi}\| 
\le \|m_0^K\mathrm{exp}\left(-K\Delta\tau \mathcal{H}\right)\ket{\psi}\| + \widetilde{\mathrm{Err}_2}.
\end{align}
Thus, we find that the denominator goes to $m_0^{K}\|\mathrm{exp}\left(-K\Delta\tau \mathcal{H}\right)\ket{\psi}\|$ as $\widetilde{\mathrm{Err}_2}\to 0$, and hence the success probability 
\begin{align*}
\mathbb{P}_2(\ket{0}) \to m_0^{2K}\|\mathrm{exp}\left(-K\Delta\tau \mathcal{H}\right)\ket{\psi}\|^2 \quad \text{as }\ \widetilde{\mathrm{Err}_2}\to 0. 
\end{align*}
Therefore, estimating the upper bound of the numerator $\widetilde{\mathrm{Err}_2}$ is essential. 
Here, we adopt the non-decreasing ordering $\lambda_0\le \lambda_1 \le \cdots \le \lambda_{N-1}$ of the eigenvalues, and we denote the corresponding eigenfunctions by $\{\psi_k\}_{k=0}^{N-1}$, which forms a complete orthonormal basis of $\mathbb{C}^N$. 
Let $N_\delta(\psi)$ and $\mathcal{K}_\delta(\psi)$ be defined by Eqs.~\eqref{sec2:eq1a} and \eqref{sec2:eq1b}, respectively. 
Then, we can calculate
\begin{align*}
&\quad \left\|\cos^K(\theta_0+s_0\Delta\tau \mathcal{H})\ket{\psi}-m_0^K\mathrm{exp}\left(-K\Delta\tau \mathcal{H}\right)\ket{\psi}\right\|^2 \\
&= \sum_{k=0}^{N-1} |\langle \psi_k|\psi\rangle|^2 \left|\cos^K(\theta_0+s_0\Delta\tau \lambda_k)-(m_0\mathrm{exp}\left(-\Delta\tau \lambda_k\right))^K\right|^2 \\
&= m_0^{2K}\left(\sum_{k\in \mathcal{K}_{\delta}} + \sum_{k\in [\tilde N]\setminus \mathcal{K}_{\delta}}\right) |\langle \psi_k|\psi\rangle|^2 \left|(\cos(\theta_0+s_0\Delta\tau \lambda_k)/m_0)^K-(\mathrm{exp}\left(-\Delta\tau \lambda_k\right))^K\right|^2 \\
&\le m_0^{2K}\Bigg(\sum_{k\in \mathcal{K}_{\delta}} |\langle \psi_k|\psi\rangle|^2 \Bigg|\left(\cos(\theta_0+s_0\Delta\tau \lambda_k)/m_0-\mathrm{exp}\left(-\Delta\tau \lambda_k\right)\right) \\
&\quad \left(\sum_{j=0}^{K-1}(\cos(\theta_0+s_0\Delta\tau \lambda_k)/m_0)^j(\mathrm{exp}\left(-\Delta\tau \lambda_k\right))^{K-1-j}\right)\Bigg|^2 + 4\delta^2\Bigg)\\
&\le m_0^{2K}\left(K^2 \sum_{k\in \mathcal{K}_{\delta}} |\langle \psi_k|\psi\rangle|^2 \left|\cos(\theta_0+s_0\Delta\tau \lambda_k)/m_0-\mathrm{exp}\left(-\Delta\tau \lambda_k\right)\right|^2 + 4\delta^2\right)\\
&= m_0^{2K}\left(K^2 \sum_{k\in \mathcal{K}_{\delta}} |\langle \psi_k|\psi\rangle|^2 \left|\cos(s_0\Delta\tau\lambda_k)-m_0^{-1}\sqrt{1-m_0^2}\sin(s_0\Delta\tau\lambda_k)-\mathrm{exp}\left(-\Delta\tau \lambda_k\right)\right|^2 + 4\delta^2\right).
\end{align*}
Here, in the second last line, we used an assumption that $\Delta\tau$ is small enough such that $s_0\Delta\tau\lambda_{N_{\delta}}\in [0,\pi/2]$, which indicates a restriction: 
$$
|\cos(\theta_0+s_0\Delta\tau\lambda_k)/m_0|=\left|\cos(s_0\Delta\tau\lambda_k)-\frac{\sqrt{1-m_0^2}}{m_0}\sin(s_0\Delta\tau\lambda_k)\right|\le 1, \quad \text{for all } k=0,\ldots,N_{\delta}.
$$
Furthermore, in the first inequality, we used an assumption that $|\cos(\theta_0+s\Delta\tau \lambda_k)/m_0|\le 1$ for $k\in [\tilde N]\setminus \mathcal{K}_{\delta}$. Such an assumption yields a hidden constraint among the choices of $m_0$, $\Delta\tau$ and the spectrum of the Hermite operator.  
Note that the trigonometric functions and the exponential functions are smooth in $\mathbb{R}$. By employing the Taylor's theorem with the Lagrange form of the remainder, we obtain
\begin{align*}
&\quad \cos(s_0\Delta\tau\lambda_k)-\frac{\sqrt{1-m_0^2}}{m_0}\sin(s_0\Delta\tau\lambda_k)-\mathrm{exp}\left(-\Delta\tau \lambda_k\right) \\
&= \lambda_k^2 (\Delta\tau)^2 \left(-s_0^2 \cos(s_0\lambda_k\xi)+s_0\sin(s_0\lambda_k\xi)-\mathrm{exp}\left(-\lambda_k\xi\right)\right)/2,
\end{align*}
for some $0\le\xi\le \Delta\tau$. 
Therefore, we obtain
\begin{align*}
&\quad \left\|\cos^K(\theta_0+s_0\Delta\tau \mathcal{H})\ket{\psi}-m_0^K\mathrm{exp}\left(-K\Delta\tau \mathcal{H}\right)\ket{\psi}\right\|^2 \\
&\le m_0^{2K}\left(\left(\frac{K (s_0^2+1)(\Delta\tau)^2}{2}\right)^2 \sum_{k\in \mathcal{K}_{\delta}} \lambda_k^4 |\langle \psi_k|\psi\rangle|^2 + 4\delta^2\right).
\end{align*}
Here, we used that $\cos(s_0\lambda_k\xi),\sin(s_0\lambda_k\xi)>0$ and that $\sin(s_0\lambda_k\xi)$ is close to zero for all $k=0,\ldots,N_{\delta}$. By setting $T=K\Delta\tau$ and denoting
$$
C_{m_0}(\ket{\psi},\delta) := (1-m_0^2)^{-1}\left(\sum_{k\in \mathcal{K}_{\delta}} \lambda_k^4 |\langle \psi_k|\psi\rangle|^2\right)^{1/2}, 
$$
we reach the estimate
\begin{align}
\label{appC2:eq-err0}
\mathrm{Err}_2 \le m_0^K(C_{m_0}(\ket{\psi},\delta) T\Delta\tau +4\delta)/\sqrt{\mathbb{P}_2(\ket{0})}
\le m_0^K\left((1-m_0^2)^{-1}\lambda_{N_{\delta}}^2 T\Delta\tau +4\delta\right)/\sqrt{\mathbb{P}_2(\ket{0})}, 
\end{align}
for any $\delta\in [0,1)$. By Eq.~\eqref{appC2:eq1}, we have
\begin{align}
\label{appC2:eq-err1}
\mathrm{Err}_2 \le \frac{(1-m_0^2)^{-1}\lambda_{N_{\delta}}^2 T\Delta\tau +4\delta}{\|\mathrm{exp}\left(-T \mathcal{H}\right)\ket{\psi}\|-\left((1-m_0^2)^{-1}\lambda_{N_{\delta}}^2 T\Delta\tau +4\delta\right)}, 
\end{align}
for any $\delta\in [0,1)$. 
For any $\varepsilon>0$, we can choose
\begin{align}
\label{appC2:eq2}
\Delta\tau \le \frac{4(1-m_0^2)\delta}{\lambda_{N_\delta}^2 T}, \quad 
\delta \le \frac{\|\mathrm{exp}\left(-T \mathcal{H}\right)\ket{\psi}\|\varepsilon}{4(2+\varepsilon)},
\end{align}
and obtain $\mathrm{Err}_2 \le \varepsilon$. 
If we focus on the order with respect to $\varepsilon$ and $m_0$, then we derive
$\delta = O(\varepsilon)$, and thus, 
\begin{equation}
\label{appC2:eq3}
\Delta\tau = O\left(\varepsilon(1-m_0^2)\lambda_{N_{\delta(\varepsilon)}}^{-2}\right), \quad \text{or equivalently,}\quad K=\Omega\left(\varepsilon^{-1}(1-m_0^2)^{-1}\lambda_{N_{\delta(\varepsilon)}}^2\right).
\end{equation}
As we can see by the definition, $\lambda_{N_{\delta(\varepsilon)}}$ tends to $\lambda_{N-1}(=\lambda_{\text{max}})$ as $\varepsilon$ goes to $0$. 

\subsection{Success probability}
\label{subsec:A-2}

According to Eq.~\eqref{appC2:eq2}, we obtain
$$
\widetilde{\mathrm{Err}_2} \le 4\delta \le \frac{\varepsilon}{2+\varepsilon}\|\mathrm{exp}\left(-T \mathcal{H}\right) \ket{\psi}\|. 
$$
Together with Eq.~\eqref{appC2:eq1}, we estimate the success probability as follows:
$$
\left(\frac{2}{2+\varepsilon}\right)^2 m_0^{2K}\|\mathrm{exp}\left(-T \mathcal{H}\right) \ket{\psi}\|^2 \le \mathbb{P}_2(\ket{0}) \le \left(\frac{2+2\varepsilon}{2+\varepsilon}\right)^2  m_0^{2K}\|\mathrm{exp}\left(-T \mathcal{H}\right) \ket{\psi}\|^2. 
$$
By noting that $\|\mathrm{exp}\left(-T \mathcal{H}\right) \ket{\psi}\|^2$ is independent of $\Delta\tau$ and $K$, and $2/(2+\varepsilon)$ and $(2+2\varepsilon)/(2+\varepsilon)$ are both of order $\Omega(1)$ with respect to $\varepsilon$, we find $\mathbb{P}_2(\ket{0}) = O(m_0^{2K})$ regarding $m_0$ and $K$.  
Since $m_0\in (0,1)$, this is an exponential decay with respect to $K$. Moreover, we have 
$$
m_0^{2K} \le m_0^{\frac{2T^2\lambda_{N_{\delta(\varepsilon)}}^2(2+\varepsilon)}{\varepsilon (1-m_0^2)\|\mathrm{exp}\left(-T \mathcal{H}\right)\ket{\psi}\|}} =: p_0(m_0,\varepsilon).
$$
By substituting $m_0^2 = 1-\delta_0$, we have
$$
p_0(m_0,\varepsilon) = (1-\delta_0)^{\frac{T^2\lambda_{N_{\delta(\varepsilon)}}^2(2+\varepsilon)}{\varepsilon \delta_0\|\mathrm{exp}\left(-T \mathcal{H}\right)\ket{\psi}\|}} \le \mathrm{exp}\left(-\frac{T^2\lambda_{N_{\delta(\varepsilon)}}^2(2+\varepsilon)}{\varepsilon\|\mathrm{exp}\left(-T \mathcal{H}\right)\ket{\psi}\|}\right) = O\left(\mathrm{exp}\left(-\varepsilon^{-1}\right)\right).
$$
Here, we used $\sup_{0<\delta_0<1} (1-\delta_0)^{1/\delta_0} = 1/e$. 
This implies that even if we scale $m_0$ according to $K$, the success probability admits an exponential decay for the parameter $\varepsilon^{-1}$. 

This rapid vanishing of the success probability prevents us from obtaining an efficient quantum circuit for small $\varepsilon$ because we need to repeat the approximate PITE circuit for more and more times as $\varepsilon$ becomes smaller. 
The essential reason why the success probability vanishes rapidly lies in the restriction that $m_0$ is strictly smaller than $1$, and $K$ goes to infinity as $m_0$ tends to $1$ (see Eq.~\eqref{appC2:eq3}). 
We remark that although the restriction on the time step $\Delta\tau$ is relieved by using the VS-APITE in \cite{Nishi.2023}, 
the success probability is still small according to the choice $m_0<1$, which is confirmed numerically in Sect.~\ref{subsec:5-3}. Besides, Nishi et al. \cite{Nishi.2023} used a constant energy shift to improve the success probability to derive the ground state. 
Rigorously speaking, the shifted operator realizes a cosine function peaked at the ground state energy instead of the original ITE operator. So such a shifted operator can not be applied for the problem of approximating the ITE operator itself.  
In Sect.~\ref{subsec:2-1}, we proposed another approximate PITE (AAPITE) circuit to avoid this crucial problem. 


\section{Mathematical theory of the real-space grid method}
\label{appB}

We provide the derivation of the $N$-discretized matrix from a Hamiltonian operator in an infinite function space. This idea is not new, but we do rigorous estimates and provide it for the sake of completeness. Since the discussions for general time evolution equations are similar, for simplicity, we illustrate the discretization of the real-space grid method using the following Schr\"odinger equation: 
$$
\mathrm{i} \partial_t u(\mathbf{x},t) = -a \nabla^2 u(\mathbf{x},t) + V(\mathbf{x},t)u(\mathbf{x},t), \quad u(\mathbf{x},0) = u_0(\mathbf{x}), \quad \mathbf{x}\in [0,L]^{Md}, \ t>0,
$$
with periodic boundary condition where $a>0$ is a scaling parameter, 
$L$ is the length of the simulation cell, $d$ is the spatial dimension, 
$M$ is the number of particles and the potential $V$ is assumed to be sufficiently smooth. 
Usually we denote the Hamiltonian operator $\hat H =-a \nabla^2 + V$ and $u$ is the wave function for some one-/multi-particle systems. 

Although one can discretize the space by finite difference method (or apply finite element method if the spatial domain is irregular), here we consider the well-known method based on the Fourier series expansion and the so-called Galerkin approximation. 
The idea of Galerkin approximation is to project the solutions to differential equations from an infinite function space into an $N$ finite subspace, and then show the projected solution tends to the exact solution in some suitable function space as the parameter $N$ tends to infinity. 
Let the solution to the differential equation lie in a function space $X$ (we usually assume $X$ is a Hilbert space for the convenience of discussion), 
which admits a countable orthonormal basis $\{\phi_n(x)\}_{n=1}^\infty$. 
Then, we can represent the solution to the partial differential equation by 
an infinite linear combination of the basis with time-varying coefficients. 
That is, we have $\{\hat{u}_k(t)\}_{k=1}^\infty$ such that 
$$
u(x,t) = \sum_{k=1}^\infty \hat{u}_k(t) \phi_k(x).  
$$
In this section, we consider the Schr\"odinger equation with periodic boundary condition, for which a natural basis is $\{\phi_{\mathbf{k}}(\mathbf{x}) = \frac{1}{L^{D/2}} e^{\mathrm{i}\frac{2\pi}{L}k_1 x_1} \ldots e^{\mathrm{i}\frac{2\pi}{L}k_{D} x_{D}}\}_{-\infty}^\infty \subset L^2([0, L]^{D}; \mathbb{C})$. 
Here and henceforth, we let $D = Md$ for simplicity. 
In other words, we apply the Fourier series expansion of the solution as
\begin{equation}
\label{appB:eq1}
{u}(\mathbf{x},t) = \sum_{k_1,\ldots,k_{D}=-\infty}^\infty {\hat u}_{\mathbf{k}}(t) \phi_{\mathbf{k}}(\mathbf{x}).
\end{equation}
For an arbitrarily fixed even integer $N_0\in \mathbb{N}$, we let $N=N_0^D$ and define $\mathcal{V}_N := \mathrm{span}\{\phi_{\mathbf{k}}, \mathbf{k}\in [N_0]^D = \{-N_0/2, \ldots, N_0/2-1\}^D\}$, which is an $N$-dimensional subspace of $L^2([0,L]^D; \mathbb{C})$. 
Moreover, we denote a projection operator $P_{\mathcal{V}_N}: L^2([0,L]^D; \mathbb{C}) \to \mathcal{V}_N$ defined by
$$
P_{\mathcal{V}_N}(f) := \sum_{\mathbf{k}\in [N_0]^D} (\phi_{\mathbf{k}}, f) \phi_{\mathbf{k}}, \quad f\in L^2([0,L]^D; \mathbb{C})
$$
where $(\cdot,\cdot)$ denotes the inner product in $L^2([0,L]^D; \mathbb{C})$: 
$$
(f,g) := \int_{[0,L]^D} \bar f(x) g(x) dx.
$$ 
Note that $\{\phi_{\mathbf{k}}(\mathbf{x})\}_{\mathbf{k}\in [N]^D}$ constructs an orthonormal basis, and thus $(\phi_{\mathbf{k}}, \phi_{\mathbf{k^\prime}}) = \delta_{\mathbf{k}\mathbf{k^\prime}} := \delta_{k_1 k_1^\prime}\ldots \delta_{k_D k_D^\prime}$ 
where $\delta_{ij}$ is the Kronecker delta notation. 
By inserting Eq.~\eqref{appB:eq1} into the Schr\"odinger equation, we obtain
\begin{equation}
\label{appB:eq2}
\mathrm{i} \sum_{k_1,\ldots,k_D=-\infty}^\infty \partial_t {\hat u}_{\mathbf{k}}(t)\phi_{\mathbf{k}}(\mathbf{x}) = \sum_{k_1,\ldots,k_D=-\infty}^\infty a(2\pi/L)^2|\mathbf{k}|^2 {\hat u}_\mathbf{k}(t)\phi_{\mathbf{k}}(\mathbf{x}) + V(\mathbf{x},t)\sum_{k_1,\ldots,k_D=-\infty}^\infty {\hat u}_{\mathbf{k}}(t)\phi_{\mathbf{k}}(\mathbf{x}).
\end{equation}
Then, we can multiply Eq.~\eqref{appB:eq2} by $\phi_{\mathbf{k}}$ for each $\mathbf{k}\in [N_0]^D$ separately and integrate over $[0,L]^D$ to obtain
\begin{align}
\label{appB:eq4}
\mathrm{i} \partial_t {\hat u}_{\mathbf{k}}(t) 
&= a (2\pi/L)^2|\mathbf{k}|^2 {\hat u}_\mathbf{k}(t) + \sum_{k_1^\prime,\ldots,k_D^\prime=-\infty}^\infty (\phi_{\mathbf{k}}, V(t) \phi_{\mathbf{k^\prime}}) {\hat u}_{\mathbf{k}^\prime}(t) 
\quad \text{for any }\mathbf{k}\in [N_0]^D,
\end{align}
with the initial condition
\begin{align*}
{\hat u}_{\mathbf{k}}(0) = (\phi_{\mathbf{k}}, {u}_0).
\end{align*}
Multiplying Eq.~\eqref{appB:eq4} by $\phi_{\mathbf{k}}$ and taking summation over $\mathbf{k}\in [N_0]^D$, we find that the approximate solution defined by
$$
u_N(\mathbf{x},t) := P_{\mathcal{V}_N}(u)(\mathbf{x},t) = \sum_{\mathbf{k}\in [N_0]^D} {\hat u}_{\textbf{k}}(t) \phi_{\mathbf{k}}(\mathbf{x})
$$
solves
\begin{align}
\label{appB:eq5}
\mathrm{i} \partial_t {u}_{N}(\mathbf{x}, t) 
&= -a \nabla^2 {u}_N(\mathbf{x}, t) + P_{\mathcal{V}_N}(V{u})(\mathbf{x},t),
\end{align}
with the initial condition
\begin{align*}
{u}_{N}(\mathbf{x}, 0) = P_{\mathcal{V}_N}(u_0)(\mathbf{x}).
\end{align*}
By introducing the residue 
$$
R_N(\mathbf{x},t) := P_{\mathcal{V}_N}(V u)(\mathbf{x},t)-V(\mathbf{x},t)P_{\mathcal{V}_N}(u)(\mathbf{x},t), 
$$ 
we rewrite Eq.~\eqref{appB:eq5} by
\begin{align}
\label{appB:eq5b}
\mathrm{i} \partial_t {u}_{N}(\mathbf{x}, t) 
&= -a \nabla^2 {u}_N(\mathbf{x}, t) + V(\mathbf{x},t) u_N(\mathbf{x},t) + R_N(\mathbf{x},t).
\end{align}
It is readily to see that if the residue $R_N$ vanishes, then the approximate solution exactly solves the same Schr\"odinger equation with a truncated initial condition. Then, the error between the exact solution ${u}$ and the approximate solution ${u}_N$ only depends on the truncation error of the Fourier series, which is a well-discussed classical issue \cite{Lasser.2020}. 
However, this is not true for most cases in practical applications. 
Another important remark is that the residue does not lie in $\mathcal{V}_N^\perp$ in general
due to the presence of the space-dependent potential $V$. 

Eq.~\eqref{appB:eq5b} is still a partial differential equation, which is not easy to solve numerically. Indeed, by Galerkin method, 
one intends to solve an approximate ordinary differential equation instead. 
Here, we review two methods to numerically solve the above equation. 

\subsection{Method 1: Standard spectral method}

We go back to Eq.~\eqref{appB:eq4} and truncate the infinite series in the second term on the right-hand side, that is, we keep only the summation over $\mathbf{k}\in [N_0]^D$. 
Thus, we construct an approximate problem to Eq.~\eqref{appB:eq4}: 
\begin{align}
\label{appB:eq6}
\mathrm{i} \partial_t c_{\mathbf{k}}(t) 
&= a (2\pi/L)^2|\mathbf{k}|^2 c_\mathbf{k}(t) + \sum_{\mathbf{k}^\prime\in [N_0]^D} (\phi_{\mathbf{k}}, V(t) \phi_{\mathbf{k^\prime}}) c_{\mathbf{k}^\prime}(t) 
\quad \text{for any }\mathbf{k}\in [N_0]^D,
\end{align}
with 
\begin{align*}
c_{\mathbf{k}}(0) = (\phi_{\mathbf{k}}, {u}_0).
\end{align*}
By introducing the matrices: 
\begin{align*}
&T_N = a(2\pi/L)^2 \sum_{\mathbf{k}\in [N_0]^D} |\mathbf{k}|^2 \ket{\tilde k_1\ldots \tilde k_D} \bra{\tilde k_1\ldots \tilde k_D} 
= a(2\pi/L)^2 \sum_{\mathbf{k}\in [N_0]^D} |\mathbf{k}|^2 \ket{\mathbf{\tilde k}} \bra{\mathbf{\tilde k}}, \\
&\hspace{10cm} \quad \tilde k = k+N_0/2, \\
&V_N = \sum_{\mathbf{k}, \mathbf{k}^\prime\in [N_0]^D} (\phi_{\mathbf{k}}, V(t) \phi_{\mathbf{k^\prime}}) \ket{\mathbf{\tilde k}} \bra{\mathbf{\tilde k}^\prime}, 
\end{align*}
and denoting the vector $\mathbf{c} = \{c_{\mathbf{k}}\} \subset \mathbb{C}^{N}$, we arrive at a matrix equation: 
\begin{align}
\label{appB:eq7}
\mathrm{i} \partial_t \mathbf{c}(t) = T_N \mathbf{c}(t) + V_N(t) \mathbf{c}(t) =: H_N(t) \mathbf{c}(t).
\end{align}
Therefore, we can write the solution to Eq.~\eqref{appB:eq6} explicitly by
$$
\mathbf{c}(t) = e^{-\mathrm{i}\int_0^t H_N(s)ds} \mathbf{c}(0),
$$
and by 
$$
\mathbf{c}(t) = e^{-\mathrm{i} H_N t} \mathbf{c}(0),
$$
provided that $H_N$ is $t$-independent. 
Hence we obtain a numerical solution:
$$
y_N(\mathbf{x},t) := \sum_{\mathbf{k}\in [N_0]^D} c_{\mathbf{k}}(t) \phi_{\mathbf{k}}(\mathbf{x}),
$$
whose error depends on $(\phi_{\mathbf{k}}, R_N(t))$, $\mathbf{k}\in [N_0]^D$. More precisely, if $H_N$ is t-independent, then we have
\begin{align*}
\|y_N(t)-u_N(t)\|_{L^2}^2 &= \sum_{\mathbf{k}\in [N_0]^D} |c_{\mathbf{k}}(t)-{\hat u}_{\mathbf{k}}(t)|^2 
\le \int_0^t \sum_{\mathbf{k}\in [N_0]^D}\left| \sum_{\mathbf{k}^\prime\notin [N_0]^D} (\phi_{\mathbf{k}}, V\phi_{\mathbf{k}^\prime}) {\hat u}_{\mathbf{k}^\prime}(s)\right|^2 ds \\
&= \int_0^t \|V (u(s)-u_N(s))\|_{L^2}^2 ds.  
\end{align*}
Thus, 
$$
\|y_N(t)-u_N(t)\|_{L^2} \le \sqrt{t} |V|_\infty \max_{0\le s\le t} \left\|u(s)-P_{\mathcal{V}_N}(u)(s)\right\|_{L^2}. 
$$
Therefore, roughly speaking the error depends linearly on the $\ell^2$-norm of the Fourier coefficients of $u(s)$ beyond $\mathbf{k}\in [N_0]^D$ (the truncation error of the Fourier series), whose estimate can be found in e.g. Lasser et al. \cite{Lasser.2020}, on the maximum norm of the given potential $V$ and on the square root of the simulation time $t^{1/2}$. 

\subsection{Method 2: Real-space grid method}
\label{subsec:appB-2}

An alternative way is to take a sufficient number of nodes in Eq.~\eqref{appB:eq5b}, which is also called the FSM in the literature \cite{Childs.2022}. 
One applies test functions of delta functions concerning some grid points $\mathbf{x}\in [0, L]^D$ on both sides of Eq.~\eqref{appB:eq5b}. 
By introducing a uniform grid in each spatial dimension: 
$$
p_j = j L/N_0,\quad j = 0, \ldots, N_0, 
$$
and denoting a set $P_{N_0} = \{p_j;\ j=0,\ldots,N_0-1\}$, we take $\mathbf{x}=\mathbf{p}_{\mathbf{l}}\in (P_{N_0})^D$ and obtain
\begin{align}
\label{appB:eq8a}
\mathrm{i} \partial_t {u}_{N}(\mathbf{p}_{\mathbf{l}}, t) 
&= -a \nabla^2 {u}_N(\mathbf{p}_{\mathbf{l}}, t) + V(\mathbf{p}_{\mathbf{l}},t) u_N(\mathbf{p}_{\mathbf{l}},t) + R_N(\mathbf{p}_{\mathbf{l}},t),
\end{align}
where $\mathbf{l} = (l_1, \ldots, l_D)^\mathrm{T}\in \{0,\ldots,N_0-1\}^D := [\tilde N_0]^D$ and $\mathbf{p}_{\mathbf{l}} = (p_{l_1}, \ldots, p_{l_D})^\mathrm{T}$. Now we construct an approximate problem to the above equations:
\begin{align}
\label{appB:eq8}
\mathrm{i} \sum_{\mathbf{k}\in [N_0]^D} \partial_t c_{\mathbf{k}}(t) \phi_{\mathbf{k}}(\mathbf{p}_{\mathbf{l}}) 
&= \sum_{\mathbf{k}\in [N_0]^D} a (2\pi/L)^2|\mathbf{k}|^2 c_{\mathbf{k}}(t) \phi_{\mathbf{k}}(\mathbf{p}_{\mathbf{l}}) + V(\mathbf{p}_{\mathbf{l}},t) \sum_{\mathbf{k}\in [N_0]^D} c_{\mathbf{k}}(t) \phi_{\mathbf{k}}(\mathbf{p}_{\mathbf{l}}),
\end{align}
with 
\begin{align*}
c_{\mathbf{k}}(0) = (\phi_{\mathbf{k}}, {u}_0).
\end{align*}
Again we introduce some notations in the fashion of quantum mechanics:
\begin{align*}
&\mathbf{c}(t) = \sum_{\mathbf{k}\in [N_0]^D} c_{\mathbf{k}}(t) \ket{\mathbf{\tilde k}},\\
&\Phi_N = \sum_{\mathbf{k}\in [N_0]^D}\sum_{\mathbf{l}\in [\tilde N_0]^D} \phi_{\mathbf{k}}(\mathbf{p}_{\mathbf{l}}) \ket{\mathbf{l}} \bra{\mathbf{\tilde k}}
= \frac{1}{L^{D/2}} \sum_{\mathbf{k}\in [N_0]^D}\sum_{\mathbf{l}\in [\tilde N_0]^D} \mathrm{exp}\left(\mathrm{i}\frac{2\pi}{N_0}\mathbf{k}\cdot\mathbf{l}\right) \ket{\mathbf{l}} \bra{\mathbf{\tilde k}},\\
&D_N = \sum_{\mathbf{k}\in [N_0]^D} a (2\pi/L)^2|\mathbf{k}|^2 \ket{\mathbf{\tilde k}} \bra{\mathbf{\tilde k}}, \\
&V_N(t) = \sum_{\mathbf{l}\in [\tilde N_0]^D} V(\mathbf{p}_{\mathbf{l}},t) \ket{\mathbf{l}} \bra{\mathbf{l}}
= \sum_{\mathbf{l}\in [\tilde N_0]^D} V\left(\frac{L}{N_0}\mathbf{l},t\right) \ket{\mathbf{l}} \bra{\mathbf{l}}.
\end{align*}
Taking all $\mathbf{l}\in [\tilde N_0]^D$ in Eq.~\eqref{appB:eq8}, we arrive at a matrix (evolution) equation:
\begin{align*}
\mathrm{i} \partial_t (\Phi_N \mathbf{c})(t)
&= \Phi_N D_N \mathbf{c}(t) + V_N(t)\Phi_N \mathbf{c}(t),
\end{align*}
with 
\begin{align*}
\mathbf{c}(0) = \sum_{\mathbf{k}\in [N_0]^D}(\phi_{\mathbf{k}}, {u}_0) \ket{\mathbf{\tilde k}}.
\end{align*}
Note that $\Phi_N$ is invertible, and we can construct a modified matrix from $\Phi_N$ as follows:
$$
F_N := \frac{L^{D/2}}{N_0^{D/2}} \Phi_N = \frac{1}{N_0^{D/2}} \sum_{\mathbf{k}\in [N_0]^D}\sum_{\mathbf{l}\in [\tilde N_0]^D} \mathrm{exp}\left(\mathrm{i}\frac{2\pi}{N_0}\mathbf{k}\cdot\mathbf{l}\right) \ket{\mathbf{l}} \bra{\mathbf{\tilde k}}.
$$
Immediately we have 
$$
F_N^\dag = \frac{1}{N_0^{D/2}} \sum_{\mathbf{k}\in [N_0]^D}\sum_{\mathbf{l}\in [\tilde N_0]^D} \mathrm{exp}\left(-\mathrm{i}\frac{2\pi}{N_0}\mathbf{k}\cdot\mathbf{l}\right) \ket{\mathbf{\tilde k}} \bra{\mathbf{l}},
$$
and $F_N F_N^\dag = F_N^\dag F_N = I$, which implies that $F_N$ is unitary. Then, we rewrite the above matrix equation by
\begin{align}
\label{appB:eq9}
\mathrm{i} \partial_t (F_N \mathbf{c})(t)
&= F_N D_N \mathbf{c}(t) + V_N(t)F_N \mathbf{c}(t)
= (F_N D_N F_N^\dag + V_N(t))(F_N \mathbf{c})(t)
=: H_N(t) (F_N \mathbf{c})(t),
\end{align}
with 
\begin{align}
\label{appB:eq-init}
F_N \mathbf{c}(0) = \sum_{\mathbf{l}\in [\tilde N_0]^D} 
\left(\frac{1}{N_0^{D/2}}\sum_{\mathbf{k}\in [N_0]^D}(\phi_{\mathbf{k}}, {u}_0) \mathrm{exp}\left(\mathrm{i}\frac{2\pi}{N_0}\mathbf{k}\cdot\mathbf{l}\right)\right)\ket{\mathbf{l}}.
\end{align}
Similarly, we derive the explicit solution:
$$
F_N \mathbf{c}(t) = e^{-\mathrm{i}\int_0^t H_N(s)ds} F_N \mathbf{c}(0), 
$$
and 
$$
F_N \mathbf{c}(t) = e^{-\mathrm{i} H_N t} F_N \mathbf{c}(0), 
$$
provided that $H_N$ is $t$-independent. 
$F_N\mathbf{c}(t)$ is the discrete wave function at a fixed time $t$, which we denote by $\sum_{\mathbf{l}\in [\tilde N_0]^D} \psi_{\mathbf{l}}(t) \ket{\mathbf{l}}$. 
Therefore, we construct the approximate (continuous) wave function by 
\begin{align*}
y_N(\mathbf{x},t) &:= \sum_{\mathbf{k}\in [N_0]^D} c_{\mathbf{k}}(t) \phi_{\mathbf{k}}(\mathbf{x}) 
= \frac{1}{N_0^{D/2}} \sum_{\mathbf{k}\in [N_0]^D}\sum_{\mathbf{l}\in [\tilde N_0]^D} \mathrm{exp}\left(-\mathrm{i}\frac{2\pi}{N_0}\mathbf{k}\cdot\mathbf{l}\right) \psi_{\mathbf{l}}(t) \phi_{\mathbf{k}}(\mathbf{x})\\
&= \sum_{\mathbf{l}\in [\tilde N_0]^D} \psi_{\mathbf{l}}(t) \left(\frac{1}{(N_0L)^{D/2}} \sum_{\mathbf{k}\in [N_0]^D} \mathrm{exp}\left(\mathrm{i}\frac{2\pi}{L}(\mathbf{k}\cdot\mathbf{x}-(L/N_0)\mathbf{k}\cdot\mathbf{l})\right)\right)\\
&= \sum_{\mathbf{l}\in [\tilde N_0]^D} \psi_{\mathbf{l}}(t) \left(\frac{1}{(N_0L)^{D/2}} \sum_{\mathbf{k}\in [N_0]^D} \mathrm{exp}\left(\mathrm{i}\frac{2\pi}{L}\mathbf{k}\cdot(\mathbf{x}-\mathbf{p}_{\mathbf{l}})\right)\right)
=: \sum_{\mathbf{l}\in [\tilde N_0]^D} \psi_{\mathbf{l}}(t) g_N(\mathbf{x}; \mathbf{p}_{\mathbf{l}}).
\end{align*}
Moreover, we can rewrite $g_N(\mathbf{x};\mathbf{p}_{\mathbf{l}})$ as the product of $D$ functions: 
\begin{align}
\nonumber
&\quad g_N(\mathbf{x}; \mathbf{p}_{\mathbf{l}}) \\
\nonumber
&= \frac{1}{(N_0L)^{D/2}} \sum_{\mathbf{k}\in [N_0]^D} \mathrm{exp}\left(\mathrm{i}\frac{2\pi}{L}\mathbf{k}\cdot(\mathbf{x}-\mathbf{p}_{\mathbf{l}})\right) \\
\nonumber
&= \frac{1}{(N_0L)^{D/2}} \sum_{k_1,\ldots, k_D\in [N_0]} e^{\mathrm{i}\frac{2\pi}{L}k_1(x_1-p_{l_1})} \ldots e^{\mathrm{i}\frac{2\pi}{L}k_D(x_D-p_{l_D})}\\
\nonumber
&= \prod_{j=1}^D \frac{1}{(N_0L)^{1/2}} \sum_{k_j=-N_0/2}^{N_0/2-1} e^{\mathrm{i}\frac{2\pi}{L}k_j(x_j-p_{l_j})} \\
\nonumber
&= \prod_{j=1}^D \frac{1}{(N_0L)^{1/2}} e^{-\mathrm{i}\frac{\pi}{L}(x_j-p_{l_j})} \sum_{k_j=-N_0/2}^{N_0/2-1} e^{\mathrm{i}\frac{\pi}{L}(2k_j+1)(x_j-p_{l_j})} \\
\nonumber
&= \prod_{j=1}^D \frac{1}{(N_0L)^{1/2}} e^{-\mathrm{i}\frac{\pi}{L}(x_j-p_{l_j})} \sum_{k_j=0}^{N_0/2-1} \left(e^{\mathrm{i}\frac{\pi}{L}(2k_j+1)(x_j-p_{l_j})} + e^{-\mathrm{i}\frac{\pi}{L}(2k_j+1)(x_j-p_{l_j})}\right) \\
\label{appB:eq10}
&= \prod_{j=1}^D \frac{2}{(N_0L)^{1/2}} e^{-\mathrm{i}\frac{\pi}{L}(x_j-p_{l_j})} \sum_{k_j=0}^{N_0/2-1} \cos{\frac{\pi}{L}(2k_j+1)(x_j-p_{l_j})} =: \prod_{j=1}^D h_{N_0}(x_j;p_{l_j}).
\end{align}
The function $h_{N_0}(x;p)$ is called a ``pixel function" in some previous paper, e.g. Chan et al. \cite{Chan.2023}, which serves an approximation of the delta function $\delta(x-p)$ as $N_0\to \infty$. 
If $N_0$ is large enough, then $\psi_\mathbf{l}(t)$ itself is close to $y_N(\mathbf{x}_{\mathbf{l}},t)$, and hence gives a good approximation to $u(\mathbf{x}_{\mathbf{l}},t)$. 
On the contrary, the approximate solution should be $\sum_{\mathbf{\tilde l}\in [\tilde N_0]^D} \psi_{\mathbf{\tilde l}}(t) g_N(\mathbf{x}_{\mathbf{l}}; \mathbf{p}_{\mathbf{\tilde l}})$ instead of $\psi_\mathbf{l}(t)$ for any $\mathbf{l}\in [\tilde N_0]^D$.  

Next, we investigate the error of Method 2. 
Let us compare $c_{\mathbf{k}}(t)$ with ${\hat u}_{\mathbf{k}}(t)$ by using Eqs.~\eqref{appB:eq8a} and \eqref{appB:eq8}. In the case of $t$-independent potential $V$, we can write down the difference explicitly
\begin{align*}
{\hat u}_{\mathbf{k}}(t)-c_{\mathbf{k}}(t) = \frac{L^{D/2}}{N_0^{D/2}} \int_0^t e^{-\mathrm{i}H_N(t-s)} \mathbf{R}_N(s) ds,
\end{align*}
where $\mathbf{R}_N(s) = \{R_N(\mathbf{p}_{\mathbf{l}},t)\}_{\mathbf{l}\in [\tilde N_0]^D}$.
Therefore, we estimate
\begin{align*}
&\|y_N(t)-u_N(t)\|_{L^2}^2 = \sum_{\mathbf{k}\in [N_0]^D} |c_{\mathbf{k}}(t)-{\hat u}_{\mathbf{k}}(t)|^2 
\le \int_0^t \sum_{\mathbf{k}\in [N_0]^D} \frac{L^D}{N_0^D} \sum_{\mathbf{l}\in [\tilde N_0]^D} \left| R_N(\mathbf{p}_{\mathbf{l}},s)\right|^2 ds \\
&\quad \le 2 \int_0^t \sum_{\mathbf{k}\in [N_0]^D} \frac{L^D}{N_0^D} \sum_{\mathbf{l}\in [\tilde N_0]^D} \left|(P_{\mathcal{V}_N}(V u)-V u)(\mathbf{p}_{\mathbf{l}},s)\right|^2 + \left|V(\mathbf{p}_{\mathbf{l}})(u - P_{\mathcal{V}_N}(u))(\mathbf{p}_{\mathbf{l}},s)\right|^2 ds.
\end{align*}
Thus, 
\begin{align*}
\|y_N(t)-u_N(t)\|_{L^2} &\le L^{D/2} \sqrt{2t} \bigg(|V|_\infty \max_{0\le s\le t} \left\|u(s)-P_{V_N}(u)(s)\right\|_{L^\infty} \\
&\quad + \max_{0\le s\le t} \left\|V u(s)-P_{V_N}(V u)(s)\right\|_{L^\infty} \bigg). 
\end{align*}
As a result, roughly speaking the error depends linearly on the pointwise truncation error of the Fourier series expansions of both $V u$ and $u$, which decays exponentially as $N_0$ increases (e.g. Childs et al. \cite{Childs.2022}), and on the square root of the simulation time $t^{1/2}$. 
Compared to Method 1, the error bound here needs additionally a stronger norm and linear dependence on $L^{D/2}$. 
However, we do not know whether this upper bound is sharp or not, 
and actually by noting that $(\frac{L}{N_0})^D \sum_{\mathbf{l}\in [\tilde N_0]^D} |f(\mathbf{p}_{\mathbf{l}})|^2 \to \|f\|_{L^2}^2$ as $N_0 \to \infty$, we find that the ratio of the upper bounds for Method 2 and Method 1 should go to a constant as $N_0\to \infty$. 

\subsection{Comparison between the methods}

We compare the two methods by discussing their possible implementations on quantum computers. 
To investigate the Hamiltonian simulation, we refer to Eqs.~\eqref{appB:eq7} and \eqref{appB:eq9} for Method 1 and Method 2, respectively. The common key point is to implement the unitary operation $e^{-\mathrm{i}H_N \Delta t}$. Henceforth, we use the first-order Suzuki-Trotter formula for simplicity.

As for Method 1, we have 
\begin{align*}
e^{-\mathrm{i}H_N t} \approx (e^{-\mathrm{i}T_N t/K} e^{-\mathrm{i}V_N t/K})^K.
\end{align*}
Here, $T_N$ is a diagonal matrix so that $e^{-\mathrm{i}T_N t/K}$ can be implemented directly by the circuit in \cite{Zhang.2024} or the subroutine for quadratic functions in the phase, both based on $\{\mathrm{CNOT, R_z}\}$ gates. 
On the other hand, $V_N$ is not diagonal, and first one needs to carry out numerical/analytical calculations to obtain all the entries which are integrals over the space domain. This is tough work for large $N_0$ (at least to the authors' best knowledge) and a general potential $V$. 
Moreover, the derivation of the circuit to implement $e^{-\mathrm{i}V_N t/K}$ can be a hard task since $V_N$ is a dense $N\times N$ matrix whose encoding needs the complexity $O(N^2)$ in general. 
If we assume the quantum circuit is prepared, then we arrive at a quantum state that encodes the Fourier coefficient $\mathbf{c}(t)$. 
One way to obtain a discrete wave function as the quantum state is to simply apply a shifted quantum Fourier transform $F_N$ to the input registers after the quantum circuit of RTE. 

As for Method 2, we have
\begin{align*}
e^{-\mathrm{i}H_N t} \approx (e^{-\mathrm{i}F_N D_N F_N^\dag t/K} e^{-\mathrm{i}V_N t/K})^K.
\end{align*}
Since $D_N$ is a diagonal matrix and $F_N$ is unitary, we have $e^{-\mathrm{i}F_N D_N F_N^\dag t/K} = F_N e^{-\mathrm{i} D_N t/K} F_N^\dag$. 
Moreover, we note that $V_N$ in Method 2 is also a diagonal matrix, then we can employ either the subroutine for unitary diagonal matrices or the one for piecewise polynomials in the phase to deal with both $e^{-\mathrm{i} D_N t/K}$ and $e^{-\mathrm{i} V_N t/K}$. 
Therefore, with the help of shifted quantum Fourier transform $F_N$, Method 2 seems to provide a more direct and simple quantum circuit of the RTE operator than Method 1. Furthermore, the quantum state has already encoded a discrete wave function as we discussed above.

\section{Explicit gate implementation for RTE operator related to Laplace operator}
\label{appG}

In this section, we provide the explicit gate implementation of the RTE operator for $\sqrt{2\Delta\tau D_N^{(2)}}$ in Fig.~\ref{sec3:Fig1}, coming from the Laplace operator. We divide it into two cases: (1) $d=1$; (2) $d\ge 2$. 

\noindent \underline{\bf 1D case} By the definition, we find
$$
\sqrt{2\Delta\tau D_N^{(2)}} = \sum_{k=0}^{N-1} \frac{2\pi}{L}\sqrt{2a\Delta\tau}\left|k-\frac{N}{2}\right| \ket{k}\bra{k},
$$
which yields
$$
U_{\text{RTE}}\left(\sqrt{2\Delta\tau D_N^{(2)}}\right) = \sum_{k=0}^{N-1} \mathrm{exp}\left(-\mathrm{i}\frac{2\pi}{L}\sqrt{2a\Delta\tau}\left|k-\frac{N}{2}\right|\right) \ket{k}\bra{k}.
$$
The phase is a piecewise linear function of $k$ with only two intervals. Then, we can apply the LIU circuit in \cite{Huang.2024p} (Sect.~2.2 therein) and further simplify it as follows. In our case, it is sufficient to take $m=1$, and $M=2^m=2$. By defining 
$$
U_M^{(j)} := \mathrm{exp}\left(-\mathrm{i}\frac{N\pi}{2^j L}\sqrt{2a\Delta\tau}\right)\ket{0}\bra{0} + \ket{1}\bra{1}, \quad j=0, 1,\ldots, n-1, 
$$
we can introduce the phase difference operation $W_j$ (i.e., $z$-axis rotation gate in this case): 
$$
W_j = X U_M^{(j)} X U_M^{(j)\dag} = 
R_z\left(\theta_j\right), \quad j=0, 1,\ldots, n-1,
$$
where $\theta_j=-2\pi N\sqrt{2a\Delta\tau}/(2^j L)$, $ j=0, 1,\ldots, n-1$. Therefore, according to Fig.~2 in \cite{Huang.2024p}, we derive the quantum circuit for the RTE operator using one phase gate and $n-1$ controlled $Rz$ gates up to a global phase, see Fig.~\ref{appG:Fig1}.
\begin{figure}[htb]
\centering
\resizebox{11cm}{!}{
\includegraphics[keepaspectratio]{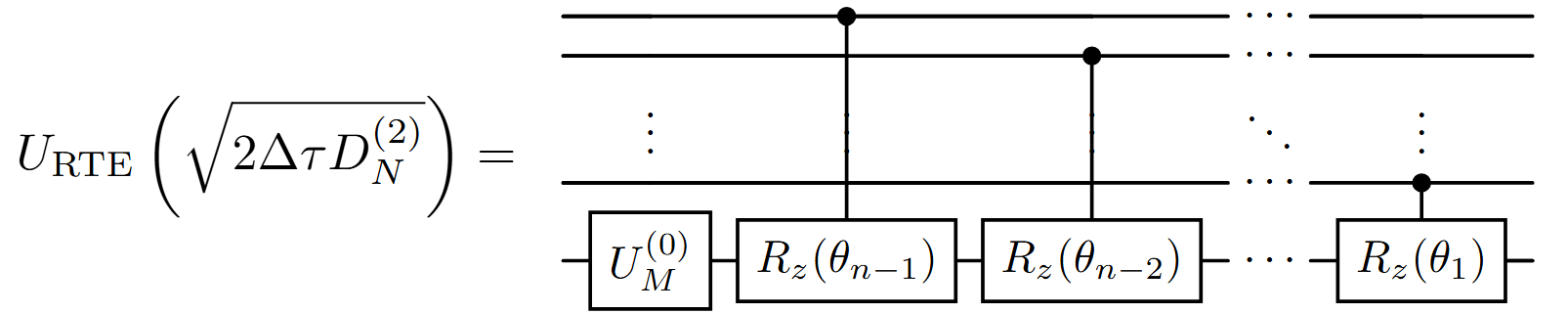}
}
\caption{Quantum circuit for the RTE operator in 1D case. }
\label{appG:Fig1}
\end{figure}

\noindent \underline{\bf Multi-dimensional cases} For $d\ge 2$, the implementation is a bit complicated. We need to introduce a squared distance register as mentioned in Appendix C in \cite{Huang.2024p}. Recalling that
$$
D_N^{(2)} = \sum_{\mathbf{k}\in [\tilde N_0]^d} a (2\pi/L)^2\left|\mathbf{k}-\frac{N_0}{2}\right|^2 \ket{\mathbf{k}} \bra{\mathbf{k}},
$$
we have 
$$
U_{\text{RTE}}\left(\sqrt{2\Delta\tau D_N^{(2)}}\right) = \sum_{\mathbf{k}\in [\tilde N_0]^d} \mathrm{exp}\left(-\mathrm{i}\frac{2\pi}{L}\sqrt{2a\Delta\tau}\left|\mathbf{k}-\frac{N_0}{2}\right|\right) \ket{\mathbf{k}}\bra{\mathbf{k}}.
$$
By using the quantum algorithms for the fixed-point arithmetic operations and the Newton's method, one can construct an approximate distance register $\ket{|\mathbf{k}-N_0/2|}$ \cite{Haner.2018}, and then apply a phase kickback \cite{Jones.2012} to obtain the above RTE operator. However, the required number of ancillary qubits is large (about several hundreds, see Haner \cite{Haner.2018}), and the prefactor of the gate count is also large, which is impossible for the eFTQC.  
Here, we adopt the QFT-based squared distance register in \cite{Huang.2024p}, which needs at most $2n+\lceil\log_2 d\rceil$ ancillary qubits. 
In order to include the set $\{|\mathbf{k}-N_0/2|^2\}_{\mathbf{k}\in [\tilde N_0]}$ whose minimum is $0$ and maximum is $dN_0^2/4$, we need $n_{\text{dis}} = 2n-1+\lceil\log_2 d\rceil$ ancillary qubits for the squared distance register. 
Define the quadratic functions:
$$
f_j(\mathbf{x}) := -\frac{2\pi}{2^j} \sum_{\ell=1}^d\left(x_\ell-\frac{L}{2}\right)^2 \left(\frac{N_0}{L}\right)^2, \quad \mathbf{x}\in [0,L]^d, \ j=1,\ldots, n_{\text{dis}},
$$
and introduce the following polynomial phase gate:
$$
U_{\text{ph}}[f] := \sum_{\mathbf{k}\in [\tilde N_0]} \mathrm{exp}\left(-\mathrm{i} f(k_1 L/N_0, \ldots, k_d L/N_0)\right)\ket{\mathbf{k}}\bra{\mathbf{k}},
$$
for any (multi-dimensional) function $f$ on $[0,L]^d$. According to Fig.~C14 in \cite{Huang.2024p}, the unitary operation:
$$
U_{\text{dis}}(L/2) \ket{0}_{n_{\text{dis}}}\otimes \ket{\mathbf{k}}_{dn} := \ket{|\mathbf{k}-N_0/2|^2}_{n_{\text{dis}}} \otimes \ket{\mathbf{k}}_{dn},
$$
can be constructed by the quantum circuit in Fig.~\ref{appG:Fig2}.
\begin{figure}[htb]
\centering
\resizebox{11cm}{!}{
\includegraphics[keepaspectratio]{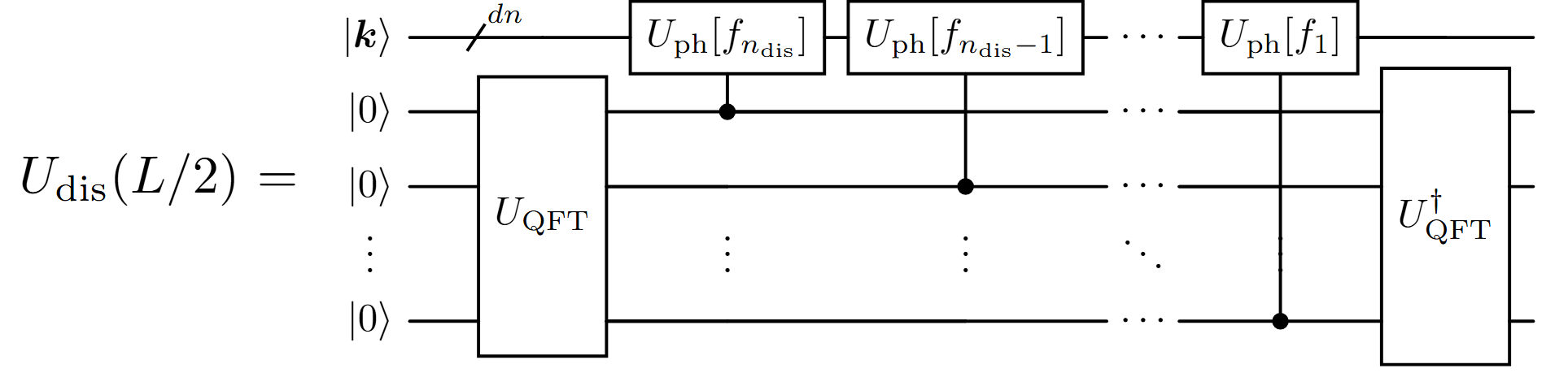}
}
\caption{Quantum circuit for the squared distance register. }
\label{appG:Fig2}
\end{figure}
Here, the gate complexity of a single polynomial phase gate is $O(dn^2)$, and hence the total gate complexity of the squared distance register is $O(dn^3)$. Therefore, with the help of a diagonal unitary operation:
$$
U_{\text{sqrt}} := \sum_{k=0}^{2^{n_{\text{dis}}}-1} \mathrm{exp}\left(-\mathrm{i}\frac{2\pi}{L}\sqrt{2ak\Delta\tau}\right) \ket{k}\bra{k}, 
$$
we can give the quantum circuit for the desired RTE operator as shown in Fig.~\ref{appG:Fig3}. Here, we need the uncomputation of $U_{\text{dis}}(L/2)$, so that the ancillary qubits are returned to the initial zero states and can be reused safely. 
\begin{figure}[htb]
\centering
\resizebox{11cm}{!}{
\includegraphics[keepaspectratio]{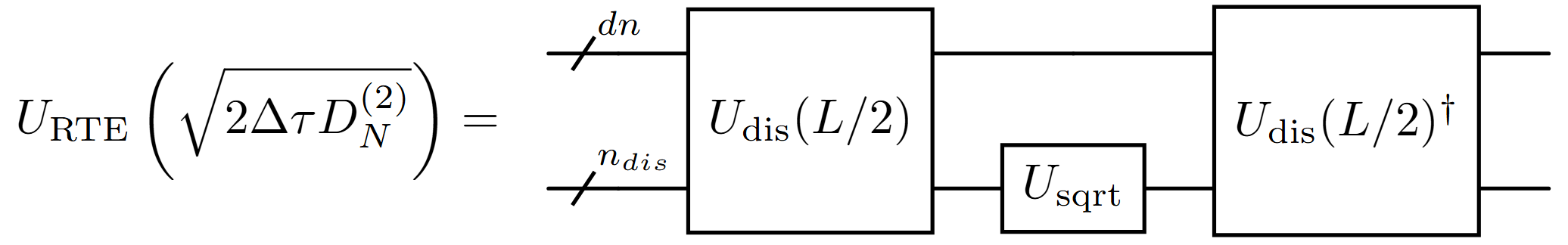}
}
\caption{Quantum circuit for the RTE operator in multi-dimensional cases. }
\label{appG:Fig3}
\end{figure}
The exact implementation of the unitary $U_{\text{sqrt}}$ has a gate complexity $O(dN_0^2)$. Since we consider the eFTQC, we avoid the fixed-point arithmetic operations and adopt the piecewise polynomial approximation in \cite{Huang.2024p}. This gives a gate complexity $O(n^p)$ where $p$ is the polynomial degree. 
In particular, as we find the $k$-th Fourier coefficient of the solution is small for large $k$ in our PDE problem, we can introduce a cut-off of the eigenvalues. For a given positive integer $m<n_{\text{dis}}$, we define the modified diagonal unitary operation:
$$
\tilde U_{\text{sqrt}, m} := \sum_{k=0}^{2^m-1} \mathrm{exp}\left(-\mathrm{i}g_0(k)\right) \ket{k}\bra{k} + \sum_{k=2^m}^{2^{n_{\text{dis}}}-1} \mathrm{exp}\left(-\mathrm{i}g(k)\right) \ket{k}\bra{k}, 
$$
where $g_0(k) := (2\pi/L)\sqrt{2ak\Delta\tau}$, and $g(k)$ is a linear function with respect to $k$ satisfying
$$
g(2^m) = g_0(2^m), \quad \mbox{and} \quad g(2^{n_{\text{dis}}}-1) = g_0(2^{n_{\text{dis}}}-1).
$$
One can also choose $g_0(k) = \arccos(\mathrm{exp}\left(-a(2\pi/L)^2\Delta\tau k\right))$ such that there are no approximation errors for small Fourier coefficients. 
The piecewise function can be implemented by using the quantum comparator with an integer $\mathrm{COMP}(2^m)$ (e.g. Yuan \cite{Yuan.2023}) and combing the exact implementation on the least significant $m$ qubits (e.g. \cite{Zhang.2024}):
$$
U_1 := \sum_{k=0}^{2^m-1} \mathrm{exp}\left(-\mathrm{i}(g_0(k)-g(k))\right) \ket{k}\bra{k},
$$
and the (linear function) phase gate:
$$
U_2 := \sum_{k=0}^{2^{n_{\text{dis}}}-1} \mathrm{exp}\left(-\mathrm{i}g(k)\right) \ket{k}\bra{k},
$$
which is similar to Fig.~3 in \cite{Huang.2024p}. The quantum circuit of $\tilde U_{\text{sqrt}, m}$ requires one ancillary qubit for the comparison and is given in Fig.~\ref{appG:Fig4}. Again the uncomputation of $\mathrm{COMP}(2^m)$ is added, so that the ancillary qubit is detangled from the other qubits and can be reused safely. 
\begin{figure}[htb]
\centering
\resizebox{11cm}{!}{
\includegraphics[keepaspectratio]{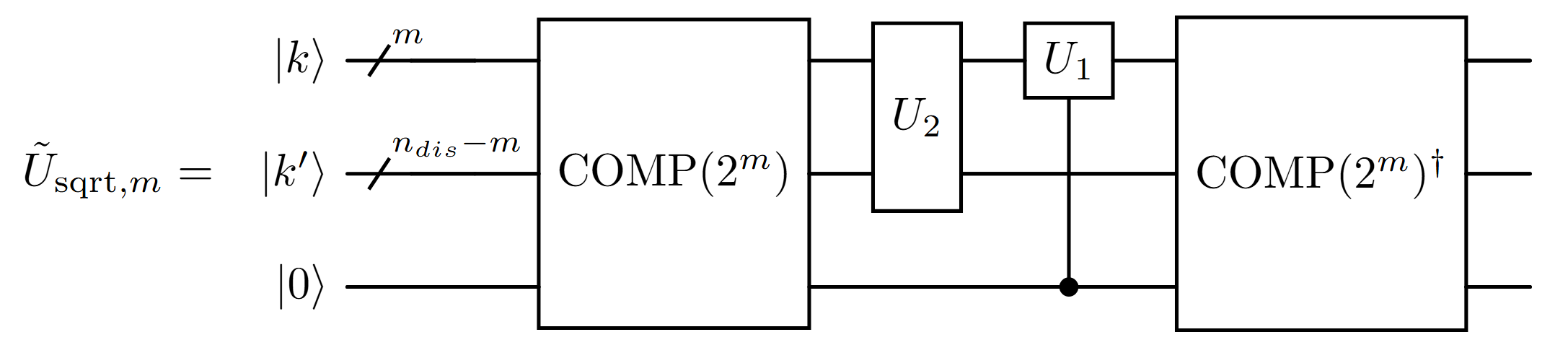}
}
\caption{Quantum circuit for the modified square root operator with a cut-off parameterized by $m$. }
\label{appG:Fig4}
\end{figure}
The gate complexity of the modified implementation $\tilde U_{\text{sqrt}, m}$ is $O(2^m)+O(n^2)$. 

We mention that in the AAPITE circuit, there are two controlled RTE operators. Since the uncomputation part of the first controlled RTE operator cancels the unitary operations $U_{\text{dis}}(L/2)$ and $\mathrm{COMP}(2^m)$ in the second controlled RTE operator, we need the unitary operator for obtaining the squared distance register, the quantum comparator and their inverse operations only once for a single AAPITE circuit. 
For the possible interests of the readers, we provide the detailed CNOT gate count and the circuit depth by Qiskit in a single AAPITE step for the advection-diffusion equation (i.e., 2D case in Sect.~\ref{subsec:4-2} with $V\equiv 0$). Here, we do not include the potential part because its detailed gate count depends on the property of the potential function itself as well as the approximation methods that one uses, and hence is not easy to clarify in several lines. 
The CNOT count and circuit depth are provided in Table \ref{appG:tab1} for $n=3,4,\ldots,11$. The exact implementation has an exponentially increasing gate complexity $O(2^{2n})$, while the modified implementation with a fixed parameter $m=6$ has a polynomial increasing gate complexity $O(n^3)$. 
\begin{table}[htb]
\centering
\caption{CNOT count and circuit depth for a single AAPITE circuit in 2D case by using Qiskit.}
\label{appG:tab1}
\scalebox{0.85}[0.85]{
\begin{tabular}{l|ccccccccc|c}
\hline
& $n=3$ & $n=4$ & $n=5$ & $n=6$ & $n=7$ & $n=8$ & $n=9$ & $n=10$ & $n=11$ & Order \\
\hline
CNOT count (exact) & 988 & 2892 & 8668 & 28804 & - & - & - & - & - & $O(2^{2n})$ \\
CNOT count ($m=6$) & 1268 & 2048 & 3320 & 5192 & 7736 & 11072 & 15272 & 20456 & 26696 & $O(n^3)$ \\
Circuit depth (exact) & 997 & 2797 & 8405 & 28279 & - & - & - & - & - & $O(2^{2n})$ \\
Circuit depth ($m=6$) & 1328 & 2001 & 3090 & 4629 & 6688 & 7301 & 10606 & 14138 & 18819 & $O(n^3)$ \\
\hline
\end{tabular}
}
\end{table}

\section{Derivation of quantum solution}
\label{appC}

In this section, we explain how to derive a quantum solution to Eq.~\eqref{sec4:eq-gov} with Eq.~\eqref{sec4:eq-init}. 
According to the mathematical formulation in \ref{appB} (Method 2), we include a pre-processing step, a quantum solving step, and a post-processing step. Let $n\in \mathbb{N}$, $N_0=2^n$ and $N=N_0^d$. 

\subsection{Pre-processing step}

For the given initial condition $u_0$, we first calculate the coefficients of its Fourier series:
\begin{align*}
c_{\mathbf{k}}(0) = (\phi_{\mathbf{k}}, u_0) = \frac{1}{L^{d/2}}\int_{[0,L]^d} u_0(\mathbf{x}) \mathrm{exp}\left(-\mathrm{i}2\pi \mathbf{k}\cdot\mathbf{x}/L\right) \mathrm{d}\mathbf{x}, \quad \mathbf{k}\in [N_0]^d.
\end{align*}
To prepare the initial quantum state for a quantum solver, we need to normalize the coefficients: 
\begin{align*}
\tilde c_{\mathbf{k}}(0) := \frac{c_{\mathbf{k}}(0)}{\sqrt{\sum_{\mathbf{l}\in [N_0]^d} |c_{\mathbf{l}}(0)|^2}}, \quad \mathbf{k}\in [N_0]^d,
\end{align*}
and we use the amplitude encoding to derive a quantum state:
\begin{align*}
\ket{\mathbf{\tilde c}(0)} := \sum_{\mathbf{k}\in [N_0]^d} \tilde c_{\mathbf{k}}(0) \ket{\mathbf{\tilde k}}, \quad \mathbf{\tilde k} = \mathbf{k} + N_0/2.
\end{align*}

\subsection{Quantum solving step}

\noindent \underline{Preparation of initial quantum state} 
We prepare the initial quantum state $F_N \ket{\mathbf{\tilde c}(0)}$ by an oracle $U_{\mathbf{\tilde c}(0)}$:
$$
U_{\mathbf{\tilde c}(0)} \ket{0}_{dn} := \ket{\mathbf{\tilde c}(0)},
$$
and a shifted QFT. The above oracle is a special case of the quantum state preparation problem and is essentially the same as the vector oracle in the QLSAs. In general, it is known that the exact implementation of the oracle is expensive and requires the gate complexity $O(N)$. In the case of separable variable $u_0$, that is, 
$u_0(\mathbf{x}) = u_{0,1}(x_1)u_{0,2}(x_2)\cdots u_{0,d}(x_d)$, we have
\begin{align*}
&c_{\mathbf{k}}(0) = \prod_{j=1}^d \frac{1}{\sqrt{L}}\int_0^L u_{0,j}(x)\mathrm{exp}\left(-\mathrm{i}2\pi k_jx/L\right) \mathrm{d}x = \prod_{j=1}^d c_{k_j,j}(0), \quad \\
&c_{k,j}(0) := \frac{1}{\sqrt{L}}\int_0^L u_{0,j}(x)\mathrm{exp}\left(-\mathrm{i}2\pi kx/L\right) \mathrm{d}x,
\end{align*}
and
\begin{align*}
\tilde c_{k,j}(0) := \frac{c_{k,j}(0)}{\sqrt{\sum_{l\in [N_0]}|c_{l,j}(0)|^2}}, \quad k\in [N_0],\ j=1,\ldots,d.
\end{align*}
Hence, 
\begin{align*}
\ket{\mathbf{\tilde c}(0)} = \bigotimes_{j=1}^d \left(\sum_{k_j\in [N_0]} \tilde c_{k_j,j}(0) \ket{\tilde k_j}\right),
\end{align*}
and
$$
U_{\mathbf{\tilde c}(0)} \ket{0}_{dn} = \bigotimes_{j=1}^d U_{\mathbf{\tilde c}(0),j} \ket{0}_{n}, \quad U_{\mathbf{\tilde c}(0),j} \ket{0}_{n} := \sum_{k\in [N_0]} \tilde c_{k,j}(0) \ket{\tilde k}_n
$$
In such a case, we can prepare the state for each dimension in parallel, and the oracle requires $O(dN_0)$ quantum gates with depth $O(N_0)=O(N^{1/d})$. Moreover, there exist efficient (approximate) implementations \cite{Moosa.2023, KDNTM23, Grover.2002, Ramos22, Vazquez.2022, Mozafari.2022, Zhang.2022, Bharadwaj.2023} possibly reducing the complexity to $O(\mathrm{polylog}N)$ provided that the desired quantum state has a sufficiently good underlying function. For example, the Fourier series load (FSL) method \cite{Moosa.2023} is efficient for sufficiently smooth functions such that the Fourier coefficients decrease rapidly, and the localized-function method \cite{KDNTM23} works if the function can be well approximated by a finite number of linear combinations of discrete Lorentzian functions.

\noindent \underline{Evolution by PITE} 
By applying the PITE circuit $U_{\text{PITE}}$ to the prepared quantum state, we obtain an approximate quantum state of 
$$
\ket{\psi(t)} \approx \frac{\mathrm{exp}\left(-t H_N\right) F_N \ket{\mathbf{\tilde c}(0)}}{\|\mathrm{exp}\left(-t H_N\right) F_N \ket{\mathbf{\tilde c}(0)}\|},
$$
for an arbitrarily given $t\in (0,T]$ with success probability
$$
\mathbb{P}(\ket{0})(t) = \|U_{\text{PITE}}(t) F_N \ket{\mathbf{\tilde c}(0)}\|^2 \approx \|\mathrm{exp}\left(-t H_N\right) F_N \ket{\mathbf{\tilde c}(0)}\|^2. 
$$
Here, $H_N$ is defined by Eq.~\eqref{sec3:eq-HN}. The square root of the success probability can be obtained up to precision $\varepsilon_0$ by repeating the quantum circuit and measuring the ancillary qubit for $O(1/\varepsilon_0)$ times.

\noindent \underline{Measurement} 
By measuring the derived quantum states for sufficiently many times, we obtain the histogram, from which we retrieve the approximate values of $\{|\psi_{\mathbf{l}}(t)|^2\}_{\mathbf{l}\in [\tilde N_0]^d}\in \mathbb{R}^{N}$. 
The square roots of these values approximate the normalized solutions to the governing equation at each grid point as $N\to \infty$. 
Moreover, to execute a post-processing step for small $N$, we assume that we can do quantum state tomography and obtain the amplitude of the quantum state $\ket{\psi(t)}$: $\{\psi_{\mathbf{l}}(t)\}_{\mathbf{l}\in [\tilde N_0]^d} \in \mathbb{C}^{N}$. 

\subsection{Post-processing step}
\label{subsec:C-3}

According to the mathematical theory for the real-space grid method, we find the normalized approximate solution at point $\mathbf{x}\in [0, L]^d$ is given as 
$$
\sum_{\mathbf{\tilde l}\in [\tilde N_0]^d} \psi_{\mathbf{\tilde l}}(t)g_N(\mathbf{x};\mathbf{p}_{\mathbf{\tilde l}}), 
$$
where $g_N$ is the multi-dimension pixel function defined in \ref{subsec:appB-2}. In particular, we have the approximate solution at the grid points $\{\mathbf{p}_{\mathbf{l}}\}_{\mathbf{l}\in [\tilde N_{\text{f}}]^d}$ as a vector:
\begin{align*}
\tilde u_{N_0,N_f}(t) := \sum_{\mathbf{l}\in [\tilde N_f]^d} \left(\sqrt{\mathbb{P}(\ket{0})(t)}\|\mathbf{c}(0)\| \sum_{\mathbf{\tilde l}\in [\tilde N_0]^d} \psi_{\mathbf{\tilde l}}(t)g_N(\mathbf{p}_{\mathbf{l}};\mathbf{p}_{\mathbf{\tilde l}}) \right)\ket{\mathbf{l}}.
\end{align*}
Here, $\|\mathbf{c}(0)\|$ and $\sqrt{\mathbb{P}(\ket{0})(t)}$ are obtained in the pre-processing step and quantum solving step, respectively, and $N_f = 2^{n_f}$ is a possibly finer grid parameter. 

In fact, we have $F_N \ket{\mathbf{\tilde c}(0)} \to \sum_{\mathbf{l}\in [\tilde N_0]^d} u_0(\mathbf{p}_{\mathbf{l}})/\|u_0(\mathbf{p}_{\mathbf{l}})\|\ket{\mathbf{l}}$ and $g_N(\mathbf{x};\mathbf{p}_{\mathbf{\tilde l}})\to \delta(\mathbf{x}-\mathbf{p}_{\mathbf{\tilde l}})$ as $N\to \infty$. Thus, for sufficiently large $N$, the quantum solution without pre-processing and post-processing steps provides also an approximate solution. 
In the case of relatively small $N$, we can choose $N_f^d>N$ to obtain the solutions at finer grid points. The key idea of the pre-processing and the post-processing steps is to project the initial condition to an $N$-dimensional subspace of $L^2([0, L]^d;\mathbb{C})$: $\mathcal{V}_N = \mathrm{span}\{\phi_{\mathbf{k}}, \mathbf{k}\in [N_0]^d\}$, and then solve the governing equation inside this subspace. With these steps, we can obtain the approximate solution at finer grid points with the same error level as the approximate solution at $N=N_0^d$ grid points. 

\begin{definition}
We call the above approximate solution $\tilde u_{N_0, N_f}$ at $N_f^d$ grid points using both classical computation and quantum computation of only $N_0^d$ grid points, the $(N_0, N_f)$-quantum solution. In particular, if $N_f=N_0$, then we denote it by $\tilde u_{N_0} \equiv \tilde u_{N_0,N_0}$ and simply call it $N_0$-quantum solution. 
\end{definition}

We end up this part with a comment on the computational complexity of the pre-processing step and the post-processing step. 
For the pre-processing step, if the initial condition is good such that we can calculate the Fourier coefficients analytically, then we only need $O(N)$ arithmetic operations to load them into the classical memory. On the other hand, we need the discrete Fourier transform (DFT) to calculate the coefficients. By the best classical algorithm of DFT, this requires $O(N\log (N))$ arithmetic operations. 
For the post-processing step, we need to multiply a $(N_f^d, N)$ matrix to a $(N,1)$ vector, which requires $O(N_f^d N)$ arithmetic operations. If the $N_f^d$ points can be calculated in parallel on multiple computers/GPU, then it requires $O(N)$ arithmetic operations for each point. 
Although the pre-/post-processing step has a dominant order of computational complexity compared to the quantum solving step, it seems reasonable because we are looking for the solution at $N$ points whose preparation itself needs $\Omega(N)$ arithmetic operations. 

\section{Theoretical details}

\subsection{Details of Theorem \ref{thm:aap-err-esti}}
\label{prf:thm1}

\begin{theorem}[Detailed version of Theorem 1]
\label{thm:aap-err-estin}
Let $N, K\in\mathbb{N}$, $\Delta\tau\in \mathbb{R}_{>0}$, and $\mathcal{H}\in \mathbb{C}^{N\times N}$ be a positive semi-definite Hermite matrix. For an arbitrarily given quantum state $\ket{\psi}$, i.e., $\ket{\psi}\in \mathbb{C}^{N}$ satisfying $\|\ket{\psi}\|=1$, and any $\delta\in [0,1)$, there exists a constant $C_1$, depending only on $\mathcal{H}$, $\ket{\psi}$, and $\delta$, such that the following error estimate holds true.
\begin{align}
\label{sec2:eq-err2n}
\mathrm{Err} \le \left(4C_1 K(\Delta\tau)^2/3 +4\delta\right)/\|\mathrm{exp}\left(-K\Delta\tau \mathcal{H}\right)\ket{\psi}\|. 
\end{align}
\end{theorem}
\begin{proof}
Since $\mathcal{H}$ is a positive semi-definite Hermite matrix, we can assume that it has the eigensystem: $\{(\lambda_k,\psi_k)\}_{k=0}^{N-1}$ where $0\le \lambda_0\le \lambda_1 \le \cdots \le \lambda_{N-1}$ are real-valued, and $\{\psi_k\}_{k=0}^{N-1}$ forms a complete orthonormal basis of $\mathbb{C}^N$. 
By the triangle inequality, we have
\begin{align*}
\mathrm{Err}
&\le \left\|\frac{\cos^K(\sqrt{2\Delta\tau \mathcal{H}})\ket{\psi}}{\|\cos^K(\sqrt{2\Delta\tau \mathcal{H}})\ket{\psi}\|}-\frac{\cos^K(\sqrt{2\Delta\tau \mathcal{H}})\ket{\psi}}{\|\mathrm{exp}\left(-K\Delta\tau \mathcal{H}\right)\ket{\psi}\|}\right\| + \frac{\|\cos^K(\sqrt{2\Delta\tau \mathcal{H}})\ket{\psi}-\mathrm{exp}\left(-K\Delta\tau \mathcal{H}\right)\ket{\psi}\|}{\|\mathrm{exp}\left(-K\Delta\tau \mathcal{H}\right)\ket{\psi}\|}\\
&\le \frac{2\|\cos^K(\sqrt{2\Delta\tau \mathcal{H}})\ket{\psi}-\mathrm{exp}\left(-K\Delta\tau \mathcal{H}\right)\ket{\psi}\|}{\|\mathrm{exp}\left(-K\Delta\tau \mathcal{H}\right)\ket{\psi}\|}.
\end{align*}
Then, it is sufficient to estimate the error before the normalization: 
$$
\widetilde{\mathrm{Err}} = \left\|\cos^K(\sqrt{2\Delta\tau \mathcal{H}})\ket{\psi}-\mathrm{exp}\left(-K\Delta\tau \mathcal{H}\right)\ket{\psi}\right\|. 
$$
For any $\delta\in [0,1)$, we introduce
\begin{align}
\label{sec2:eq1a}
N_\delta(\psi) &:= \left\{k\in [\tilde N]:=\{0,\ldots,N-1\}; \sum_{j=0}^{k-1}|\langle\psi_j|\psi\rangle|^2 < 1-\delta^2, \sum_{j=0}^k|\langle\psi_j|\psi\rangle|^2 \ge 1-\delta^2 \right\}, \\
\label{sec2:eq1b}
\mathcal{K}_\delta(\psi) &:= \left\{0,\ldots,N_\delta(\psi)\right\}\subset [\tilde N]. 
\end{align}
Then, we estimate the upper bound of $\widetilde{\mathrm{Err}}^2$ as follows: 
\begin{align}
\nonumber
&\quad \widetilde{\mathrm{Err}}^2 = \sum_{k=0}^{N-1} |\langle \psi_k|\psi\rangle|^2 \left|\cos^K(\sqrt{2\Delta\tau \lambda_k})-(\mathrm{exp}\left(-\Delta\tau \lambda_k\right))^K\right|^2 \\
\nonumber
&= \sum_{k=0}^{N-1} |\langle \psi_k|\psi\rangle|^2 \left|(\cos(\sqrt{2\Delta\tau \lambda_k}))^K-(\mathrm{exp}\left(-\Delta\tau \lambda_k\right))^K\right|^2 \\
\nonumber
&\le \sum_{k\in \mathcal{K}_{\delta}} |\langle \psi_k|\psi\rangle|^2 \left|\cos(\sqrt{2\Delta\tau \lambda_k})-\mathrm{exp}\left(-\Delta\tau \lambda_k\right)\right|^2 \left(\sum_{j=0}^{K-1}|\cos(\sqrt{2\Delta\tau \lambda_k})|^j|\mathrm{exp}\left(-\Delta\tau \lambda_k\right)|^{K-1-j}\right)^2 + 4\delta^2 \\
\label{sec2:eq3}
&\le K^2 \sum_{k\in \mathcal{K}_{\delta}} |\langle \psi_k|\psi\rangle|^2 \left|\cos(\sqrt{2\Delta\tau \lambda_k})-\mathrm{exp}\left(-\Delta\tau \lambda_k\right)\right|^2 + 4\delta^2.
\end{align}
We define
$$
h(x) := \cos(x) - \mathrm{exp}\left(-\frac12 x^2\right), \quad x\in \mathbb{R}. 
$$
By a direct calculation, we have
\begin{align*}
& h^\prime(x) = -\sin(x) + x \mathrm{exp}\left(-\frac12 x^2\right), \quad h^\prime(0) = 0, \\
& h^{(2)}(x) = -\cos(x) + (1-x^2) \mathrm{exp}\left(-\frac12 x^2\right), \quad h^{(2)}(0) = 0, \\
& h^{(3)}(x) = \sin(x) + (x^3-3x) \mathrm{exp}\left(-\frac12 x^2\right), \quad h^{(3)}(0) = 0, \\
& h^{(4)}(x) = \cos(x) + (-x^4+6x^2-3) \mathrm{exp}\left(-\frac12 x^2\right), \quad h^{(4)}(0) = -2,
\end{align*}
where $h^{(k)}$ denotes the $k$-th derivative of $h$. It is clear that $h\in C^\infty(\mathbb{R})$. By employing the Taylor's theorem near the origin with the Lagrange form of the remainder, we have 
\begin{align*}
h(x) &= h(0) + h^\prime(0)x + \frac12 h^{(2)}(0) x^2 + \frac16 h^{(3)}(0) x^3 + \frac{1}{24} h^{(4)}(\xi) x^4 \\
&= \frac{1}{24}\left(\cos\xi + (-\xi^4+6\xi^2-3)\mathrm{exp}\left(-\frac12 \xi^2\right)\right)x^4,
\end{align*}
for any $x>0$ and some $\xi\in [0,x]$. For any $k=0,1,\ldots,N-1$, letting $x=\sqrt{2\Delta\tau \lambda_k}$, we obtain
\begin{align*}
\left|\cos(\sqrt{2\Delta\tau \lambda_k})-\mathrm{exp}\left(-\Delta\tau \lambda_k\right)\right|^2
&= \left|\frac{1}{6}\left(\cos\xi_k + (-\xi_k^4+6\xi_k^2-3)\mathrm{exp}\left(-\frac12 \xi_k^2\right)\right)\right|^2 \lambda_k^4 (\Delta\tau)^4, 
\end{align*}
for some $\xi_k\in [0,\sqrt{2\Delta\tau \lambda_k}]$. By noting
$$
\sup_{x\ge 0} |-x^2+6x-3|\mathrm{exp}\left(-\frac12 x\right) = 3, 
$$
we obtain 
\begin{align*}
\left|\cos(\sqrt{2\Delta\tau \lambda_k})-\mathrm{exp}\left(-\Delta\tau \lambda_k\right)\right|^2
&\le \frac49 \lambda_k^4 (\Delta\tau)^4, 
\end{align*}
for any $k=0,1,\ldots,N-1$. We insert this into Eq.~\eqref{sec2:eq3} and obtain
\begin{align*}
\widetilde{\mathrm{Err}}^2 \le \frac{4K^2}{9}(\Delta\tau)^4 \sum_{k\in \mathcal{K}_{\delta}} \lambda_k^4 |\langle\psi_k|\psi\rangle|^2 + 4\delta^2. 
\end{align*}
By setting
\begin{align*}
C_1(\ket{\psi},\delta) := \left(\sum_{k\in \mathcal{K}_{\delta}} \lambda_k^4 |\langle\psi_k|\psi\rangle|^2\right)^{1/2},
\end{align*}
we obtain 
\begin{align}
\label{sec2:eq4}
\widetilde{\mathrm{Err}} \le \frac{2C_1(\ket{\psi},\delta)K}{3}(\Delta\tau)^2 + 2\delta.
\end{align}
Therefore, we reach the error estimate:
\begin{align*}
\mathrm{Err} \le \left(4C_1(\ket{\psi},\delta) K(\Delta\tau)^2/3 +4\delta\right)/\|\mathrm{exp}\left(-K\Delta\tau \mathcal{H}\right)\ket{\psi}\|, 
\end{align*}
for any $\delta\in [0,1)$. 
\end{proof}
By the definition, $C_1$ depends on the eigensystem of $\mathcal{H}$, and hence has an implicit dependence on $N$. 
In this paper, we consider mainly two operators: a bounded operator $V(x)I$ and an unbounded differential operator $-\nabla^2$. 
To derive an exponential speedup regarding $N$, an $N$-uniform upper bound of $C_1$ is needed. 

\noindent \underline{\bf Bounded operator $V(x)I$} For any bounded operator $V(x)I$ with a non-negative bounded real function $V$, the discretized matrix $\mathcal{H}$ is Hermitian and positive semi-definite. Moreover, we have $\lambda_0=\min_{x}V(x)\ge 0$, and $\lambda_{N-1}=\max_{x}V(x)\le C_0$, where $C_0$ is the upper bound of $V$. Immediately, we have $C_1(\ket{\psi},\delta) \le C_0^2$ according to the definition. By taking $\delta=0$, we have the $N$-uniform bound:
$$
\mathrm{Err} \le 4C_0^2 K(\Delta\tau)^2/(3\|\mathrm{exp}\left(-K\Delta\tau \mathcal{H}\right)\ket{\psi}\|).
$$

\noindent \underline{\bf Differential operator $-\nabla^2$} The discretized matrix of the differential operator is Hermitian and positive semi-definite, but its maximal eigenvalue $\lambda_{N-1}$ scales as $N^2$. If we consider the worst initial state $\ket{\psi}$ such that $|\langle\psi_{N-1}|\psi\rangle|^2=\Omega(1)$, and $\delta=0$, then there is a polynomial dependence on $N$ in $C_1$. Fortunately, in our problem, the initial state has an underlying smooth function, which is the solution to the PDE at each time step. Since the Fourier coefficients of a smooth function decrease rapidly and the truncated error has an exponential decay regarding the number of nodes (e.g. Appendix B in \cite{Childs.2022}), $N_\delta$ defined in \eqref{sec2:eq1a} satisfies $N_\delta=O(\log \delta^{-1})$ provided that $N=\Omega(\log \delta^{-1})$ and the underlying function of the quantum state $\ket{\psi}$ is smooth with the periodic boundary condition. 
If we take $\delta\propto (\Delta\tau)^2$, then we obtain an $N$-uniform bound:
$$
\mathrm{Err} \le \tilde {C_0}K(\Delta\tau)^{2-\epsilon_0}/\|\mathrm{exp}\left(-K\Delta\tau \mathcal{H}\right)\ket{\psi}\|, 
$$
where $\epsilon_0>0$ appears due to the logarithmic dependence of $\delta^{-1}$ in $C_1$, and it can be taken arbitrarily small. Here, $\tilde C_0$ depends on the initial state $\ket{\psi}$, but is $N$-uniformly bounded provided that $N$ is sufficiently large. 
\begin{remark} 
By the current theoretical analysis, the total error has a slightly worse dependence $(\Delta\tau)^{1-\epsilon_0}$ than the linear one ($T=K\Delta\tau$ is a fixed constant in the application). 
On the other hand, the numerical tests indicate that the dependence on $\Delta\tau$ is exactly linear. 
Figures~\ref{appD:Fig5}--\ref{appD:Fig7} in \ref{subsec:4-3-3}, \ref{subsec:4-3-4} imply that for fixed $N$, the errors coming from both the PITE approximation and the Suzuki-Trotter formula are linear regarding $\Delta\tau$, while Fig.~\ref{appD:Fig6n} in \ref{subsec:4-3-2} illustrates that the error will merely increase regarding $N$ as long as $N$ is sufficiently large. 
Thus, we believe there is a more intelligent way to establish the error estimate so that the linear dependence can be proved. For simplicity, we take $\epsilon_0$ close to zero and omit it in the main manuscript. 
\end{remark}

\subsection{Details of the total error estimate}
\label{prf:tot-err-esti}

By using the non-unitary operator $U^K_{\mathrm{AAPITE}}(\Delta\tau)$ introduced in Sect.~\ref{sec:3}, we define the normalized $\ell^2$-error between the ITE operator and the approximate operator based on the AAPITE circuits by
\begin{align*}
E_{\text{tot}}(\Delta\tau; \ket{\psi_{(0)}},K)
&:= \left\|\frac{\mathrm{exp}\left(-K\Delta\tau H_N\right)\ket{\psi_{(0)}}}{\left\|\mathrm{exp}\left(-K\Delta\tau H_N\right)\ket{\psi_{(0)}}\right\|} - U_{\text{AAPITE}}^K(\Delta\tau) \ket{\psi_{(0)}} \right\| \\
&\, \le \frac{2\left\|\mathrm{exp}\left(-K\Delta\tau \tilde H_N\right)\ket{\psi_{(0)}} - P^K \ket{\psi_{(0)}} \right\|}{\left\|\mathrm{exp}\left(-K\Delta\tau \tilde H_N\right)\ket{\psi_{(0)}}\right\|} \\
&= \frac{2\mathrm{exp}\left(-K\Delta\tau V_0\right)}{\left\|\mathrm{exp}\left(-K\Delta\tau H_N\right)\ket{\psi_{(0)}}\right\|}\left\|\mathrm{exp}\left(-K\Delta\tau \tilde H_N\right)\ket{\psi_{(0)}} - P^K \ket{\psi_{(0)}} \right\|,  
\end{align*}
for $K\in \mathbb{N}$. Here, $P := F_N \mathrm{exp}\left(\mathrm{i}\Delta\tau D_N^{(1)}\right)\cos\left(\sqrt{2\Delta\tau D_N^{(2)}}\right)F_N^\dag \cos\left(\sqrt{2\Delta\tau \tilde V_N}\right)$ denotes the approximate PITE operator. Then, it is sufficient to estimate 
\begin{align*}
\tilde E_{\text{tot}}(\Delta\tau; \ket{\psi_{(0)}},K) := \left\|\mathrm{exp}\left(-K\Delta\tau \tilde H_N\right)\ket{\psi_{(0)}} - P^K \ket{\psi_{(0)}} \right\|,
\end{align*}
for $K\in \mathbb{N}$. 
By the definition of $P$, we have $\|P\|\le 1$. Hence, a direct estimate yields
\begin{align*}
\tilde E_{\text{tot}}(\Delta\tau; \ket{\psi_{(0)}},K) 
&= \left\|\sum_{j=0}^{K-1} P^{K-1-j} \left(P - \mathrm{exp}\left(-\Delta\tau \tilde H_N\right)\right) \mathrm{exp}\left(-j\Delta\tau \tilde H_N\right) \ket{\psi_{(0)}} \right\| \\
&\le \sum_{j=0}^{K-1}\left\|\left(P - \mathrm{exp}\left(-\Delta\tau \tilde H_N\right)\right) \mathrm{exp}\left(-j\Delta\tau \tilde H_N\right) \ket{\psi_{(0)}} \right\| \\
&= \sum_{j=0}^{K-1} \tilde E_{\text{tot}}(\Delta\tau; \ket{\psi_{(j)}},1).
\end{align*}
Here and henceforth, we denote $\ket{\psi_{(j)}} = \mathrm{exp}\left(-j\Delta\tau \tilde H_N\right) \ket{\psi_{(0)}}$ for simplicity. 
Moreover, we have
\begin{align*}
&\tilde E_{\text{tot}}(\Delta\tau; \ket{\psi_{(j)}},1) \\
&= \left\|\left(P - \mathrm{exp}\left(-\Delta\tau \tilde H_N\right)\right) \ket{\psi_{(j)}} \right\| \\
&\le \left\|\left(P - F_N \mathrm{exp}\left(\mathrm{i}\Delta\tau D_N^{(1)}\right)\mathrm{exp}\left(-\Delta\tau D_N^{(2)}\right)F_N^\dag \mathrm{exp}\left(-\Delta\tau \tilde V_N\right)\right) \ket{\psi_{(j)}} \right\| \\
&\quad + \left\|\left(\mathrm{exp}\left(-\Delta\tau F_N (D_N^{(2)}-\mathrm{i}D_N^{(1)}) F_N^\dag\right) \mathrm{exp}\left(-\Delta\tau \tilde V_N\right) - \mathrm{exp}\left(-\Delta\tau \tilde H_N\right)\right) \ket{\psi_{(j)}} \right\| \\
&=: \tilde E_{\text{appr}}\left(\Delta\tau; \ket{\psi_{(j)}}\right) + \tilde E_{\text{trot}}\left(\Delta\tau; \ket{\psi_{(j)}}\right), 
\end{align*}
for $j=0,\ldots,K-1$. 
We estimate the Suzuki-Trotter error and the approximation error, respectively. 

First, we consider the Suzuki-Trotter error, which can be estimated by the following lemma.
\begin{lemma}
\label{appC2:lem1}
Let $(X,\|\cdot\|_X)$ be a Banach space. Assume that $-A$ is a generator of a strongly continuous semigroup $e^{-tA}$ on $X$, and $B$ is a bounded linear operator on $X$. Moreover, we assume 
\begin{equation}
\label{sec2:eq-assump1}
\|e^{-tA}\| \le 1, \quad \|e^{-tB}\| \le 1, \quad \|e^{-t(A+B)}\| \le 1,
\end{equation}
for all $t>0$. Then, for any $x\in X$ and $t>0$, we have
\begin{align*}
\left\|\left(e^{-tA}e^{-tB} - e^{-t(A+B)}\right)x\right\|_X \le \left(\|B\|^2\|x\|_X + \frac12 \max_{0\le s\le t}\|[A,B]e^{-s(A+B)}x\|\right) t^2.  
\end{align*}
\end{lemma}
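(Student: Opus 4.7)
My plan is to split the Trotter difference into three pieces using two Duhamel-type identities, and then to estimate each piece separately against one of the terms on the right-hand side. I would begin from the Taylor expansion with integral remainder
\[
e^{-tB} = I - tB + \int_0^t (t-s)\, B^2 e^{-sB}\, ds
\]
and from the standard perturbation identity (valid because $-A$ generates a $C_0$-semigroup and $B$ is bounded)
\[
e^{-t(A+B)} = e^{-tA} - \int_0^t e^{-(t-s)A}\, B\, e^{-s(A+B)}\, ds.
\]
Multiplying the first identity by $e^{-tA}$ on the left, subtracting the second, and using $e^{-tA}B = e^{-(t-s)A}e^{-sA}B$ inside the integrand produces the clean identity
\[
e^{-tA}e^{-tB} - e^{-t(A+B)} = e^{-tA}\!\int_0^t (t-s)\, B^2 e^{-sB}\, ds + \int_0^t e^{-(t-s)A}\bigl[B\, e^{-s(A+B)} - e^{-sA}\, B\bigr] ds.
\]

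Applied to $x$, the first (pure ``$B^2$-remainder'') integral is bounded immediately by $\tfrac{t^2}{2}\|B\|^2\|x\|_X$ using the contractivity assumption \eqref{sec2:eq-assump1}. For the second integral I would insert $\pm\, e^{-s(A+B)}B$ to split the inner bracket as
\[
B\, e^{-s(A+B)} - e^{-sA}\, B = \bigl[B,\, e^{-s(A+B)}\bigr] + \bigl(e^{-s(A+B)} - e^{-sA}\bigr) B.
\]
For the right-most summand, a further application of the perturbation formula yields $\|(e^{-s(A+B)} - e^{-sA})Bx\|_X \le s\|B\|^2\|x\|_X$, and the outer time integration contributes a second $\tfrac{t^2}{2}\|B\|^2\|x\|_X$. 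For the commutator summand I would use the identity
\[
\bigl[B,\, e^{-s(A+B)}\bigr] = \int_0^s e^{-(s-u)(A+B)}\, [A,B]\, e^{-u(A+B)}\, du,
\]
obtained by differentiating $u \mapsto e^{-(s-u)(A+B)} B\, e^{-u(A+B)}$ and using $[A+B,B]=[A,B]$; this gives $\|[B, e^{-s(A+B)}]x\|_X \le s\,\max_{0\le u\le t}\|[A,B]e^{-u(A+B)}x\|_X$, and the outer integration converts this into the $\tfrac{t^2}{2}\max_{0\le s\le t}\|[A,B]e^{-s(A+B)}x\|_X$ contribution. Summing the three bounds yields exactly the stated estimate.

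The main obstacle is pinning down the correct three-way decomposition: once it is written in the form above, every estimate only uses $\|B\|$ and the contractivity of the three semigroups, with no need to control either $\|Ax\|$ or $\|[A,B]\|$ in isolation. Domain subtleties for the unbounded $A$ are benign here, since $A$ never acts directly on $x$ in any integrand; it is always shielded by $e^{-sA}$ or $e^{-s(A+B)}$, or absorbed into a Duhamel integral with a bounded $B$ on its right, so that the operator-valued identities can be justified on an appropriate dense invariant subspace and extended to $X$ by continuity.
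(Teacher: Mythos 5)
Your proof is correct, and it rests on the same two pillars as the paper's argument (which follows Jahnke--Lubich): the Duhamel/variation-of-constants formula for $e^{-t(A+B)}$ and the second-order Taylor remainder of $e^{-tB}$, with all estimates closed using only contractivity and $\|B\|$. The only genuine difference is how the second-order term is organized. In the paper, after removing the Taylor remainder $R_1$ one is left with $R_2x=-tW(0)x+\int_0^t W(s)x\,ds$ for the interpolant $W(s)=e^{-(t-s)A}Be^{-s(A+B)}$, and a single differentiation $W'(s)=e^{-(t-s)A}\bigl([A,B]-B^2\bigr)e^{-s(A+B)}$ produces the commutator and the extra $\|B\|^2$ contribution in one stroke. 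Your second integral $\int_0^t e^{-(t-s)A}\bigl[Be^{-s(A+B)}-e^{-sA}B\bigr]ds$ is exactly this $R_2$ (since $e^{-(t-s)A}e^{-sA}B=W(0)$), but you estimate it by splitting the bracket as $[B,e^{-s(A+B)}]+(e^{-s(A+B)}-e^{-sA})B$ and invoking two further integral representations, one Duhamel formula for the semigroup difference and one for the commutator obtained by differentiating $u\mapsto e^{-(s-u)(A+B)}Be^{-u(A+B)}$. The two bookkeepings yield identical constants, $t^2\|B\|^2\|x\|_X+\tfrac{t^2}{2}\max_{0\le s\le t}\|[A,B]e^{-s(A+B)}x\|$; the paper's route is slightly shorter (one differentiation instead of two auxiliary identities), while yours isolates the commutator of $B$ with the full semigroup, which some may find more transparent. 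Your remark on domain issues is fair and no worse than the paper's (which ignores them); in the application $X=\mathbb{C}^N$, so both arguments are rigorous as written, and in the general Banach-space setting both implicitly require $e^{-s(A+B)}x$ to lie where $[A,B]$ makes sense, exactly as the statement of the bound already presupposes.
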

\begin{proof}
We follow the strategy of \cite{Jahnke.2000} in which the error estimate of the Strang splitting (second-order splitting) was obtained. 
According to the variation-for-constant formula, for any $x\in X$ and $t>0$, we have 
$$
e^{-t(A+B)}x = e^{-tA}x - \int_0^t e^{-(t-s)A} B e^{-s(A+B)}x\, \mathrm{d}s, 
$$
On the other hand, by the series expansion of $e^{-tB}$, we find
$$
e^{-tA}e^{-tB}x = e^{-tA}x - t e^{-tA}Bx + R_1x,
$$
where $\|R_1x\|:=\|e^{-tA}(e^{-tB}-I+tB)x\|\le \frac{t^2}{2}\|B\|^2 \|x\|_X$ by Eq.~\eqref{sec2:eq-assump1}. Then, we estimate
\begin{align*}
e^{-tA}e^{-tB}x - e^{-t(A+B)}x = -te^{-tA}Bx + \int_0^t e^{-(t-s)A} B e^{-s(A+B)}x\, \mathrm{d}s + R_1x =: R_2x + R_1x.
\end{align*}
We introduce a series of operator $W(s) := e^{-(t-s)A}Be^{-s(A+B)}$, $s\ge 0$, then we have
\begin{align*}
R_2x = -t W(0)x + \int_0^t W(s)x\, \mathrm{d}s = t^2 \int_{0}^1 (1-\theta)W^\prime(\theta t)x\, \mathrm{d}\theta. 
\end{align*}
Since $W^\prime(s) = e^{-(t-s)A}[A, B]e^{-s(A+B)} - e^{-(t-s)A}B^2e^{-s(A+B)}$, by Eq.~\eqref{sec2:eq-assump1}, we have
\begin{align*}
\|W^\prime(s)x\| \le \|[A,B]e^{-s(A+B)}x\| + \|B\|^2\|x\|_X, \quad 0\le s\le t.
\end{align*}
Therefore, we have 
$$
\|R_2x\| \le \frac{t^2}{2}\left(\max_{0\le s\le t}\|[A,B]e^{-s(A+B)}x\| + \|B\|^2\|x\|_X\right).
$$
Then, the desired estimate follows immediately from the triangle inequality. 
\end{proof}

Letting $X=\mathbb{C}^N$, $A=F_N (D_N^{(2)}-\mathrm{i}D_N^{(1)}) F_N^\dag$, and $B=\tilde V_N$ in Lemma \ref{appC2:lem1}, we obtain 
\begin{align*}
&\quad \tilde E_{\text{trot}}\left(\Delta\tau; \ket{\psi_{(j)}}\right) \\
&\le \left(\|\tilde V_N\|^2 \left\|\ket{\psi_{(j)}}\right\|+\frac12 \max_{0\le s\le \Delta\tau}\|[F_N (D_N^{(2)}-\mathrm{i}D_N^{(1)})F_N^\dag, \tilde V_N]\mathrm{exp}\left(-s\tilde H_N\right)\ket{\psi_{(j)}}\|\right) (\Delta\tau)^2 \le C_2 (\Delta\tau)^2. 
\end{align*}
Here, $C_2$ is a positive constant depending on $V, V^\prime, V^{(2)}$, and the a priori bound of the derivative of the solution to the continuous equation (see e.g. Chapter 5 in \cite{Lin2022}), and hence is uniform regarding $N$. 

Next, we consider the approximation error. We divide it into two parts using the following two operators: 
\begin{align*}
&P_1 := \left(F_N \mathrm{exp}\left(\mathrm{i}\Delta\tau D_N^{(1)}\right)\cos\left(\sqrt{2\Delta\tau D_N^{(2)}}\right)F_N^\dag -F_N \mathrm{exp}\left(\mathrm{i}\Delta\tau D_N^{(1)}\right) \mathrm{exp}\left(-\Delta\tau D_N^{(2)}\right)F_N^\dag\right)\mathrm{exp}\left(-\Delta\tau \tilde V_N\right), \\
&P_2 := \left(F_N \mathrm{exp}\left(\mathrm{i}\Delta\tau D_N^{(1)}\right)\cos\left(\sqrt{2\Delta\tau D_N^{(2)}}\right)F_N^\dag \right)\left(\cos\left(\sqrt{2\Delta\tau \tilde V_N}\right)-\mathrm{exp}\left(-\Delta\tau \tilde V_N\right)\right),
\end{align*}
so that
\begin{align*}
\tilde E_{\text{appr}}\left(\Delta\tau; \ket{\psi_{(j)}}\right) 
\le \left\|P_1 \ket{\psi_{(j)}} \right\| + \left\|P_2 \ket{\psi_{(j)}} \right\|.
\end{align*}
Moreover, we have
\begin{align*}
&\quad F_N \mathrm{exp}\left(\mathrm{i}\Delta\tau D_N^{(1)}\right)\cos\left(\sqrt{2\Delta\tau D_N^{(2)}}\right)F_N^\dag -F_N \mathrm{exp}\left(\mathrm{i}\Delta\tau D_N^{(1)}\right) \mathrm{exp}\left(-\Delta\tau D_N^{(2)}\right)F_N^\dag \\
&= \left(F_N \mathrm{exp}\left(\mathrm{i}\Delta\tau D_N^{(1)}\right) F_N^\dag\right)\left(\cos\left(\sqrt{2\Delta\tau F_N D_N^{(2)}F_N^\dag}\right) - \mathrm{exp}\left(-\Delta\tau F_N D_N^{(2)}F_N^\dag\right)\right).
\end{align*} 
We let $K=1$, $\mathcal{H} = F_N D_N^{(2)}F_N^\dag$ and $\tilde V_N$, respectively, in Theorem \ref{thm:aap-err-esti}. Then, we have
\begin{align*}
&\left\|P_1 \ket{\psi_{(j)}} \right\| \le C_3(\Delta\tau)^2, \quad \left\|P_2 \ket{\psi_{(j)}} \right\| \le C_4(\Delta\tau)^2.
\end{align*}
According to \ref{prf:thm1}, $C_3$ depends on the state $\mathrm{exp}\left(-\Delta\tau \tilde V_N\right)\ket{\psi_{(j)}}$, and it is uniform regarding $N$ provided that the underlying function of the state is sufficiently smooth. On the other hand, $C_4$ depends only on the variation of $V$: $V_1-V_0$, where $V_1 := \max_{\mathbf{x}\in [0,L]^d} V(\mathbf{x}) \ge V_0$. 
To sum up, we find that 
\begin{align*}
\tilde E_{\text{tot}}(\Delta\tau; \ket{\psi_{(0)}},K) \le C_5 K(\Delta\tau)^2,
\end{align*}
for a constant $C_5>0$. By noting that $T=K\Delta\tau$, we have 
\begin{align*}
E_{\text{tot}}(\Delta\tau; \ket{\psi_{(0)}},T/\Delta\tau) 
\le \frac{2C_5 T\mathrm{exp}\left(-T V_0\right)}{\left\|\mathrm{exp}\left(-T H_N\right)\ket{\psi_{(0)}}\right\|}\Delta\tau = O(\Delta\tau).
\end{align*}

\section{Numerical details}
\label{appE}

In this part, we discuss further details on the error of our proposed method using numerical examples. 
According to the theoretical results in \ref{subsec:appB-2} and \ref{prf:tot-err-esti}, we find the total $\ell^2$-error can be divided into the following three parts:

\noindent -- Discretization error: This describes the difference between the solution to the original continuous problem and the exact solution to the discretized problem. Here, we have a grid parameter $N$, indicating the size of the matrix, and the error goes to zero as the grid parameter $N$ goes to infinity. 

\noindent -- Suzuki-Trotter error: This error comes from the Suzuki-Trotter formula when the two (or several) operators under consideration are not commutable. Here, we have a time step parameter $\Delta\tau$, and the error goes to zero as the time step $\Delta\tau$ goes to zero. 

\noindent -- Approximation error: This error describes the difference between the numerical solution using the exact PITE circuit and numerical solution using the proposed approximate PITE circuit. Here, we also have a time step parameter $\Delta\tau$, and the error goes to zero as the time step $\Delta\tau$ goes to zero. 

In the following context, we investigate the three types of errors regarding the grid parameter $N$ and the time step $\Delta\tau$, respectively. 

\subsection{Discretization error: numerical dependence on grid parameter}
\label{subsec:4-3-1}

We consider the 1D example in Sect.~\ref{subsec:4-1} and focus on the discretization error. To do this, we take ``the analytical solution" (more precisely, we cut off the infinite series in Eq.~\eqref{sec4:eq-AS} after $1000$ terms) as the reference solution. 
In the following plot of error, we compare the quantum solutions using the exact PITE operator to highlight the discretization error under different choices of $N=2^4,2^5,2^6$. 
\begin{figure}[htb]
\centering
\resizebox{9cm}{!}{
\includegraphics[keepaspectratio]{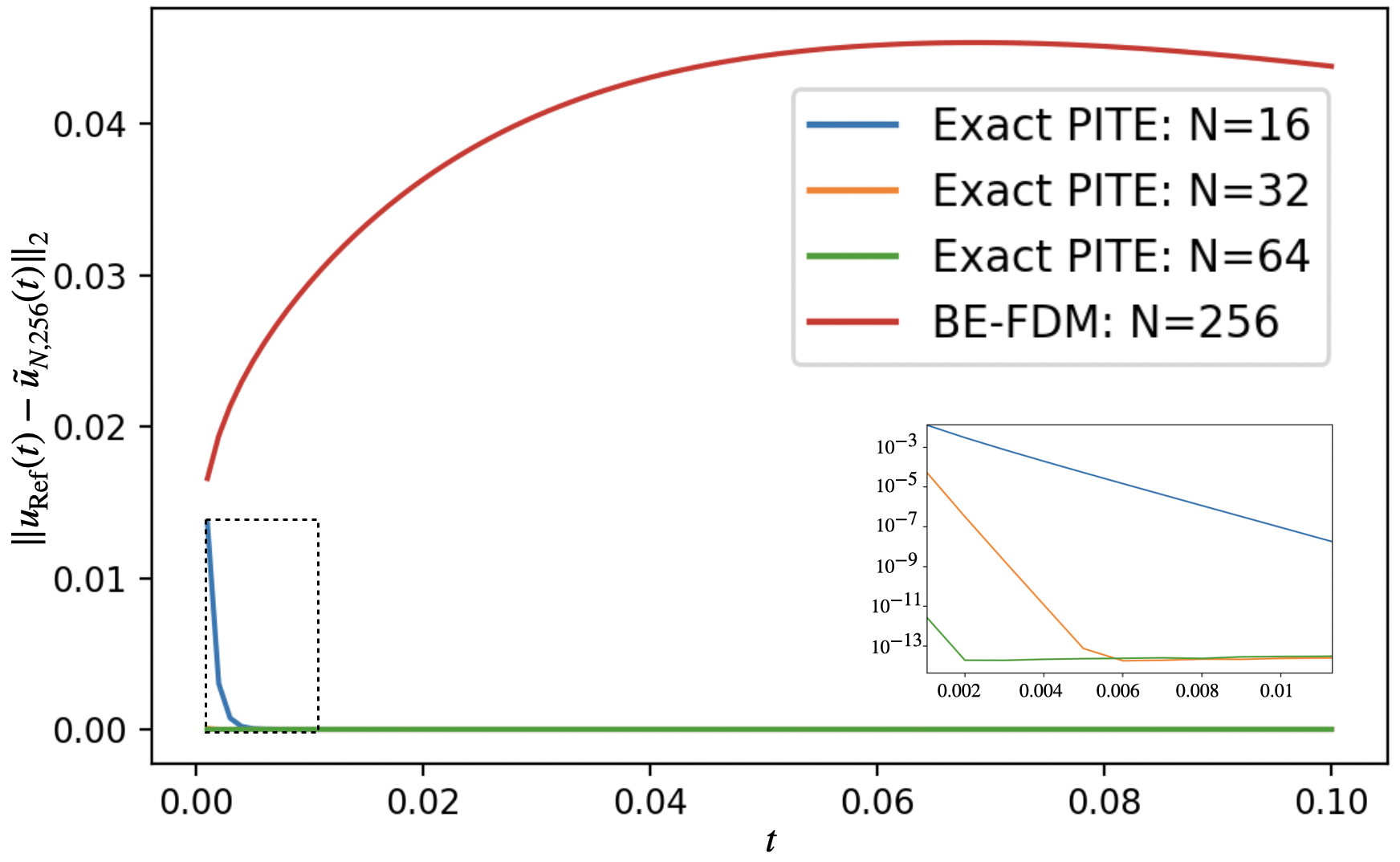}
}
\caption{(1D example) Plot of $\ell^2$-error between the quantum solutions and the reference solution. The blue, orange and green lines denotes the $(N, 2^8)$-quantum solutions using the exact PITE with $N=16,32,64$ and $\Delta\tau=10^{-3}$, respectively. 
The classical solutions by the backward Euler FDM with time step $\Delta\tau = t/10^{3}$ are provided in red. A zoom-in plot in log-scale is added to distinguish the green line from the orange line.} 
\label{appD:Fig2}
\end{figure}
According to Fig.~\ref{appD:Fig2}, we observe that the quantum solutions with small $N=16$ outperform the backward Euler FDM solution with a large $N=256$, which shows that the Fourier spectral approach has better precision than the FDM based approach regarding the grid parameter. 
Besides, in this example, even $4$ qubits ($N=2^4=16$) are enough to derive a sufficiently good solution with error bound $10^{-7}$ after $t\ge 0.01$, and it is clear from the zoom-in plot that the decay of the $\ell^2$-error regarding $N$ is exponentially fast.  

Thanks to the vanishing of the reaction term (i.e., potential $V$), the ITE keeps invariant in any $N$-dimensional spaces $\mathcal{V}_N$ spanned by the Fourier bases. In other words, $\mathrm{exp}\left(-t\mathcal{H}\right)\mathcal{V}_N \subset \mathcal{V}_N$ for any $t>0$ and $N\in \mathbb{N}$. Since the Fourier coefficients of the initial condition given by a sine function decay rapidly as the indices increase, only several Fourier coefficients are enough for deriving a good approximation of the initial condition. This explains why small $N$ is enough for the example in Sect.~\ref{subsec:4-1}. 
On the other hand, with the presence of the reaction term in $\mathcal{H}$, $\mathcal{V}_N$ is no longer invariant regarding $\mathrm{exp}\left(-t\mathcal{H}\right)$. 
This leads to a possible increase of the error as simulation time becomes larger (see e.g. Fig.~\ref{sec4:Fig3} in Sect.~\ref{subsec:4-2}).  

\subsection{Approximation and/or Suzuki-Trotter errors: numerical dependence on grid parameter}
\label{subsec:4-3-2}

Next, we discuss the $N$ dependence in the approximation and the Suzuki-Trotter errors. Recall Eq.~\eqref{sec2:eq-err2n} in \ref{prf:thm1}: 
$$
\text{Err} \le \left(4C_1(\ket{\psi},\delta)T\Delta\tau/3+4\sqrt{\delta}\right)/\|\mathrm{exp}\left(-T\mathcal{H}\right)\ket{\psi}\|,
$$
where 
$$
C_1(\ket{\psi},\delta) = \left(\sum_{k\in \mathcal{K}_\delta}\lambda_k^4|\langle \psi_k|\psi\rangle |^2\right)^{1/2}. 
$$
By the definition, $C_1(\ket{\psi},\delta)$ may depend also on $N$. As we mentioned, it can increase as $N$ becomes larger, and its upper bound is $\lambda_{N-1}^2=O(d^2 N^{4/d})$ in the worst scenario. 
On the other hand, for suitably given initial state $\ket{\psi}$ and an error tolerance $\delta>0$, the $N$ dependence in the approximation error could be relieved to some extent. 
In Fig.~\ref{appD:Fig6n}, the approximation error regarding $N$ is demonstrated under the 1D example without potential in Sect.~\ref{subsec:4-1} and another 1D example with the piecewise constant potential function \eqref{sec4:eq-V} discussed in Sect.~\ref{subsec:5-3}, where the initial states are both sine functions. Here, we take the reference solution as the one calculated by the exact matrix multiplication (i.e., in a similar fashion of the one in Sect.~\ref{subsec:4-2}) to address the errors beyond the discretization error. Our numerical plots imply the $N$-uniform bounds in the case of a good initial state, which means to achieve a given error bound $\delta>0$, we do not need to choose $\Delta\tau\to 0$ as $N\to \infty$. 
Fig.~\ref{appD:Fig6n} illustrates that for fixed time step $\Delta\tau$, the $\ell^2$-error seems to converge as $N$ increases. This gives evidence that the required number of PITE steps $K=t/\Delta\tau$ is uniformly bounded with respect to $N$, and it mainly depends on the initial state and the desired error bound.
Since the gate complexity of the AAPITE circuit is proportional to $K$, Fig.~\ref{appD:Fig6n} hence implies the exponential speedup regarding the grid parameter $N$ in the quantum solving step, compared to the classical calculations using the matrix multiplications or the optimizations (e.g. CG method). 
\begin{figure}[htb]
\centering
\resizebox{15cm}{!}{
\includegraphics[keepaspectratio]{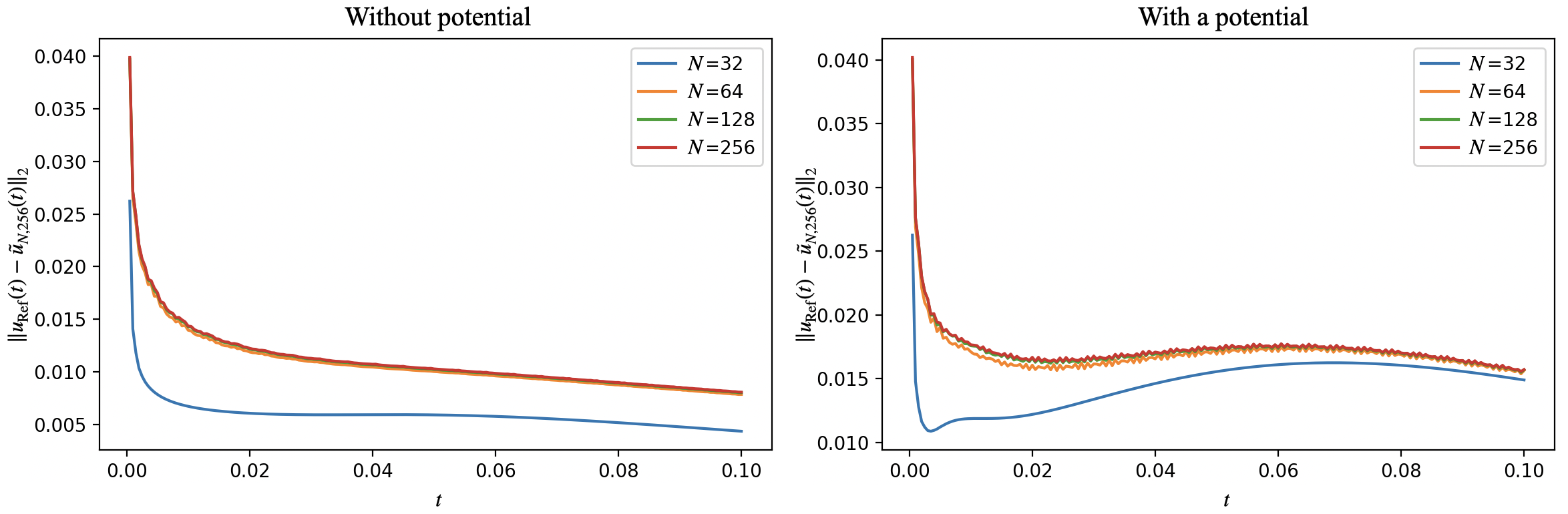}
}
\caption{(1D examples) Plot of $\ell^2$-errors between the $(N,2^8)$-quantum solutions and the reference solution with different $N=32,64,128,256$. Here, we take the same time step $\Delta\tau=0.0005$ in all subplots. The left subplot is for a 1D case without any potential, while the right subplot is for a similar 1D case with an absorption potential in the center part of the domain. }
\label{appD:Fig6n}
\end{figure}
We give a remark on the previous quantum speedup statement. 
Following the LCU-based QLSA in \cite{Childs.2017} using the (best known) Hamiltonian simulation \cite{BCK15, BCC.15} or the QSVT-based QLSA in \cite{Krovi2023} using QSVT \cite{Gilyen.2019}, one can derive an at most polynomial speedup due to the polynomial dependence on $N$ in the condition numbers for the discretized matrices of many PDEs. Although \cite{Clader.2013} suggested a preconditioning technique using SPAI preconditioner to reduce the condition number, there is no guarantee that it works for all discretized matrices. While \cite{Montanaro.2016} showed that exponential speedup is possible only for the quantum solving step provided that the preconditioner works and reduces the condition number to $O(1)$, the exponential speedup in $N$ is not clarified for the existing QLSAs. 
In this work, we confirmed the exponential speedup in $N$ for the advection-diffusion-reaction equation using the AAPITE based on a special FSM for the discretization. Besides, such an exponential speedup in $N$ was confirmed for the constant coefficient hyperbolic equations using another quantum algorithm in \cite{Sato.2024} without discussion of the total error. 

\subsection{Approximation error: numerical dependence on time step}
\label{subsec:4-3-3}

According to the theoretical analysis in Sect.~\ref{sec:3}, we find the overhead of the $\ell^2$-error between the quantum solution and the reference solution (truncated solution) is linear in the time step. Here, the reference solution is chosen as the exact ITE for the $N$-discretized Hamiltonian $H_N$ of the initial state (by the matrix multiplication). In particular, if the potential $V$ vanishes, then the reference solution is the $N$-truncated (analytical) solution which is the projection of the true solution to an $N$-dimensional subspace $\mathcal{V}_N$ spanned by $N$ eigenvectors of the Laplace operator. 

In the numerical example in Sect.~\ref{subsec:4-1}, we plotted the evolution of the $\ell^2$-error for different time steps: $\Delta\tau=0.002,0.001,0.0005$. Here, we provide another plot of the error regarding $\Delta\tau$ at $4$ time points $t=0.002,0.01,0.05,0.1$ in Fig.~\ref{appD:Fig5}.  
\begin{figure}[htb]
\centering
\resizebox{8cm}{!}{
\includegraphics[keepaspectratio]{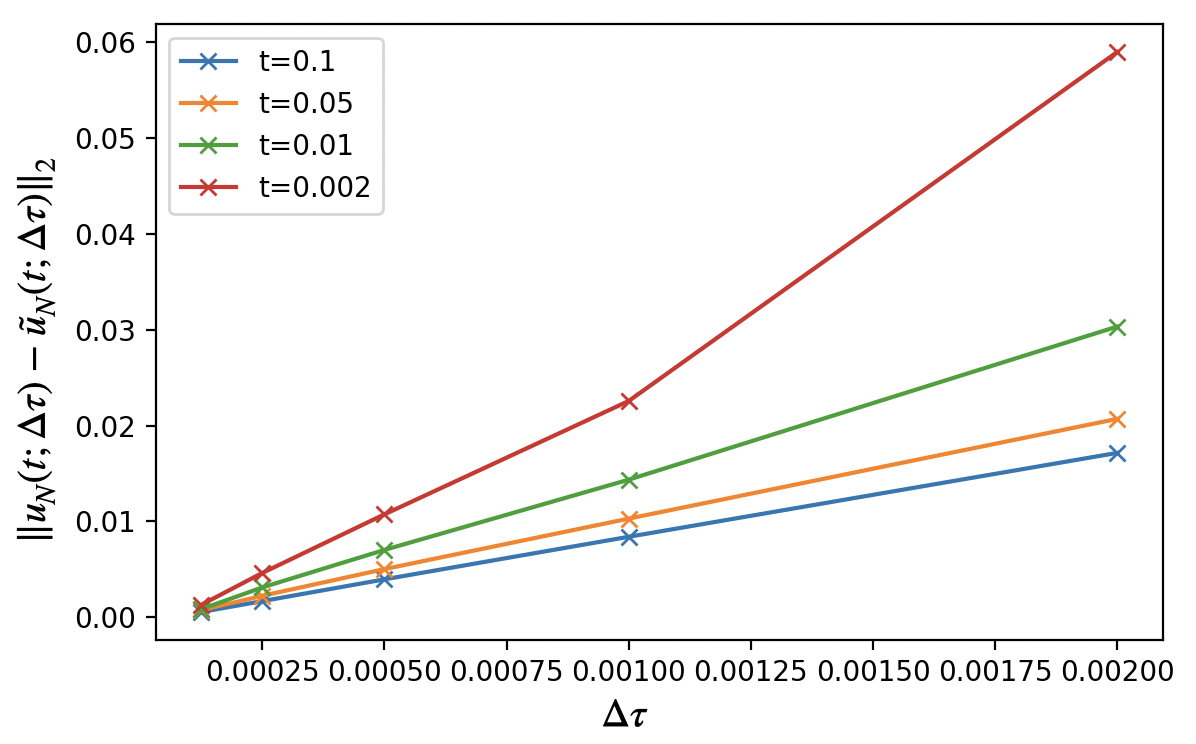}
}
\caption{(1D example) Plot of $\ell^2$-errors between the $2^6$-quantum solutions and the reference solution regarding time steps $\Delta\tau=0.000125,0.00025,0.0005,0.001,0.002$. The results at $4$ different time points $t=0.002,0.01,0.05,0.1$ are shown in different colors. }
\label{appD:Fig5}
\end{figure}
We find that for sufficiently large simulation time $t$ ($t=0.05,0.1$) or sufficiently small time step $\Delta\tau$ ($\Delta\tau\le 0.001$), the linear dependence on the time step is confirmed numerically, which coincides with the theoretical overhead. 

Next, we confirm the numerical dependence on time step in the more involved 2D example in Sect.~\ref{subsec:4-2}. Since the potential $V$ does not vanish in this example, we should split the error into the Suzuki-Trotter error and the approximation error. 
To obtain the approximation error, we introduce an intermediate solution by the quantum circuit in Fig.~\ref{sec3:Fig1}, replacing the approximate PITE circuit by the exact PITE circuit, as the reference solution. The $\ell^2$-errors between the $(2^4,2^7)$-quantum solutions and the reference solution are shown in Fig.~\ref{appD:Fig6}. 
\begin{figure}[htb]
\centering
\resizebox{8cm}{!}{
\includegraphics[keepaspectratio]{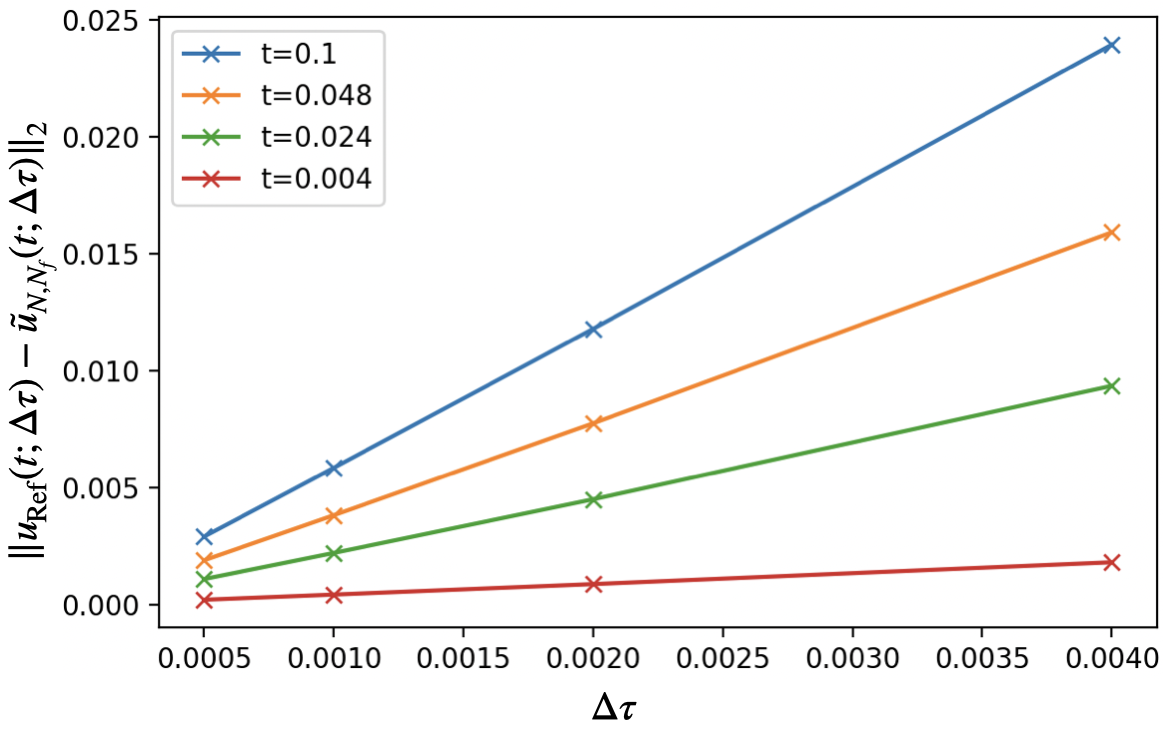}
}
\caption{(2D example) Plot of $\ell^2$-errors between the $(2^4,2^7)$-quantum solutions and the reference solution regarding time steps $\Delta\tau=0.0005,0.001,0.002,0.004$. The results at $4$ different time points $t=0.004,0.024,0.048,0.1$ are shown in different colors. }
\label{appD:Fig6}
\end{figure}
A linear dependence on $\Delta\tau$ is again confirmed, but the slope becomes larger for larger simulation time $t$, which is opposite to the 1D example in Fig.~\ref{appD:Fig5}. According to our comment in the last paragraph of \ref{subsec:4-3-1}, this owes to the presence of a function potential $V$. Without providing the details, we also confirmed another 1D example with a quadratic function potential that has a peak at the center of the domain, and the result shows that the $\ell^2$-error decreases at first and then keeps increasing. This implies that the approximation error depends on $\Delta\tau$ linearly, but its dependence on $t$ is not clear, which is influenced by the initial condition and the form of the potential as shown in Figs.~\ref{sec4:Fig1},\ref{sec4:Fig3}. 

\subsection{Suzuki-Trotter error: numerical dependence on time step}
\label{subsec:4-3-4}

Finally, we check the Suzuki-Trotter error numerically, and pay attention to its dependence on the time step. In the 2D example in Sect.~\ref{subsec:4-2}, we choose the reference solution as the exact ITE for the $N$-discretized Hamiltonian $H_N$ of the initial condition, which is calculated by the matrix multiplication. 
Here, we consider the intermediate solution described in \ref{subsec:4-3-3} that has no approximation error as the quantum solution. More precisely, the intermediate solution is derived by the quantum circuit in Fig.~\ref{sec3:Fig1}, where we substitute the RTE operators: 
$$
U_{\text{RTE}}\left(\pm \sqrt{2\Delta\tau D_N^{(2)}}\right) \quad \text{and}\quad U_{\text{RTE}}\left(\pm \sqrt{2\Delta\tau (V_N-V_0 I))}\right)
$$ 
with 
$$
U_{\text{RTE}}\left(\pm \arccos\left(\mathrm{exp}\left(-\Delta\tau D_N^{(2)}\right)\right)\right) \quad \text{and}\quad U_{\text{RTE}}\left(\pm \arccos\left(\mathrm{exp}\left(-\Delta\tau (V_N-V_0 I)\right)\right)\right),
$$
respectively. In other words, we consider the exact PITE circuit. Since $D_N^{(2)}$ and $V_N-V_0 I$ are both diagonal matrices, for the above RTE operators, we can use the quantum circuit in \cite{Zhang.2024} for the exact implementation. With a first-order Suzuki-Trotter formula:
\begin{align*}
\mathrm{exp}\left(-T H_N\right) &\approx \mathrm{exp}\left(-V_0 T\right)\bigg(F_N \mathrm{exp}\left(\mathrm{i}\Delta\tau D_N^{(1)}\right) \mathrm{exp}\left(-\Delta\tau D_N^{(2)}\right)F_N^\dag \\
&\quad \mathrm{exp}\left(-\Delta\tau (V_N-V_0 I)\right) \bigg)^K, 
\end{align*}
or a second-order Suzuki-Trotter formula: 
\begin{align*}
\mathrm{exp}\left(-T H_N\right) &\approx \mathrm{exp}\left(-V_0 T\right) \bigg(F_N \mathrm{exp}\left(\mathrm{i}\Delta\tau D_N^{(1)}/2\right) \mathrm{exp}\left(-\Delta\tau D_N^{(2)}/2\right) F_N^\dag \\
&\quad \mathrm{exp}\left(-\Delta\tau (V_N-V_0 I)\right) F_N \mathrm{exp}\left(\mathrm{i}\Delta\tau D_N^{(1)}/2\right) \mathrm{exp}\left(-\Delta\tau D_N^{(2)}/2\right) F_N^\dag\bigg)^K, 
\end{align*}
where $V_0, D_N^{(1)}, D_N^{(2)}$, and etc. are defined in Sect.~\ref{subsec:3-1}, we plot the $\ell^2$-errors between the quantum solutions and the reference solution regarding $\Delta\tau$ or $(\Delta\tau)^2$ in Fig.~\ref{appD:Fig7}. 
\begin{figure}[htb]
\centering
\resizebox{15cm}{!}{
\includegraphics[keepaspectratio]{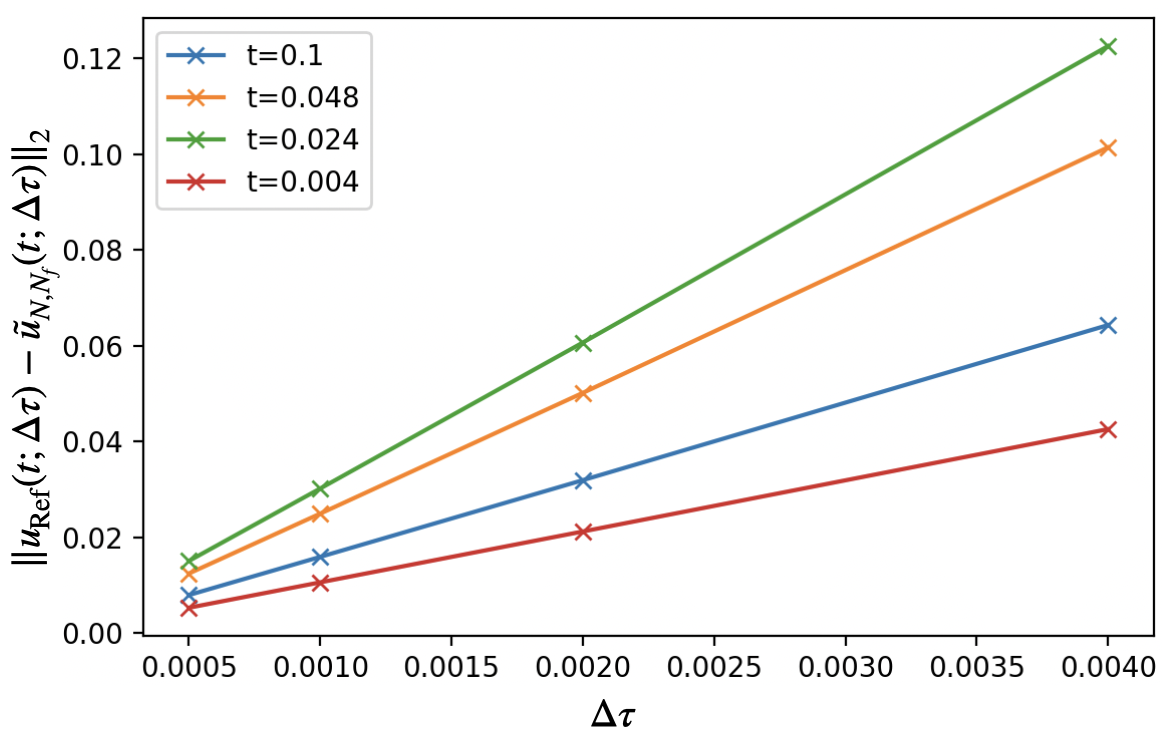}
\includegraphics[keepaspectratio]{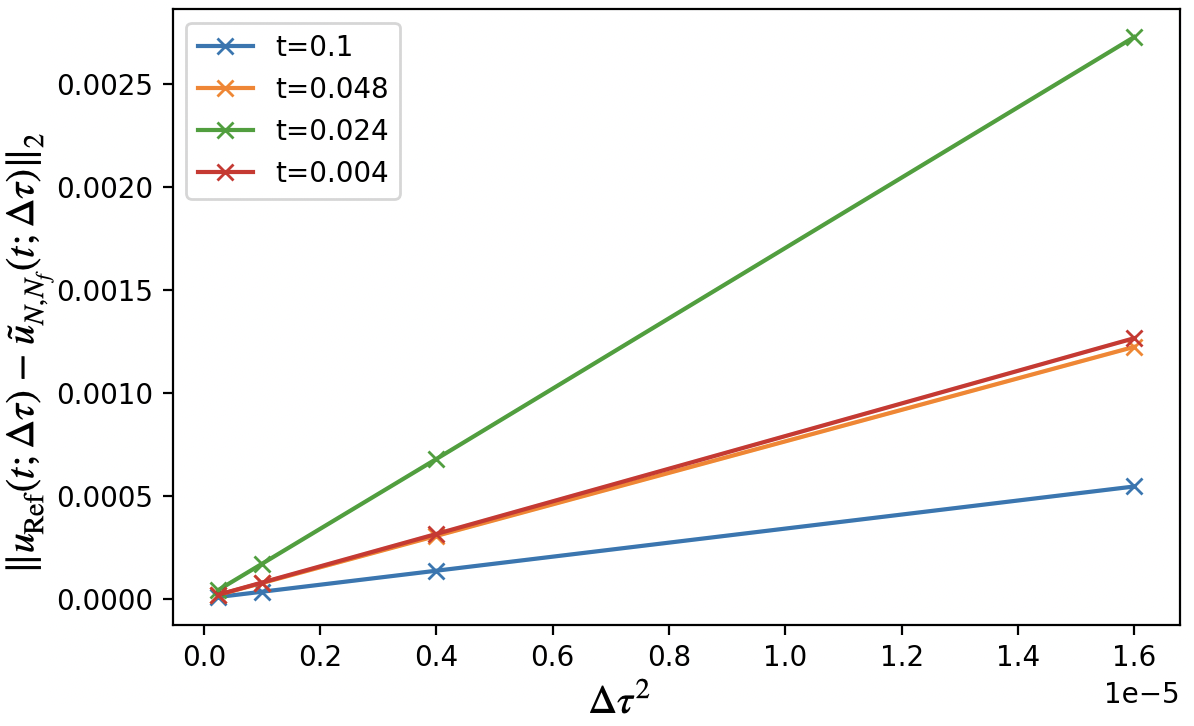}
}
\caption{(2D example) Plot of $\ell^2$-errors between the exact $(2^4,2^7)$-quantum solutions and the reference solution regarding time steps $\Delta\tau=0.0005,0.001,0.002,0.004$. In the left subplot, a first-order Suzuki-Trotter formula is employed to derive the quantum solutions, while in the right subplot, a second-order Suzuki-Trotter formula is employed, and the $x$-axis is set to be $(\Delta\tau)^2$. 
The results at $4$ different time points $t=0.004,0.024,0.048,0.1$ are shown in different colors. }
\label{appD:Fig7}
\end{figure}
For fixed simulation time $t$, we find linear dependence and quadratic dependence on $\Delta\tau$ for the cases using a first-order Suzuki-Trotter formula and a second-order Suzuki-Trotter formula, respectively. This coincides with the theoretical overheads discussed in \ref{prf:tot-err-esti} and \cite{Jahnke.2000}, and it seems that the theoretical linear/quadratic dependence on $\Delta\tau$ is not only an overhead but also a sharp estimate. 
Moreover, we find by focusing on the slopes that the Suzuki-Trotter errors are relatively small for both a small time and a large time. 
In Fig.~\ref{appD:Fig7} for the current 2D example, the Suzuki-Trotter errors are most significant at $t=0.024$, when the initial point source completely enters the zone of absorption (see Fig.~\ref{sec4:Fig2}).

\section{Details on comparison}
\label{appF}

\subsection{Higher order AAPITE}
\label{subsec:F-1}

We introduce the higher order AAPITE operators by keeping more terms in the Taylor expansion of $\arccos(\mathrm{exp}\left(-x^2\right))$ near $x=0+$. For example, the second-order alternative PITE operator is given by 
\begin{equation}
\label{appD:eq-2oPITE}
\cos\left(\sqrt{2\Delta\tau \mathcal{H}}-\sqrt{2\Delta\tau^3\mathcal{H}^3}/6\right) = \mathrm{exp}\left(-\Delta\tau \mathcal{H}\right) + O(\Delta\tau^3),
\end{equation}
and the fourth-order one is 
\begin{equation}
\label{appD:eq-4oPITE}
\cos\left(\sqrt{2\Delta\tau \mathcal{H}}-\sqrt{2\Delta\tau^3\mathcal{H}^3}/6+\sqrt{2\Delta\tau^5\mathcal{H}^5}/120+\sqrt{2\Delta\tau^7\mathcal{H}^7}/336\right) = \mathrm{exp}\left(-\Delta\tau \mathcal{H}\right) + O(\Delta\tau^5).
\end{equation}
The above coefficients come from the Taylor expansion:
$$
\arccos(\mathrm{exp}\left(-x^2\right)) = \sqrt{2}x - \frac{\sqrt{2}}{6}x^3 + \frac{\sqrt{2}}{120}x^5 + \frac{\sqrt{2}}{336}x^7 + \cdots.
$$

\subsection{Comparison to time-sequential QLSA}
\label{subsec:F-2}
 
In this part, we discuss the essential difference between the approximate PITE algorithms, including the original approximate PITE, and the time-sequential QLSA, mainly focusing on the error of the time discretization (or approximation). We consider the spatial discretized equation:
\begin{equation}
\label{appF:eq-gov}
\partial_t \mathbf{v}(t) = -H_N \mathbf{v}(t), \quad t>0,
\end{equation}
where $H_N$ is the discretized matrix in \eqref{sec3:eq-HN}. Then, we apply the backward Euler method in time and obtain the equations for the approximate solution $\mathbf{\tilde v}$: 
$$
\frac{\mathbf{\tilde v}(m\Delta\tau)-\mathbf{\tilde v}((m-1)\Delta\tau)}{\Delta\tau} = -H_N \mathbf{\tilde v}(m\Delta\tau), \quad m=1,2,\ldots,
$$
which yields the updated operation in each time step:
$$
\mathbf{\tilde v}(m\Delta\tau) = (I + \Delta\tau H_N)^{-1} \mathbf{\tilde v}((m-1)\Delta\tau), \quad m=1,2,\ldots.
$$
We call such a multistep method the time-sequential QLSA for the time evolution equations, where a QLSA is used to implement the inverse operator in each time step. 
On the other hand, the AAPITE solves
$$
\mathbf{\hat v}(m\Delta\tau) = \cos\left(\sqrt{2\Delta\tau H_N}\right) \mathbf{\hat v}((m-1)\Delta\tau), \quad m=1,2,\ldots,
$$
in each time step, instead. Recalling that the exact solution to Eq.~\eqref{appF:eq-gov} satisfies
$$
\mathbf{v}(m\Delta\tau) = \mathrm{exp}\left(-\Delta\tau H_N\right) \mathbf{v}((m-1)\Delta\tau), \quad m=1,2,\ldots,
$$
the theoretical difference between the time-sequential QLSA and the AAPITE algorithm lies in the different ways of approximation to the ITE operator in each time step. 
As a result, to compare their performance with small time step $\Delta\tau$, we only need to check the first several Taylor coefficients of the underlying functions: $g_{\text{exa}}(y)=\mathrm{exp}\left(-y\right)$ (exact), $g_{\text{QLSA}}(y)=1/(1+y)$ (QLSA), and $g_{\text{AAP}}(y)=\cos(\sqrt{2y})$ (AAPITE), respectively. We can also discuss the original approximate PITE whose underlying function is 
$$
g_{\text{OAP}}(y) = \frac{1}{m_0}\cos\left(\arccos m_0 + \frac{m_0}{\sqrt{1-m_0^2}}y\right), \quad m_0\in (0,1).
$$
By a direct calculation, we obtain
\begin{align*}
& g_{\text{exa}}(y) = 1 - y + \frac{1}{2} \mathrm{exp}\left(-\xi_0\right)y^2, \quad \xi_0\in [0,y], \\
& g_{\text{QLSA}}(y) = 1 - y + \frac{1}{(1+\xi_1)^3}y^2, \quad \xi_1\in [0,y], \\
& g_{\text{AAP}}(y) = 1 - y + \frac{\sin\sqrt{2\xi_2}-\sqrt{2\xi_2}\cos\sqrt{2\xi_2}}{2(\sqrt{2\xi_2})^3}y^2, \quad \xi_2\in [0,y], \\
& g_{\text{OAP}}(y) = 1 - y - \frac{m_0}{2(1-m_0^2)}\cos\left(\arccos m_0 + \frac{m_0}{\sqrt{1-m_0^2}}\xi_3\right)y^2, \quad \xi_3\in [0,y].
\end{align*}
Moreover, we have
$$
\lim_{\xi_0\to 0} \frac{1}{2}\mathrm{exp}\left(-\xi_0\right) = \frac{1}{2}, \ 
\lim_{\xi_1\to 0} \frac{1}{(1+\xi_1)^3} = 1, \
\lim_{\xi_2\to 0+} \frac{\sin\sqrt{2\xi_2}-\sqrt{2\xi_2}\cos\sqrt{2\xi_2}}{2(\sqrt{2\xi_2})^3} = \frac{1}{6}, 
$$
and 
$$
\lim_{\xi_3\to 0} -\frac{m_0}{2(1-m_0^2)}\cos\left(\arccos m_0 + \frac{m_0}{\sqrt{1-m_0^2}}\xi_3\right) = -\frac{m_0^2}{2(1-m_0^2)}\in (-\infty,0), 
$$
for $m_0\in (0,1)$. This implies that $g_{\text{QLSA}}$, $g_{\text{AAP}}$, and $g_{\text{OAP}}$ are all the first-order approximations of $g_{\text{exa}}$ near the origin $y=0$. Therefore, we find that
\begin{itemize}
\item The QLSA, the algorithms using the original approximate PITE and the AAPITE have comparable orders of precision regarding the time step $\Delta\tau$. 

\item Compared to the QLSA, the AAPITE algorithm has slightly better precision for sufficiently small $\Delta\tau$ since the second-order coefficient is closer to the exact one (i.e., $|1/6-1/2|<|1-1/2|$). 

\item The second-order coefficient of the original approximate PITE is far away from the desired value $1/2$, e.g. the limit is $-81/38$ for $m_0=0.9$ (the sign itself is opposite). Thus, we need to take much smaller time step $\Delta\tau$ than those for the QLSA and the AAPITE algorithm to derive a comparable precision. 

\item One needs a shift of eigenvalues for the AAPITE algorithm if the smallest eigenvalue of $H_N$ is negative, but the QLSA as well as the original approximate PITE is valid for any Hermitian $H_N$ provided that we take a sufficiently small $\Delta\tau$.
\end{itemize}
Furthermore, we plot the underlying functions of the different approximations of the ITE operator to provide an intuitive comparison of the above mentioned methods. Under a relatively large scale $[0,1]$ and a relatively small scale $[0,0.01]$, Fig.~\ref{appF:Fig1} shows that the errors of the different methods do not change a lot if the time step $\Delta\tau$ is sufficiently close to $0$, while the higher order AAPITE operators give good approximation even for relatively large $\Delta\tau$. 
\begin{figure}[htb]
\centering
\resizebox{15cm}{!}{
\includegraphics[keepaspectratio]{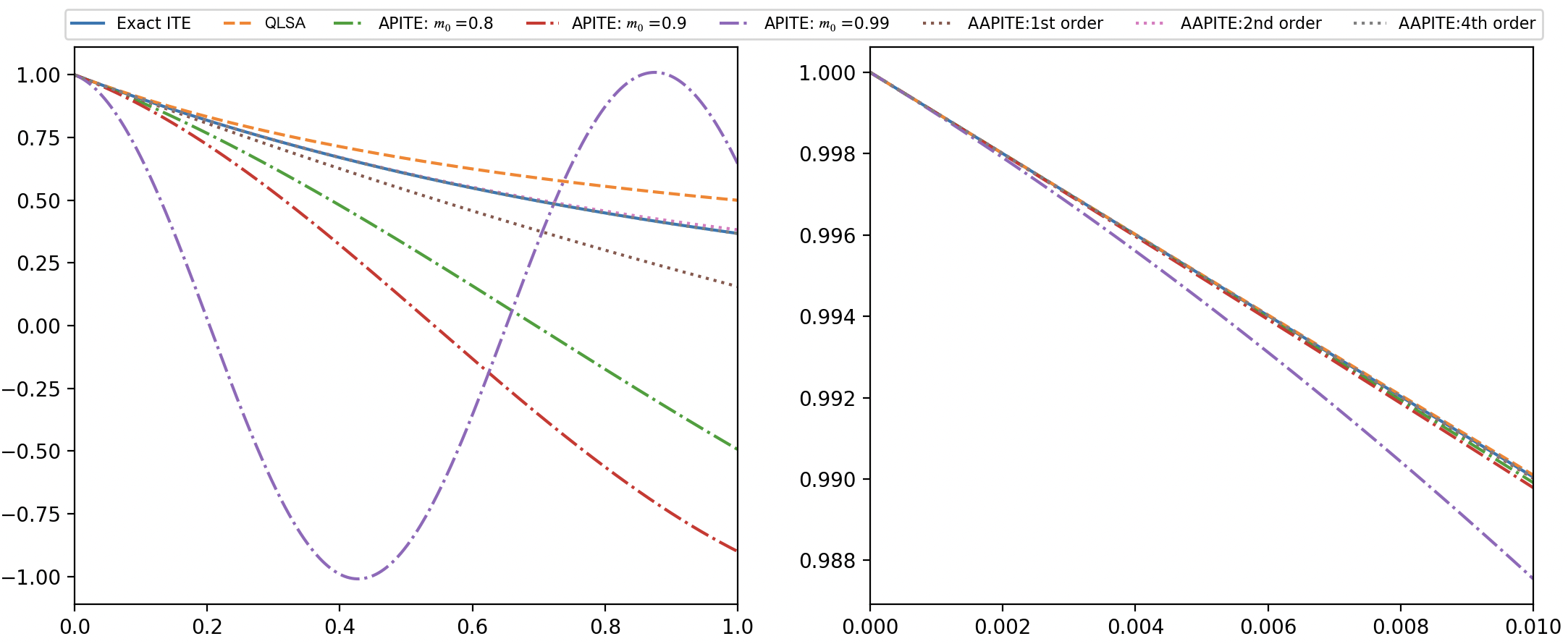}
}
\caption{Plots of the underlying functions of different approximations. The $x$-axis is the variable and the $y$-axis denotes the underlying functions of different approximations. The functions in a relatively large interval $[0,1]$ are plotted in the left subplot, while those near $x=0$ are plotted in the right subplot. }
\label{appF:Fig1}
\end{figure}
Since the gate complexity of the quantum algorithm is proportional to $T/\Delta\tau$ for a given simulation time $T$, the higher order AAPITE operators can be more efficient if their implementation does not require substantially additional gate complexity than the first-order one. 
In practice, we conjecture that there is an optimal order to derive the optimized gate count/circuit depth because of the trade-off between the improvement of the order and the increase of the prefactor. 

Taking into account the vanishing of the success probability of the original approximate PITE, and noting that we can also develop higher order time-sequential QLSA by substituting the backward Euler method with some advanced ones, we conclude:

\noindent (a) The AAPITE algorithm is usually more efficient than the original approximate PITE algorithm as a better approximation of the ITE operator. An exceptional case is that the RTE operator for $\sqrt{H_N}$ is much more expensive than the RTE operator for $H_N$ itself, and simultaneously, the simulation time $T$ is small with a relatively large error bound so that we need only limited PITE steps.   

\noindent (b) The AAPITE algorithm has a comparable (slightly better) precision than the time-sequential QLSA for a fixed time step $\Delta\tau$. 
To determine which algorithm should be used, we have to compare the detailed gate complexity/circuit depth for the basic oracles: $\mathrm{exp}\left(\pm \mathrm{i}\sqrt{2\Delta\tau H_N}\right)$ for the AAPITE and $(I+\Delta\tau H_N)^{-1}$ for the QLSA. This depends on both the structure of $H_N$ and the strategy used to realize the inverse operator (e.g. QPE-based \cite{HHL09}, LCU-based \cite{Childs.2017}, QSVT-based \cite{Gilyen.2019}, etc.). 

\noindent (c) The higher order AAPITE or the higher order time-sequential QLSA can be better than the first-order ones. For fixed matrix $H_N$, simulation time $T$, and desired error bound $\varepsilon$, there exists an optimal order for either the AAPITE algorithm or the time-sequential QLSA so that the gate complexity/circuit depth is optimized. 

\newpage

\end{document}